\definecolor{ba.yellow}{RGB}{252,190,18}
\definecolor{ba.gray}{RGB}{153,153,156}
\definecolor{ba.blue}{RGB}{6,123,164}
\definecolor{ba.red}{RGB}{213,96,98}
\definecolor{ba.orange}{RGB}{233,116,81}
\definecolor{ba.pine}{RGB}{67,154,134}
\definecolor{ba.green}{RGB}{0, 168, 107}
\definecolor{ba.violet}{RGB}{88, 53, 94}
\newcommand{\annotate}[3]{\hspace{-#1}\vbox to 0pt{\llap{\raisebox{-#2}{\tikz{\draw[semithick, color=ba.red, |->] (0,0) to (0,-.5) node[below] {#3};} \ }}}\hspace{#1}}
\newtheorem{theorem}{Theorem}[section]
\newtheorem{corollary}[theorem]{Corollary}
\newtheorem{lemma}[theorem]{Lemma}
\theoremstyle{plain}
\newtheorem{example}[theorem]{Example}
\newtheorem{claim}{Claim}
\newtheorem{observation}{Observation}
\newtheorem*{claim*}{Claim}
\newtheorem*{corollary*}{Corollary}
\theoremstyle{definition}
\newtheorem*{notation*}{Notation}
\newtheorem{reductionrule}{Rule}
\newtheorem{branchingrule}{Branching Rule}
\newcommand{\edgenumber}{\ensuremath{(2d^{d}2^{2^d}\alpha)^{2^d-1}}}
\DeclareMathOperator{\clause}{clauses}
\DeclareMathOperator{\var}{vars}
\DeclareMathOperator{\link}{link}
\newcommand\Para{\mathrm{para\text-}}
\newcommand\Class[1]{\mathchoice{\text{\normalfont\small$\mathrm{#1}$}}{\text{\normalfont\small$\mathrm{#1}$}}{\text{\normalfont$\mathrm{#1}$}}{\text{\normalfont$\mathrm{#1}$}}}       
\newcommand{\Lang}[1]{\ifmmode{\text{\normalfont\textsc{#1}}}\else\normalfont\textsc{#1}\fi}
\newcommand{\PLang}[1][]{\mathrm p\def\test{#1}\ifx\test\stockhustantauempty\else_{#1}\fi\text-\penalty15\Lang}
\theoremstyle{plain}
\newtheorem{problem}[theorem]{Problem}
\newenvironment{parameterizedproblem}%
{%
  \leavevmode\nobreak\par
  \begin{list}%
    {}%
    {%
      \def\labelstyle{\itshape}
      \setlength{\topsep}{0pt}%
      \settowidth{\labelwidth}{\labelstyle Parameter:}%
      \setlength{\leftmargin}{\labelwidth}%
      \addtolength{\leftmargin}{\labelsep}%
      \setlength{\itemsep}{0pt}%
      \setlength{\parsep}{0pt}%
    }%
      \def\instance{\item[\labelstyle Instance:]}%
      \def\parameter{\item[\labelstyle Parameter:]}%
      \def\question{\item[\labelstyle Question:]}%
    }%
    {%
    \end{list}%
}
\tikzset{
  vertex/.style = {
    draw,
    fill,
    semithick,
    circle,
    inner sep     = 0pt,
    minimum width = 1.5mm,
  },
  mirror/.style = {
    vertex,      
    minimum width = 0.75mm,
    draw = ba.violet,
    fill = ba.violet!50
  },
  edge/.style = {
    semithick
  },
  mirroredge/.style = {
    semithick
  },
}
\def\nodeset#1#2{
  \node[vertex, label=$a^{\scalebox{.6}{#2}}_{\scalebox{.6}{#1}}$] (a) at (0,0)  {};
  \node[vertex, label=$b^{\scalebox{.6}{#2}}_{\scalebox{.6}{#1}}$] (b) at (0,-1) {};
  \node[vertex, label=$c^{\scalebox{.6}{#2}}_{\scalebox{.6}{#1}}$] (c) at (1,0)  {};
  \node[vertex, label=$d^{\scalebox{.6}{#2}}_{\scalebox{.6}{#1}}$] (d) at (2,0)  {};
  \node[vertex, label=$e^{\scalebox{.6}{#2}}_{\scalebox{.6}{#1}}$] (e) at (2,-1) {};
  \node[vertex, label=$f^{\scalebox{.6}{#2}}_{\scalebox{.6}{#1}}$] (f) at (3,-1) {};
  \node[vertex, label=$g^{\scalebox{.6}{#2}}_{\scalebox{.6}{#1}}$] (g) at (3,0)  {}; 
}
\def\edgeset{
  \graph[use existing nodes, edges = {edge}]{
    a -- b -- c -- d -- e -- f -- g -- d;
    a -- c;
  };
}
\def\mirrors{
  \node[mirror] (a') at ($(a)+(0.25,0.25)$) {};
  \node[mirror] (b') at ($(b)+(0.125,0.25)$) {};
  \node[mirror] (c') at ($(c)+(0.25,0.25)$) {};
  \node[mirror] (d') at ($(d)+(0.25,0.25)$) {};
  \node[mirror] (e') at ($(e)+(0.25,0.25)$) {};
  \node[mirror] (f') at ($(f)+(0.25,0.25)$) {};
  \node[mirror] (g') at ($(g)+(0.25,0.25)$) {};
  \graph[use existing nodes, edges = {mirroredge}]{
    a' -- a;
    b' -- b;
    c' -- c;            
    d' -- d;
    e' -- e;
    f' -- f;
    g' -- g;            
  };
}
\title{MaxSAT with Absolute Value Functions:\\ A Parameterized Perspective}
\author{%
  Max Bannach%
  \thanks{Institute for Theoretical Computer Science, Universit{\"a}t zu L{\"u}beck, Germany (\href{mailto:bannach@tcs.uni-luebeck.de}{\texttt{bannach@tcs.uni-luebeck.de}})}%
  \and%
  Pamela Fleischmann%
  \thanks{Department of Computer Science, Kiel University, Germany (\href{mailto:fpa@informatik.uni-kiel.de}{\texttt{fpa@informatik.uni-kiel.de}})}%
  \and%
  Malte Skambath%
  \thanks{Department of Computer Science, Kiel University, Germany (\href{mailto:malte.skambath@email.uni-kiel.de}{\texttt{malte.skambath@email.uni-kiel.de}})}%
}
\date{\today}
\begin{document}

\maketitle

\begin{abstract}
The natural generalization of the Boolean satisfiability problem to
optimization problems is the task of determining the maximum number of
clauses that can simultaneously be satisfied in a propositional
formula in conjunctive normal form. In the \emph{weighted maximum satisfiability problem}
each clause has a positive weight and 
one seeks an assignment of maximum weight. 
The literature almost solely considers the case of
\emph{positive} weights. While the general case of the problem is only
restricted slightly by this constraint, many special cases become
trivial in the absence of \emph{negative} weights. In this work we 
study the problem \emph{with} negative weights and observe that the
problem becomes computationally harder~--~which we formalize from a parameterized
perspective in the sense that various variations of the problem become $\Class{W[1]}$-hard if
negative weights are present.

Allowing negative weights also introduces new variants of the
problem: Instead of maximizing the sum of weights of satisfied
clauses, we can maximize the \emph{absolute value} of that sum. This
turns out to be surprisingly expressive even restricted to \emph{monotone} formulas in
\emph{disjunctive normal form} with at most \emph{two literals} per
clause. In contrast to the versions without the absolute value,
however, we prove that these variants are \emph{fixed-parameter tractable}.
As technical contribution we present a kernelization for an auxiliary problem on hypergraphs
in which we seek, given an edge-weighted hypergraph, an induced
subgraph that maximizes the absolute value of the sum of edge-weights. 
\end{abstract}

\clearpage

\section{Introduction}

Maximum satisfiability is the natural generalization of the
satisfiability problem of propositional logic to optimization
problems. In its decision version $\Lang{max-sat}$, we seek a truth
assignment for a given propositional formula in conjunctive normal
form that satisfies at least $\alpha\in\mathbb{N}$ clauses.  This
problem is known to be $\Class{NP}$-complete even in the restricted
case that every clause contains at most two
literals~\cite{GareyJS76}. On the positive side, it is known that the
problem can be decided in time $f(\alpha)\cdot n^{c}$ for some
function~$f\colon\mathbb{N}\to\mathbb{N}$ and constant
$c\in\mathbb{N}$, which means that the problem is
\emph{fixed-parameter tractable} parameterized by~$\alpha$~\cite{MahajanR99}. This
also holds for the weighted version in which clauses have positive
integer weights. In this paper we extend the previous scenario by allowing also \emph{negative weights}. We
denote the corresponding problem with $\Lang{max-cnf}$: decide for a given propositional formula with 
\emph{arbitrary integer} weights whether there is an
assignment of weight at least $\alpha$. In the presence of negative
weights, many non-trivial variants of the problem arise, for
instance $\Lang{max-dnf}$ (the propositional formula is in disjunctive
normal form), $\Lang{max-monotone-cnf}$ and $\Lang{max-monotone-dnf}$
(all literals in the formulas are positive), and $\Lang{abs-cnf}$ as
well as $\Lang{abs-dnf}$ (maximize the \emph{absolute value} of the
sum of the satisfied clauses). Even an obscure combination such as
$\Lang{abs-monotone-dnf}$ becomes surprisingly expressive if negative
weights are allowed (we show that this version is $\Class{NP}$-hard
restricted to formulas in which every clause contains at most two literals).
This complexity jump (many of these versions are trivially in~$\Class{P}$ if negative weights are \emph{not} allowed) motivates a
systematic study of the parameterized complexity of $\Lang{abs-dnf}$
and its variants. The main problem we investigate is defined as:

\begin{problem}[{$\PLang[\alpha]{$d$-abs-dnf}$}]
  \begin{parameterizedproblem}
    \instance A propositional formula $\phi$ in disjunctive normal
    form with clauses of size at most $d$, a target value $\alpha\in\mathbb{N}$, and a weight function
    $w\colon\clause(\phi)\rightarrow\mathbb{Z}$.
    \parameter $\alpha$
    \question Is there an assignment $\beta$ such that
    $|\sum_{c\in\clause(\phi),\beta\models c}w(c)|\geq\alpha$ (the absolute value of the sum of the weights of
    satisfied clauses is at least $\alpha$)?
  \end{parameterizedproblem}
\end{problem}

We investigate the complexity of this problem by varying the status of
$\alpha$ and $d$ as parameter. In the notation above ($\alpha$ as
index of $p$ and $d$ in the problem description) we consider
$\alpha$ as a parameter and $d$ as a fixed constant. But we also
consider $\alpha$ \emph{and} $d$ to be parameters at the same time, or
both to be constant, or to be neither a parameter nor a constant.
An overview of the complexity of $\Lang{abs-dnf}$ can be found in Table~\ref{table:results}.

\begin{table}[h]
  \caption{Summary of our results for the maximum satisfiability
    problem over formulas in disjunctive normal form with negative
    weights and an absolute value target function (\Lang{abs-dnf}). We
    distinguish how $\alpha$ (the weight of the sought solution) and $d$
    (maximum size of a clause) are interpreted (as \emph{constant}, as
    \emph{parameter}, or as \emph{unbounded}). The
    columns correspond to~$d$, while the rows
    represent~$\alpha$. Upper bounds are highlighted in
    \textcolor{ba.green!65!black}{green} and lower bounds in
    \textcolor{ba.red}{red}.}
  \label{table:results}
  \centering
  \begin{tabular}{r@{\hskip 3ex}ccc}
    %
      $\alpha$~\raisebox{-.5ex}{\tikz\draw[semithick](1.5ex,-1ex)--(-1.5ex,1ex);}~$d$ 
    & \emph{constant}
    & \emph{parameter}
    & \emph{unbounded}
    \\[2ex]
    %
    \emph{constant}
    & \parbox{3cm}{\centering\color{ba.green!65!black}$\Class{P}$\\[-1mm]{\tiny (Corollary~\ref{corollary:abs-dnf-p})}} 
    & \parbox{3cm}{\centering\color{ba.green!65!black}$\Class{FPT}$\\[-1mm]{\tiny (Corollary~\ref{corollary:abs-dnf})}} 
    & unknown
    \\[2ex]
    %
      \emph{parameter}
    & \parbox{3cm}{\centering\color{ba.green!65!black}$\Class{FPT}$\\[-1mm]{\tiny (Theorem~\ref{theorem:abs-dnf})}} 
    & \parbox{3cm}{\centering\color{ba.green!65!black}$\Class{FPT}$\\[-1mm]{\tiny (Corollary~\ref{corollary:abs-dnf})}} 
    & \parbox{3cm}{\centering\color{ba.red}$\Class{W}[1]$-hard\\[-1mm]{\tiny (Corollary~\ref{corollary:abs-dnf-without-d})}} 
    \\[2ex]
    %
      \emph{unbounded}
    & \parbox{3cm}{\centering\color{ba.red}$\Class{NP}$-hard\\[-1mm]{\tiny (Corollary~\ref{corollary:abs-dnf-np})}} 
    & \parbox{3cm}{\centering\color{ba.red}$\Para\Class{NP}$-hard\\[-1mm]{\tiny (Corollary~\ref{corollary:abs-dnf-paranp})}} 
    & \parbox{3cm}{\centering\color{ba.red}$\Class{NP}$-hard\\[-1mm]{\tiny (Corollary~\ref{corollary:abs-dnf-np})}} 
    \\
  \end{tabular}
\end{table}

As mentioned before, we will also study other variations of the
problem, which we obtain by replacing the $\textsc{dnf}$ by
a $\textsc{cnf}$, by forcing the formula to be monotone, or by replacing
the absolute value function by a normal sum. It turns out that without
the absolute value function, many natural versions of the problem become
$\Class{W[1]}$-hard in the presence of negative weights. 
However, the problems become tractable if the aim
is to maximize the absolute value of the target sum as they than can
all be fpt-reduced to
$\PLang[\alpha]{$d$-abs-monotone-dnf}$, which in turn is equivalent to
the following hypergraph problem:

\begin{problem}[{$\PLang[\alpha]{$d$-unbalanced-subgraph}$}]\label{problem:unbalancedsubgraph}
  \begin{parameterizedproblem}
    \instance A $d$-hypergraph $H=(V,E)$, a weight function $w\colon
    E\to\mathbb{Z}$, and a number~$\alpha\in\mathbb{N}$.
    \parameter $\alpha$
    \question Is there a set $X\subseteq V$ with $|w[X]|\geq \alpha$.
  \end{parameterizedproblem}
\end{problem}

In words, we seek a subset $X$ of the vertices of an edge-weighted
hypergraph such that the absolute value of $w[X]=\sum_{e\in E[X]}
w(e)$ is maximized. Here $E[X]$ are the hyperedges induced by the set $X$. Intuitively, that means we have to find a subgraph
that contains either more positive than negative edges or more
negative edges than positive ones. Thus, we seek an {\em unbalanced} subgraph.

\subparagraph*{Our Contribution I: Hardness Results.}  We present
hardness results for the parameterized and non-parameterized variants
of $\Lang{max-dnf}$ and $\Lang{abs-dnf}$ (both with negative
weights). In particular, we prove that $\Lang{max-dnf}$ and
$\Lang{abs-dnf}$ are $\Class{NP}$-hard when restricted to monotone
formulas with at most two literals per clause. However, the problems
differ from a parameterized perspective:
$\PLang[\alpha]{$d$-max-monotone-dnf}$ is $\Class{W}[1]$-hard for all
$d\geq 2$, while $\PLang[\alpha]{$d$-abs-dnf}\in\Class{FPT}$.
We also prove a similar result for the \textsc{cnf}-versions: $\PLang[\alpha]{$d$-max-monotone-cnf}$ 
is $\Class{W}[1]$-hard for all
$d\geq 2$, but $\PLang[\alpha]{$d$-abs-cnf}\in\Class{FPT}$.

\subparagraph*{Our Contribution II: Satisfiability Based Optimization with Absolute Value Function.}
We derive a kernelization algorithm for
$\PLang[\alpha]{$d$-unbalanced-subgraph}$, which implies that the
problem is in $\Class{FPT}$ parameterized by $\alpha$. Using an
fpt-reduction we show:

\begin{restatable}{theorem}{absdnf}\label{theorem:abs-dnf}%
  $\PLang[\alpha]{$d$-abs-dnf}\in\Class{FPT}$
\end{restatable}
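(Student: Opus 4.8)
The plan is to prove the theorem by reducing $\PLang[\alpha]{$d$-abs-dnf}$, in two elementary steps, to the hypergraph problem $\PLang[\alpha]{$d$-unbalanced-subgraph}$ (Problem~\ref{problem:unbalancedsubgraph}), and then invoking our kernelization for that problem, which places it in $\Class{FPT}$ parameterized by~$\alpha$. Since $d$ is a fixed constant and the parameter~$\alpha$ is passed through unchanged, composing the reductions with the kernelized algorithm yields $\PLang[\alpha]{$d$-abs-dnf}\in\Class{FPT}$.

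\emph{Step~1 (removing negative literals).} First I would reduce $\PLang[\alpha]{$d$-abs-dnf}$ to $\PLang[\alpha]{$d$-abs-monotone-dnf}$. For a clause $c$, write $P(c)$ for its positive and $N(c)$ for its negated variables; if $P(c)\cap N(c)\neq\emptyset$ then $c$ is unsatisfiable and may be deleted. Identifying an assignment $\beta$ with the set $X=\beta^{-1}(1)$ of its true variables, $\beta$ satisfies $c$ exactly when $P(c)\subseteq X$ and $N(c)\cap X=\emptyset$, so by inclusion--exclusion
\[
  [\beta\models c]
  \;=\; [P(c)\subseteq X]\prod_{y\in N(c)}\bigl(1-[y\in X]\bigr)
  \;=\; \sum_{S\subseteq N(c)}(-1)^{|S|}\,[\,P(c)\cup S\subseteq X\,].
\]
Summing over all clauses shows that the weight $\sum_{\beta\models c}w(c)$ of $\beta$ equals $\sum_{c}\sum_{S\subseteq N(c)}(-1)^{|S|}w(c)\,[\,P(c)\cup S\subseteq X\,]$, which is exactly the weight of $\beta$ in the \emph{monotone} DNF $\phi'$ that contains, for every clause $c$ and every $S\subseteq N(c)$, the clause $\bigwedge_{z\in P(c)\cup S}z$ with weight $(-1)^{|S|}w(c)$ (parallel clauses, i.e.\ clauses over the same variable set, are merged by adding their weights). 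Each new clause has size $|P(c)|+|S|\le|P(c)|+|N(c)|\le d$, and each original clause spawns at most $2^{|N(c)|}\le 2^{d}$ of them, so $\phi'$ is computable in time $2^{d}\cdot\poly(|\phi|)$, has clauses of size at most~$d$, and attains the same maximum absolute weight as~$\phi$; the target~$\alpha$ is kept.

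\emph{Step~2 (monotone DNF as an unbalanced subgraph).} Given the monotone $d$-DNF $\phi'$ with weight function $w'$, build the $d$-hypergraph $H=(V,E)$ with $V=\var(\phi')$, $E=\{\,\var(c)\mid c\in\clause(\phi')\,\}$, and $w(e)=\sum_{c\,:\,\var(c)=e}w'(c)$; an empty clause, should one occur, contributes a fixed additive constant and is handled separately. A monotone clause $c$ is satisfied by the assignment with true set $X$ if and only if $\var(c)\subseteq X$, so the clauses satisfied by that assignment are precisely those whose variable sets constitute $E[X]$, and hence its weight is $w[X]=\sum_{e\in E[X]}w(e)$. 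Because $X\mapsto(\text{assignment with true set }X)$ is a bijection between subsets of $V$ and assignments, $\phi'$ admits an assignment of absolute weight at least~$\alpha$ if and only if $H$ has a set $X$ with $|w[X]|\ge\alpha$, and both directions of this equivalence are polynomial-time computable. Composing Steps~1 and~2 reduces $\PLang[\alpha]{$d$-abs-dnf}$ to $\PLang[\alpha]{$d$-unbalanced-subgraph}$ with the parameter~$\alpha$ unchanged; applying to the resulting instance the $\Class{FPT}$ algorithm obtained from our kernelization then completes the proof.

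Both reductions are routine; the only point needing a little care is that the inclusion--exclusion expansion in Step~1 does not inflate clauses beyond size~$d$, which holds because each generated clause uses only variables already occurring in its parent clause. The substantive work therefore lies not in this proof but in the result it invokes — the kernelization for $\PLang[\alpha]{$d$-unbalanced-subgraph}$, i.e.\ compressing an edge-weighted $d$-hypergraph to size bounded by a function of~$\alpha$ while preserving $\max_{X\subseteq V}|w[X]|$. That is where I would expect the main obstacle: recognising and discarding ``locally balanced'' or otherwise redundant parts of the hypergraph, and proving that once enough non-redundant structure remains, $\max_X|w[X]|$ is forced above~$\alpha$, so that such instances can be accepted outright.
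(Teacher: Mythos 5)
Your proposal is correct and follows essentially the same route as the paper: reduce to the monotone case (your one-shot inclusion--exclusion expansion is just the fully iterated form of the one-negative-literal-at-a-time replacement in Lemma~\ref{lemma:abs-dnf-to-abs-monotone-dnf}), encode the monotone formula as an edge-weighted $d$-hypergraph (Corollary~\ref{corollary:dnf-to-unbalanced}), and invoke the kernelization of Corollary~\ref{corollary:kernel}. You also correctly locate the substantive work in that kernelization, which is exactly how the paper structures the argument.
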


\subparagraph*{Our Contribution III: Absolute Integer Optimization.}
Theorem~\ref{theorem:abs-dnf} can be generalized to an
\emph{absolute integer optimization problem.} Given a multivariate polynomial
$p\colon \mathbb{Z}^n\to\mathbb{Z}$ with integer coefficients and
intervals~$[(b_{\textit{min}})_{i};(b_{\textit{max}})_{i}]$ for every $i\in\{1,\dots,n\}$, the
task is to find a solution $(x_1,\dots,x_n)\in\mathbb{Z}^n$ with
$(b_{\textit{min}})_{i}\leq x_i \leq (b_{\textit{max}})_{i}$ such that the
non-linear objective function $|p(x_1,\dots,x_n)|$ is maximized. 
With the value of the sought solution $\alpha$ and the maximal degree $d$ of the polynomial,
we prove:

\begin{restatable}{theorem}{absip}\label{theorem:abs-ip}%
  $\PLang[\alpha, d]{abs-io}\in\Class{FPT}$
\end{restatable}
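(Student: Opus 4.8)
The plan is to give an fpt\nobreakdash-reduction from $\PLang[\alpha, d]{abs-io}$ to $\PLang[\alpha]{$d$-unbalanced-subgraph}$ (Problem~\ref{problem:unbalancedsubgraph})~--~more precisely, to the variant of that problem with edge bound $d+1$ in place of $d$~--~and then to solve the resulting instance with our kernelization. Since that kernelization returns a kernel whose size is bounded by a function of $\alpha$ and of the maximum edge size, solving the kernel by exhaustive search yields a total running time of the form $g(\alpha,d)\cdot\poly(n)$, so the combined problem lies in $\Class{FPT}$. We may assume $\alpha\ge 1$ and $d\ge 1$, as the remaining cases are trivial, and that $p$ is given as an explicit list of monomials. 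The reduction has three parts: shrinking the variable ranges, Booleanizing, and packaging the result as a hypergraph.

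First I would shrink the box: replace each interval $[(b_{\textit{min}})_i;(b_{\textit{max}})_i]$ by its prefix of length at most $2\alpha d$, so that afterwards every variable ranges over at most $2\alpha d+1$ consecutive integers. The key point is that this leaves the answer unchanged. It rests on the elementary fact that a non\nobreakdash-constant univariate integer polynomial of degree at most $d$ attains each value at most $d$ times, so at most $d(2\alpha-1)$ integers $t$ satisfy $|q(t)|<\alpha$, and therefore on any window of at least $2\alpha d$ consecutive integers such a polynomial reaches absolute value at least $\alpha$. Given a feasible point $x^\star$ of the original box with $|p(x^\star)|\ge\alpha$, I would process the coordinates one at a time: freezing the other coordinates, view $p$ as a univariate polynomial in $x_i$; if it is constant, move $x_i$ to the left end of the shrunken interval, and otherwise the fact above provides a value of $x_i$ inside the shrunken interval that keeps $|p|\ge\alpha$. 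After all coordinates have been handled we have a feasible point of the shrunken box with objective still at least $\alpha$.

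Second, since each range now has size at most $2\alpha d+1$, I would write $x_i=(b_{\textit{min}})_i+\sum_j\lambda_{i,j}b_{i,j}$ with $O(\log(\alpha d))$ Boolean variables $b_{i,j}\in\{0,1\}$ using mixed\nobreakdash-radix binary weights ($\lambda_{i,j}=2^j$ except for the top digit, which is adjusted so that the set of attainable values is exactly $\{(b_{\textit{min}})_i,\dots,(b_{\textit{max}})_i\}$). Substituting these affine forms into $p$ and repeatedly replacing $b_{i,j}^2$ by $b_{i,j}$ (legitimate over $\{0,1\}$) yields a multilinear polynomial $q$ of degree at most $d$ over the Boolean variables such that the set of values $q$ takes over $0/1$\nobreakdash-assignments equals the set of values $p$ takes over the shrunken box; in particular the instance is a yes\nobreakdash-instance iff $\max_b|q(b)|\ge\alpha$. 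Each original monomial involves at most $d$ variables and hence expands into at most $O(\log(\alpha d))^{d}$ multilinear monomials, a function of $\alpha$ and $d$ only, so the substitution runs in fpt time~--~this is exactly where the first step is needed, since otherwise the blow\nobreakdash-up per monomial would be $(\log(\text{interval length}))^{d}$, which is not fpt. Third, writing $q(b)=q_\emptyset+\sum_{S\ne\emptyset}q_S\prod_{j\in S}b_j$ with $|S|\le d$, I would take the hypergraph on the Boolean variables together with a fresh vertex $z$ and a hyperedge $S\cup\{z\}$ of weight $q_S$ for every $S$ with $q_S\ne 0$ (the edge $\{z\}$ carrying the constant term $q_\emptyset$); then $w[X]=q(\mathbf 1_{X\setminus\{z\}})$ when $z\in X$ and $w[X]=0$ otherwise, so $\max_X|w[X]|=\max_b|q(b)|$. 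This hypergraph instance has edges of size at most $d+1$ and the same answer, and running the kernelization on it finishes the proof.

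I expect the delicate step to be the correctness of the range\nobreakdash-shrinking, specifically that shrinking all coordinates at once is sound: this works because the adjustment carried out for one coordinate does not interfere with the adjustments for the later ones, but this has to be argued carefully. The only other point needing attention is the running\nobreakdash-time analysis of the Booleanization, which is fpt only because the ranges were first bounded in terms of the parameters; the reduction to $\PLang[\alpha]{$d$-unbalanced-subgraph}$ and the invocation of the kernelization are then routine.
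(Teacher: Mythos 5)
Your proposal is correct, but it takes a genuinely different route from the paper. The paper does not shrink the domains up front; instead it normalizes every domain to contain $\{0,1\}$ (Rules~\ref{rule:trivial-vars-clauses-constraints} and~\ref{rule:shift-domain}), restricts the variables to Boolean values~---~which is only a one-sided test~---~and uses the edge-count threshold of the unbalanced-subgraph kernelization either to accept immediately or to conclude that $n$ and $m$ are bounded by the parameters; it then runs a search tree (Branching Rule~\ref{rule:branching}) that eliminates the remaining large-domain variables one by one before brute-forcing the bounded leaves. The correctness of that branching rule rests on essentially the fact you isolate: a non-constant univariate integer polynomial of degree at most $d$ has only $O(\alpha d)$ integer points of absolute value below $\alpha$ (the paper phrases this via counting extrema and finding a monotone stretch of length more than $2\alpha$). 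You instead apply this fact once per coordinate to shrink every interval to $2\alpha d$ consecutive integers; your coordinate-by-coordinate induction is sound, and the reverse direction is trivial since the shrunken box is contained in the original one. You then obtain an \emph{exact} Boolean encoding via the mixed-radix trick and a single many-one fpt-reduction to the unbalanced-subgraph problem with edge size $d+1$, with a per-monomial blow-up of $(O(\log(\alpha d)))^{d}$, which is indeed a function of the parameters only. Your route buys a cleaner one-shot reduction with no search tree and a two-directional equivalence throughout; the paper's route avoids the Booleanization blow-up and the radix bookkeeping at the price of the branching phase. Two cosmetic points: ``prefix'' is undefined for intervals with $(b_{\textit{min}})_{i}=-\infty$, so you should say ``any sub-window of $2\alpha d$ consecutive integers contained in the interval (or the whole interval if it is shorter)''; and since $d$ is a parameter rather than a constant here, you should invoke Corollary~\ref{corollary:kernel-parameters-alpha-d} rather than the fixed-$d$ kernel of Corollary~\ref{corollary:kernel}.
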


We emphasize that there are no restrictions on the number of variables
or on the size of an interval
$[(b_{\textit{min}})_{i};(b_{\textit{max}})_{i}]$. There may be even
infinitely large intervals like $[-\infty;\infty]$.

\subparagraph*{Related Work.}  Optimization problems based on
satisfiability or constraint-satisfaction problems are usually
$\Class{NP}$-complete~\cite{KhannaSW97, Schaefer78}. The parameterized
complexity of these problems is, thus, an active field of
research~\cite{GaspersS11, GaspersS14, Grohe06, Szeider11}. An
overview over the current trends can be found in~\cite{DellKLMM17,
  SamerS10}.  In a propositional formula in conjunctive normal form,
we can always satisfy at least $(1-2^{-d})m$ clauses if~$m$ is the
number of clauses and $d$ is the size of these
clauses~\cite[Chapter~9.2]{CyganFKLMPPS15}. Therefore, the problem is
fixed-parameter tractable when parameterized by the solution
size~$\alpha$. It is known that the problem remains in
$\Class{FPT}$ parameterized above this lower bound, i.\,e., the
parameter is the distance between $(1-2^{-d})m$ and the number of
clauses that can actually be satisfied
simultaneously~\cite{AlonGKSY11}. The currently fastest algorithm for
parameter $\alpha$ runs in time $O^*(1.3248^{\alpha})$ and improving
the base is an active field of research~\cite{ChenXW17}.

Another parameterized approach to (non-optimizing) satisfiability
problems are \emph{backdoors}, which are small sets of
variables such that assigning values to these variables yields an easy
formula, e.\,g., a \textsc{krom}-
or \textsc{horn}-formula~\cite{CramaEH97,
  NishimuraRS04}. Unfortunately, \Lang{max-cnf} remains
$\Class{NP}$-hard on such formulas, making the approach less appealing
for optimization problems.

A third line of research on parameterized algorithms for
satisfiability problems focuses on \emph{structural parameters},
i.\,e., graph theoretic properties of a graph representation of the
formula. Depending on the details of that representation, various
tractability and intractability results can be derived for parameters such as
the \emph{treewidth} of the corresponding graph~\cite{SamerS10b,
  SamerS10}. Such algorithms usually can be obtained from general
frameworks such as Courcelle's Theorem and carry over
to the optimization versions~\cite{Courcelle90}. Generalizing this
approach to a broader range of graphs is an active field of research,
see for instance~\cite{GanianS17} or~\cite[Chapter~17]{HoS}.

A related problem is
\Lang{max-lin2}~\cite{AlonGKSY11,CrowstonFGJRTY2011,MahajanRS2009}:
given variables $x_1,\dots,x_n$ and a set of
equations $\Pi_{i\in I}x_i=b$ for $b\in\{-1,1\}$ and
$I\subseteq\{1,\dots,n\}$, find an assignment to
$\{-1,1\}$ that satisfies as many equations as possible. This can 
be seen as maximizing \emph{parity constraints}, in contrast to
\emph{or constraints} (\Lang{max-cnf}) or \emph{and constraints} (\Lang{max-dnf}).
The parameterized complexity of \emph{linear} integer programming is
well understood, see for instance the famous result of
Lenstra~\cite{Lenstra1983} or recent developments in algorithms based
on structural parameters~\cite{EibenGKO18, GanianO19,
  JansenLR20, KouteckLO18}. There are also recent results on \emph{non-linear}
programs, see for instance~\cite{EibenGKO2019, GavenciakKK2019,
  GaveniakKK2020}. For separable convex functions,
fixed-parameter tractability  results are known for structural parameters depending on the constraint
matrix~\cite{EisenbrandHKKLO19}. There are also results that directly
cover absolute values of polynomials~--~for instance, for maximizing convex functions,
which include absolute values of convex polynomials~\cite{GaveniakKK2020}.

\subparagraph*{Structure of this Work} After some brief preliminaries
in Section~\ref{section:preliminaries}, we present lower bounds for
various versions of $\Lang{abs-dnf}$ and related problems in
Section~\ref{section:hardness}. Afterwards, we formulate our main
result~--~an $\Class{FPT}$-algorithm for
$\PLang[\alpha]{$d$-abs-dnf}$~--~in Section~\ref{section:absdnf}. We
also present the underlying machinery, i.\,e., a
kernelization-algorithm for
$\PLang[\alpha]{$d$-unbalanced-subgraph}$. Section~\ref{section:absip} concludes with further
applications  and an outlook is presented in
Section~\ref{section:conclusion}.

\section{Preliminaries}\label{section:preliminaries}

A \emph{parameterized problem} is a set
$Q\subseteq\Sigma^*\times\mathbb{N}$, where an instance $(w,k)$
consists of a word~$w$ and a \emph{parameter} $k$. We denote
parameterized problems with a preceding ``$\mathrm{p}$-'' with an index
that indicates what the parameter is~--~for instance,
$\PLang[k]{vertex-cover}$ denotes the well-known vertex cover problem
parameterized by the solution size~$k$. A parameterized problem is
\emph{fixed-parameter tractable} (\emph{fpt} or it is in $\Class{FPT}$) if there is
a computable function $f\colon\mathbb{N}\rightarrow\mathbb{N}$ and a
constant $c\in\mathbb{N}$ such that we can decide $(w,k)\in^? Q$ in time $f(k)\cdot |w|^c$.
A \emph{size-$g$ kernelization} for a parameterized
problem $Q$ and a computable function
$g\colon\mathbb{N}\rightarrow\mathbb{N}$ is a polynomial-time
algorithm that, on input of an instance~$(w,k)$, outputs an instance
$(w',k')$ such that $(w,k)\in Q\Leftrightarrow
(w',k')\in Q$ and $|w'|+k'\leq g(k)$. The output of the algorithm is
called a \emph{kernel}. If $g$ is a polynomial then the kernel is called a \emph{polynomial kernel}. It is well-known that a decidable problem is in $\Class{FPT}$
iff it admits a kernelization~\cite[Theorem~1.39]{FlumG06}.

A \emph{$d$-hypergraph}~$H=(V,E)$ consists of a set $V$ of vertices
and a set $E$ of edges with $e\subseteq V$ and $|e|\leq d$ for all
$e\in E$. A hypergraph is \emph{$d$-uniform} if all edges have size
\emph{exactly}~$d$. For $d=2$, they are just \emph{graphs}. The \emph{neighborhood} of a vertex $v$ is the set
$N(v)=\{\,w\mid (\exists e\in E)(v,w\in e)\,\}$ and its
degree~$\deg(v)$ is the number of edges containing it. The \emph{link}
of a set $c\subseteq V$ is defined as $\link(c)=\{\,e\mid \text{$e\in
  E$ and $c\subsetneq e$}\,\}$.
A \emph{subedge} of some hyperedge $e\in E$ is any subset~$s\subseteq e$.
We use $\Delta(H)=\max_{v\in V}\deg(v)$ (or
just $\Delta$, if $H$ is clear from the context) to denote the
\emph{maximum degree of $H$}.  For $X\subseteq V$ and $w\colon E\to\mathbb{Z}$, let
$E[X]=\{\,e\in E\mid e\subseteq X\,\}$ denote the edges of the
\emph{induced hypergraph} $H[X]=(X,E[X])$ over $X$, and define its
\emph{weight} as $w[X]=\sum_{e\in E[X]} w(e)$. 

We need only little terminology from propositional logic. A
formula $\phi$ is in \emph{disjunctive
normal form} if it is a disjunction of conjunctions, e.\,g.,
$\psi\equiv(x_1\wedge x_2)\vee(x_2\wedge x_3)$. It is  in \emph{conjunctive normal form} if it
is a conjunction of disjunctions, e.\,g., $\chi\equiv(x_1\vee
x_2)\wedge(x_2\vee \neg x_3)$. We
refer to the variables of $\phi$ with $\var(\phi)$, and we call
the conjunctions in a formula in disjunctive normal form and the
disjunctions in a formula in conjunctive normal form
\emph{clauses}. These objects are addressed with $\clause(\phi)$. For
instance, $\var(\psi)=\var(\chi)=\{x_1,x_2,x_3\}$,
$\clause(\psi)=\{(x_1,x_2),(x_2,x_3)\}$, and
$\clause(\chi)=\{(x_1,x_2),(x_2,\neg x_3)\}$. A \emph{literal} is a
variable (then it is \emph{positive}) or its negation (then it is
\emph{negative}). Formulas that contain only positive literals are
\emph{monotone}, e.\,g., $\psi$ is monotone but $\chi$ is not.

\section{Lower Bounds for MaxSAT and AbsSAT Based Optimization}\label{section:hardness}

The problems $\Lang{max-dnf}$ and $\Lang{abs-dnf}$ may seem similar on
first sight. After all, if there are no negative weights they are
clearly equivalent. However, if negative weights are present, a
solution of $\Lang{abs-dnf}$ may either construct a sum of at least
$\alpha$ or of at most $-\alpha$; in contrast, a solution for
$\Lang{max-dnf}$ does not have such freedom and has to find a
weight-$\alpha$ solution. This results in an interesting complexity
gap: $\PLang[\alpha]{$d$-max-monotone-dnf}$ is $\Class{W}[1]$-hard for
$d\geq 2$ while $\PLang[\alpha]{abs-dnf}$ is only $\Class{W}[1]$-hard
if $d$ is unbounded (recall that $d$ is the size of the clauses and
that in $\PLang[\alpha]{$d$-max-monotone-dnf}$ the size of every
clause is bounded by a constant $d$). We prove the hardness results in
this section and complement them with a proof of
$\PLang[\alpha]{$d$-abs-dnf}\in\Class{FPT}$ in the next section.

We first consider $\PLang[\alpha]{$d$-max-monotone-dnf}$. This problem
is $\Class{W}[1]$-hard by a reduction from the independent set problem
(given a graph and an integer $k$, decide whether there is a set of at
least $k$ pairwise non-adjacent vertices):

\begin{lemma}\label{lemma:max-dnf}
  $\PLang[\alpha]{$d$-max-monotone-dnf}$ is $\Class{W}[1]$-hard for $d\geq 2$.
\end{lemma}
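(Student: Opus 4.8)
The plan is to reduce from $\PLang[k]{independent-set}$. Given a graph $G=(V,E)$ and integer $k$, we build a monotone DNF formula $\phi$ over the variable set $V$ (one Boolean variable per vertex) as follows. For every vertex $v\in V$ we add a singleton clause $(x_v)$ of weight $+1$; for every edge $\{u,v\}\in E$ we add a clause $(x_u\wedge x_v)$ of weight $-W$ for a sufficiently large penalty $W$ (e.g. $W=|V|+1$). Every clause has at most two literals, so $d=2$ suffices, and all literals are positive, so $\phi$ is monotone. We set the target to $\alpha=k$. An assignment $\beta$ corresponds to the set $X=\{v : \beta(x_v)=1\}$; the weight achieved is $|X| - W\cdot|E[X]|$, where $E[X]$ is the set of edges with both endpoints in $X$.

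The correctness argument splits into the two directions. If $X$ is an independent set of size $\ge k$, take $X'\subseteq X$ with $|X'|=k$; then $E[X']=\emptyset$, so $\beta$ achieves weight exactly $k=\alpha$. Conversely, suppose some $\beta$ achieves weight $\ge k$ with corresponding set $X$. If $E[X]\ne\emptyset$ then the weight is at most $|X| - W \le |V| - (|V|+1) < 0 < k$, a contradiction; hence $E[X]=\emptyset$, i.e. $X$ is independent, and the weight is just $|X|\ge k$, so $G$ has an independent set of size $\ge k$. The reduction is clearly polynomial-time, and the parameter $\alpha=k$ is bounded by a function of (in fact equal to) the source parameter $k$, so this is an fpt-reduction; since $\PLang[k]{independent-set}$ is $\Class{W}[1]$-hard, so is $\PLang[\alpha]{$d$-max-monotone-dnf}$ for $d=2$. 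For $d\ge 2$ the same instance works verbatim (clauses of size $\le 2$ are also clauses of size $\le d$).

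**The only subtlety** — and it is minor — is ensuring the penalty $W$ is large enough that no assignment can compensate the loss from an induced edge by collecting many singleton clauses; choosing $W$ strictly larger than the total positive weight $|V|$ handles this cleanly. One should also note that maximizing over all $\beta$ is the same as maximizing over all $X\subseteq V$, since $\phi$ is monotone and the formula is in DNF with exactly these clauses, so setting a variable to false only removes clauses from consideration; no spurious satisfied clauses appear. There is no real obstacle here beyond bookkeeping the weights; the conceptual content is entirely in the choice of gadget, and the $\Class{W}[1]$-hardness transfers immediately through the fpt-reduction.
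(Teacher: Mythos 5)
Your reduction is correct and is essentially the same as the paper's: one positive unit clause per vertex and one negative clause per edge, with $\alpha=k$. The only difference is that you use a large penalty $-W$ on edge clauses so that the converse direction is immediate, whereas the paper uses weight $-1$ and a short exchange argument (flipping one endpoint of a satisfied edge clause never decreases the weight); both work.
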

\begin{proof}
  Let $G=(V,E)$ and $k\in\mathbb{N}$ be an instance of
  $\PLang[k]{independent-set}$. We build a formula $\phi$ that
  contains a variable $x_v$ for every $v\in V$:
  \[
    \phi\equiv
    \bigvee_{v\in V} (x_v)\annotate{-0.15cm}{1.2cm}{1}
    \;\vee
    \bigvee_{\{v,w\}\in E}(x_v\wedge x_w)\annotate{0.25cm}{1.2cm}{-1}.
  \]
  The red labeling shows the corresponding weight function
  $w$. Clearly, a size-$k$ independent set translates into an
  assignment of weight $k$. For the other direction consider an
  assignment of weight~$k$. We can assume that the assignment
  satisfies only positively weighted clauses. Otherwise we can flip
  one variable of a satisfied and negatively weighted clause. This
  does not decrease the weight because there is at most one positively
  weighted clause containing this variable. Since $k$ variables
  are set to \emph{true} and no negatively weighted clauses are satisfied,
  the assignment translates back to an independent set.
\end{proof}

\begin{corollary}\label{corollary:max-dnf-w1}
    $\PLang[\alpha]{$d$-max-dnf}$ is $\Class{W}[1]$-hard for $d\geq 2$.
\end{corollary}

On the other hand, the absolute value version of the problem remains
intractable in the classical sense, which motivates our parameterized
perspective. In detail, the (unparameterized) problem
$\Lang{$d$-abs-dnf}$ is $\Class{NP}$-hard by a reduction from
$\Lang{independent-set}$. Note that in contrast to the previous lemma,
the reduction in the following is not parameter-preserving. Thus, the
theorem does \emph{not} imply that
$\PLang[\alpha]{$d$-abs-monotone-dnf}$ is $\Class{W}[1]$-hard.

\begin{theorem}\label{theorem:abs-dnf-np}
  $\Lang{$d$-abs-monotone-dnf}$ is $\Class{NP}$-hard for $d\geq 2$.
\end{theorem}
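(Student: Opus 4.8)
The plan is to reduce from $\Lang{independent-set}$, but unlike the reduction in Lemma~\ref{lemma:max-dnf} we are free to blow up the parameter, so we can afford to encode the target value $k$ into the weights themselves. The key idea is that maximizing an absolute value is a ``one-sided'' objective only up to sign, so we want to arrange the weights so that the \emph{only} way to reach a large absolute value is to make the sum large and \emph{positive} in exactly the regime where an independent set of size $k$ exists. First I would take an instance $(G,k)$ of $\Lang{independent-set}$ with $G=(V,E)$ and build, as before, a monotone DNF formula with one variable $x_v$ per vertex. The singleton clauses $(x_v)$ get some positive weight and the edge clauses $(x_v\wedge x_w)$ get some negative weight; an assignment selects the vertex set $X=\{v : \beta(x_v)=\text{true}\}$ and the weight of satisfied clauses is a function of $|X|$ and the number of edges inside $X$.

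The main obstacle is controlling the negative side of the absolute value: with the naive weights, an assignment that sets \emph{everything} false yields weight $0$, but we must make sure no assignment can drive the sum very negative and thereby cheat. I would handle this by choosing the weight of each singleton clause $(x_v)$ to be a large positive constant $M$ (say $M = |E|+1$ or $M = n$, to be fixed), and the weight of each edge clause to be $-1$. Then for a chosen set $X$ with $\ell = |X|$ vertices and $m_X = |E[X]|$ edges inside, the weight is $M\ell - m_X$, which is always $\geq 0$ (since $m_X \leq \binom{\ell}{2}$ is dominated once $M$ is large enough relative to the structure — here I would instead make the edge weights small, e.g.\ weight $-1$ and $M$ large, so $M\ell - m_X \geq M\ell - n^2 \geq 0$ whenever $\ell \geq 1$, after scaling $M$ up to $n^2$). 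Consequently the sum can never be negative, so $|w[X]| = w[X] = M\ell - m_X$, and the absolute-value objective collapses to an ordinary maximization. The remaining step is the standard calibration: an independent set of size $k$ gives weight exactly $Mk$; conversely, given an assignment of weight at least $Mk$, a set $X$ with $\ell < k$ has weight at most $M(k-1) + 0 < Mk$ (using $m_X \geq 0$ and $M\ell \leq M(k-1)$), so $\ell \geq k$, and if $\ell = k$ then $m_X = 0$, i.e.\ $X$ is independent — and if $\ell > k$ we can delete vertices from $X$ one at a time, which can only decrease $m_X$, to reach an independent set of size $k$. I set the target to $\alpha = Mk$.

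The final bookkeeping is to confirm that everything is computable in polynomial time: the formula has $|V| + |E|$ clauses, each of size at most $d = 2$, the weights are polynomially bounded integers (writable in $O(\log n)$ bits each), and the target $\alpha = Mk \leq n^2 \cdot n$ is polynomial. This establishes $\Class{NP}$-hardness of $\Lang{$d$-abs-monotone-dnf}$ for $d = 2$, and the same construction works verbatim for any $d \geq 2$ since we never use clauses of size larger than two. I expect the only genuinely delicate point to be pinning down the constant $M$ precisely so that the inequality $M\ell - m_X < Mk$ is strict for all $\ell < k$ simultaneously with $Mk$ being exactly attainable by an independent set; this is a routine but careful choice ($M$ strictly larger than the maximum possible value of $m_X$, e.g.\ $M = |E| + 1$, suffices, since then a set of size $\ell \le k-1$ has weight $\le M(k-1) \le Mk - M < Mk - m_X$ is not quite the clean bound — so I would use $M = \binom{n}{2}+1$ to be safe and note $m_X \le \binom{n}{2} < M$).
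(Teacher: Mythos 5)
There is a genuine gap: the backward direction of your reduction fails, and no choice of the constant $M$ can repair it. Once you set the singleton weight $M$ strictly larger than the maximum degree (which you must, to force $M\ell - m_X\geq 0$), adding \emph{any} vertex $v$ to $X$ changes the weight by $M-\deg_X(v)>0$, so the optimum is always attained at $X=V$ with value $Mn-|E|$, regardless of the independent-set structure of $G$. Concretely, take $G=K_3$ and $k=2$: with $M=\binom{3}{2}+1=4$ and target $\alpha=Mk=8$, the set $X=V$ gives weight $4\cdot 3-3=9\geq 8$, yet $K_3$ has no independent set of size $2$. The specific step that breaks is your claim that when $\ell>k$ you can ``delete vertices from $X$ one at a time, which can only decrease $m_X$, to reach an independent set of size $k$'': deleting a vertex also destroys its singleton clause and hence decreases the $M\ell$ term by $M$, which with your large $M$ strictly decreases the weight; and in any case nothing guarantees that the size-$k$ set you end up with has $m_X=0$. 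Proving $\ell\geq k$ is not the same as exhibiting an independent set of size $k$.

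The tension you ran into is intrinsic. With balanced weights ($+1$ per vertex, $-1$ per edge, as in Lemma~\ref{lemma:max-dnf}) the pruning argument works~--~removing an endpoint of an internal edge loses at most $1$ and gains at least $1$~--~but then the sum can be driven far negative (e.g.\ $X=V$ in a dense graph), so the absolute value lets an assignment ``cheat'' on the negative side. Inflating the vertex reward to prevent negativity destroys the pruning argument. This is exactly why the paper's proof does not reduce to an ordinary maximization: it builds two edge-disjoint copies $G^+$ and $G^-$ of the graph with opposite edge signs, so that either copy alone contributes $\pm|E|$ to the sum, and attaches mirror vertices whose sign opposes the edges of their copy, so that the only way to push the absolute value beyond $|E|$ by an extra $k$ is to select an independent set in the other copy. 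You would need some gadget of this kind (or another mechanism that genuinely controls both signs of the sum) rather than a weight calibration on the one-variable-per-vertex formula.
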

\begin{proof}
  We reduce from $\Lang{independent-set}$ and let $G=(V,E)$ with
  $k\in\mathbb{N}$ be a corresponding instance. Construct the
  following formula $\phi$ that contains four
  variables~$v_1^+,v_2^+,v_1^-,v_2^-$ for every vertex $v\in V$ (the
  red arrows illustrate the weight function):
  \[
    \phi\equiv
    \bigvee_{v\in V}
      (v_1^+\wedge v_2^+)\annotate{0.25cm}{1.2cm}{-1}
      \vee
      (v_1^-\wedge v_2^-)\annotate{0.25cm}{1.2cm}{1}
    \;\vee
    \bigvee_{\{v,w\}\in E}
      (v_1^+\wedge w_1^+)\annotate{0.27cm}{1.2cm}{1}
      \vee
      (v_1^-\wedge w_1^-)\annotate{0.27cm}{1.2cm}{-1}.
  \]
  The intuition (also illustrated in Example~\ref{example:reduction})
  behind the formula is as follows: Think of two weighted copies of
  $G$ called $G^+$ and $G^-$ such that all edges in $G^+$ have weight
  $+1$ and all edges in $G^-$ have weight $-1$. This is the second
  part of the formula. Observe that by taking either all edges of
  $G^+$ or all edges of $G^-$, a solution would always have absolute
  value $|E|$.

  To encode the independent set problem, we introduce an extra gain
  for every vertex, these are the variables $v_2^+$ and
  $v_2^-$. Observe that in the first part of $\phi$, we can add $-1$
  or $+1$ to the solution by setting $v_1^{+/-}$ and $v_2^{+/-}$ to
  \emph{true} at the same time. Finally, note that the signs in the
  first part of the formula are exactly the opposite of the signs in
  the second part.  We claim that $G$ has a size-$k$ independent set
  iff $\phi$ has a solution of weight $k+|E|$. Let $S$ be a size-$k$
  independent set in $G$, then the following assignment has an
  absolute value of $k+|E|$:
  \[
    \beta(v_i^{\sigma}) =
    \begin{cases}
      \mathit{true}  & \text{if ($v\in S$ and $\sigma=-$) or ($\sigma=+$ and $i=1$); } \\
      \mathit{false} & \text{else.                      }
    \end{cases}
  \]
  For the other direction let $\beta$ be an assignment that achieves
  an absolute value $|\nu|$ of at least~$k+|E|$. Assume for simplicity
  that $\nu>0$ (the case $\nu<0$ is symmetric). We may assume
  $\beta(v_1^+)=\mathit{true}$ and
  $\beta(v_2^+)=\mathit{false}$~--~otherwise we can increase the
  solution by modifying the assignment locally. Observe that by
  setting only variables $v_i^+$ to $\textit{true}$, we have
  $\nu\leq|E|$ and, since we assumed $|\nu|\geq k+|E|$, $\beta$ sets
  further variables~$v_j^-$ to $\textit{true}$.  We may assume
  $\beta(v_1^-)=\beta(v_2^-)$ for all $v\in V$, as setting both
  variables to $\textit{true}$ is the only way to increase $\nu$ (the
  clauses of the form $(v_1^-\wedge w_1^-)$ all have a negative weight
  and we assumed $\nu>0$).

  Let $\{v,w\}\in E$ be an edge such that
  $\beta(v_1^-)=\beta(w_1^-)=\textit{true}$. Observe that changing
  $\beta(w_1^-)$ to $\beta(w_1^-)=\textit{false}$ does \emph{not}
  decrease the value of the solution (it decreases the number of
  satisfied clauses with negative weight by exactly one, as the
  positively weighted clause $(w_1^-\wedge w_2^-)$ is not satisfied
  anymore; but it increases the number of satisfied clauses with
  positive weight by at least one as the negatively weighted clause
  $(v_1^-\wedge w_1^-)$ is also no longer satisfied). Hence, we may
  adapt $\beta$ such that $S=\{\,v\mid \beta(v_1^-)=\textit{true}\,\}$
  is an independent set. Since the absolute value of $\beta$ is at
  least $|E|+k$ and since all clauses containing variables of the form
  $v_i^+$ can contribute at most~$|E|$ to this sum, we have
  $|S|\geq k$.
\end{proof}

\begin{example}\label{example:reduction}
  Let us illustrate the reduction used within the theorem with an
  example. Consider the following graph $G$ with an
  \textcolor{ba.blue}{independent set} of size three:

  {
    \centering
    
    \begin{tikzpicture}[
      every node/.style={font=\tiny, label position=-225, label distance=-1mm},
      independent/.style = {
        draw = ba.blue,
        circle,
        semithick,
        inner sep = 0pt,
        minimum width = 3mm
      }
      ]
      \nodeset{}{}
      \edgeset
      \node[independent] at (b) {};
      \node[independent] at (d) {};
      \node[independent] at (f) {};
    \end{tikzpicture}
    \medskip

  }

  Within the reduction, we virtually generate the following
  edge-weighted graph that
  contains two copies of $G$, namely
  \textcolor{ba.green!65!black}{$G^+$} and
  \textcolor{ba.red}{$G^-$}. In both copies, every vertex has a
  \textcolor{ba.violet}{mirror} vertex attached to it. Edges have
  either a \textcolor{ba.green!65!black}{positive} weight of
  \textcolor{ba.green!65!black}{+1} or a \textcolor{ba.red}{negative}
  weight of \textcolor{ba.red}{-1}:

  {
    \centering
    \medskip
    
    \begin{tikzpicture}[
      every node/.style={font=\tiny, label position=-225, label distance=-1mm},
      vertex/.append style = {draw=ba.green!65!black, fill=ba.green!50},
      edge/.append style = {color=ba.green!65!black},
      mirroredge/.append style = {color=ba.red}
      ]
      \nodeset{1}{+}
      \edgeset
      \mirrors
    \end{tikzpicture}
    \qquad
    \begin{tikzpicture}[
      every node/.style={font=\tiny, label position=-225, label distance=-1mm},
      vertex/.append style = {draw=ba.red, fill=ba.red!50},
      edge/.append style = {color=ba.red},
      mirroredge/.append style = {color=ba.green!65!black}
      ]
      \nodeset{1}{-}
      \edgeset
      \mirrors            
    \end{tikzpicture}

  }

  The reduction continues by defining a weighted formula that contains
  a clause for every edge with the same weight as that edge, e.\,g., we
  would have \textcolor{ba.green!65!black}{$(a_1^+\wedge b_1^+)$} of
  weight \textcolor{ba.green!65!black}{$+1$} and
  \textcolor{ba.red}{$(a_1^-\wedge b_1^-)$} of weight
  \textcolor{ba.red}{$-1$}. By selecting from one copy (i.\,e., setting the corresponding
  variables to \emph{true}) all vertices except the mirrors, we obtain an assignment of weight $\pm|E|$;
  selecting the independent set together with its mirrors in the other
  copy provides another $\pm k$, yielding a solution of absolute
  value $|E|+k$. In this example we could take all green vertices
  (providing a weight of $+8$) and \textcolor{ba.blue}{$\{b_1^-,
    d_1^-, f_1^-\}$} together with their mirrors (for another $+3$);
  yielding a solution with absolute value $11$.
  \hfill$\lrcorner$
\end{example}

\begin{corollary}\label{corollary:abs-dnf-np}
  $\Lang{$d$-abs-dnf}$ is $\Class{NP}$-hard for $d\geq 2$.
\end{corollary}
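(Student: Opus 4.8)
The plan is to observe that this is an immediate consequence of Theorem~\ref{theorem:abs-dnf-np} together with the syntactic fact that every monotone formula in disjunctive normal form is, in particular, a formula in disjunctive normal form. Concretely, I would argue that the language $\Lang{$d$-abs-monotone-dnf}$ is the restriction of $\Lang{$d$-abs-dnf}$ to instances $(\phi, \alpha, w)$ in which $\phi$ happens to be monotone; hence the identity map is a (trivial, polynomial-time, parameter-irrelevant) many-one reduction from $\Lang{$d$-abs-monotone-dnf}$ to $\Lang{$d$-abs-dnf}$, because a given instance already lies in the larger instance set and its membership in the two languages coincides by definition of the semantics (the target condition $|\sum_{c\in\clause(\phi),\,\beta\models c} w(c)| \geq \alpha$ does not refer to monotonicity at all).

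The single step worth spelling out is that Theorem~\ref{theorem:abs-dnf-np} produces, from a graph~$G$ and integer~$k$, a concrete monotone DNF formula $\phi$ with clauses of size at most~$d$ (here $d = 2$ suffices, and padding with dummy variables handles $d > 2$ if one wants uniformity), a weight function $w\colon \clause(\phi)\to\mathbb{Z}$, and a threshold $\alpha = k + |E|$, such that $(\phi,\alpha,w)\in\Lang{$d$-abs-monotone-dnf}$ iff $G$ has an independent set of size~$k$. Since this very triple is also a legal instance of $\Lang{$d$-abs-dnf}$ and satisfies exactly the same question, composing the reduction of Theorem~\ref{theorem:abs-dnf-np} with the inclusion $\Lang{$d$-abs-monotone-dnf} \hookrightarrow \Lang{$d$-abs-dnf}$ yields a polynomial-time reduction from $\Lang{independent-set}$ to $\Lang{$d$-abs-dnf}$ for every $d\geq 2$.

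There is essentially no obstacle here: the ``hard part'' has already been carried out in Theorem~\ref{theorem:abs-dnf-np}, and all that remains is the bookkeeping remark that allowing negative literals can only enlarge the class of admissible formulas and therefore cannot make the problem easier. I would keep the write-up to one or two sentences, explicitly invoking Theorem~\ref{theorem:abs-dnf-np} and the observation that monotone DNF formulas form a subclass of DNF formulas.
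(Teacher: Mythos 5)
Your proposal is correct and is exactly the argument the paper intends (the corollary is stated without proof precisely because it follows immediately from Theorem~\ref{theorem:abs-dnf-np} via the trivial inclusion of monotone \textsc{dnf} instances into general \textsc{dnf} instances). Nothing is missing; the identity reduction and the observation that the semantics ignore monotonicity are all that is needed.
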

\begin{corollary}\label{corollary:abs-dnf-paranp}
  $\PLang[d]{abs-dnf}$ is $\Para\Class{NP}$-hard.
\end{corollary}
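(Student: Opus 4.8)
The plan is to read this off directly from Corollary~\ref{corollary:abs-dnf-np} via the standard criterion for $\Para\Class{NP}$-hardness: a parameterized problem is $\Para\Class{NP}$-hard as soon as it is already $\Class{NP}$-hard for some fixed value of its parameter. For $\PLang[d]{abs-dnf}$ the parameter is the maximum clause size $d$, and Corollary~\ref{corollary:abs-dnf-np} (itself obtained from Theorem~\ref{theorem:abs-dnf-np}) states that the classical problem $\Lang{$d$-abs-dnf}$ is $\Class{NP}$-hard already for every fixed $d\geq 2$, in particular for $d=2$. Note that $\alpha$ stays part of the input in this slice, so the hardness reduction is free to produce large target values; it is exactly the slice $\{(\phi,w,\alpha) : (\phi,w,\alpha,2)\in\PLang[d]{abs-dnf}\}$ that we use.

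The step ``$\Class{NP}$-hard slice $\Rightarrow$ $\Para\Class{NP}$-hard'' is the usual one, which I would spell out for completeness: take a polynomial-time many-one reduction witnessing $\Class{NP}$-hardness of the $2$-slice (for instance the reduction from $\Lang{independent-set}$ built in Theorem~\ref{theorem:abs-dnf-np}) and follow it by the trivial map $x\mapsto(x,2)$ that attaches the constant parameter $2$; the latter runs in polynomial time and outputs a constant parameter, so the composition is an fpt-reduction from an $\Class{NP}$-hard problem into $\PLang[d]{abs-dnf}$ (cf.~\cite{FlumG06}). Combining this with Corollary~\ref{corollary:abs-dnf-np} gives $\Para\Class{NP}$-hardness.

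I do not expect a genuine obstacle here; the content of the statement already sits inside Corollary~\ref{corollary:abs-dnf-np}, and the corollary merely repackages it in the language of parameterized intractability. The only point to double-check is bookkeeping: that the reduction behind Corollary~\ref{corollary:abs-dnf-np} is honestly polynomial time and outputs formulas of clause size at most $2$ -- both are immediate from the construction in Theorem~\ref{theorem:abs-dnf-np}, which in polynomial time produces a monotone DNF with clauses of size $2$ from a graph. If one additionally wants $\Para\Class{NP}$-completeness, membership is clear: an assignment $\beta$ is a polynomial-size certificate whose objective value $|\sum_{c\in\clause(\phi),\,\beta\models c} w(c)|$ is evaluable in polynomial time, so $\Lang{abs-dnf}\in\Class{NP}$ irrespective of any bound on $d$.
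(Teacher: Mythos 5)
Your proposal is correct and matches the paper's (implicit) argument exactly: the paper leaves this corollary unproved but justifies the analogous Corollary~\ref{corollary:exact-absdnf-para-np} by the very same criterion, namely that a parameterized problem is $\Para\Class{NP}$-hard once a single slice is $\Class{NP}$-hard~\cite[Theorem~2.14]{FlumG06}, applied here to the slice $d=2$ via Theorem~\ref{theorem:abs-dnf-np} and Corollary~\ref{corollary:abs-dnf-np}. Your bookkeeping remarks (polynomial-time reduction, clause size $2$, $\alpha$ remaining part of the input) are all accurate.
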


Due to the previous theorem, we cannot hope to achieve parameterized tractability with respect to
the sole parameter $d$. The following lemma shows that this is also
not possible for the parameter $\alpha$. In the upcoming sections,
we thus rely on the combined parameter $\alpha+d$.

\begin{lemma}\label{lemma:abs-monotone-dnf-without-d}
    $\PLang[\alpha]{abs-monotone-dnf}$ is $\Class{W[1]}$-hard.
\end{lemma}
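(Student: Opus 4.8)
The plan is to give a polynomial‑time parameterized reduction from $\PLang[k]{multicolored-clique}$ (which is $\Class{W[1]}$‑hard): given a graph $G$ with a vertex partition $V(G)=V_1\sqcup\dots\sqcup V_k$, decide whether $G$ has a clique containing exactly one vertex of each $V_i$. Fix an enumeration $V_i=\{v_{i,1},\dots,v_{i,n_i}\}$ of each color class. The monotone DNF $\phi$ will have a variable $x_{i,\ell}$ for every $i\in\{1,\dots,k\}$ and $\ell\in\{1,\dots,n_i\}$, and we set $\alpha:=k^2$. Identifying an assignment with the set $X$ of its true variables, the key quantity is the \emph{commitment level} $p_i(X):=\max\{\,\ell\ge 0: x_{i,1},\dots,x_{i,\ell}\in X\,\}$; it turns out every clause below is satisfied or not solely as a function of $(p_1(X),\dots,p_k(X))$, so $X$ effectively only chooses one vertex $v_{i,p_i}$ per color (with $p_i=0$ meaning ``no choice'').

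\textbf{The construction.}
The clauses come in two groups. A \emph{selection} group: for each $i$, a clause $(x_{i,1})$ of weight $+k$, which is satisfied iff $p_i\ge 1$. An \emph{adjacency} group: for every $i<j$ and every non‑edge $\{v_{i,a},v_{j,b}\}$ of $G$ with $v_{i,a}\in V_i,\ v_{j,b}\in V_j$, add to the objective the mixed discrete difference $-\bigl([p_i\ge a]-[p_i\ge a+1]\bigr)\bigl([p_j\ge b]-[p_j\ge b+1]\bigr)$, where the predicate $[p_i\ge c]$ is realized by the clause $\bigwedge_{\ell=1}^{c}x_{i,\ell}$ (and is identically false, hence dropped, when $c>n_i$). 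Expanding, each non‑edge contributes at most four clauses of the shape $\bigl(\bigwedge_{\ell\le a'}x_{i,\ell}\bigr)\wedge\bigl(\bigwedge_{\ell\le b'}x_{j,\ell}\bigr)$ with weights $\pm 1$, and clauses coinciding across different non‑edges simply have their weights summed. This yields $O(k^2 n^2)$ clauses of size at most $2n$, computable in polynomial time, with parameter $\alpha=k^2$ depending only on $k$.

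\textbf{Correctness.}
First I would prove the telescoping identity: for fixed $i<j$, the sum of the mixed differences over all non‑edges between $V_i$ and $V_j$ collapses to exactly $-[\,p_i\ge 1\wedge p_j\ge 1\wedge v_{i,p_i}\not\sim v_{j,p_j}\,]$. Consequently, for \emph{every} assignment, $\sum_{c:\beta\models c}w(c)=k\cdot s(X)-p(X)$, where $s(X)$ is the number of committed colors and $p(X)\le\binom{s(X)}{2}$ is the number of committed color pairs whose chosen vertices are non‑adjacent. The equivalence now follows: (i) a multicolored clique gives $s=k,\ p=0$, hence value $k^2=\alpha$; (ii) because the bonus equals $k$, we have $ks-p\ge ks-\binom{s}{2}=\tfrac{s}{2}(2k-s+1)\ge 0$ for all $0\le s\le k$, so the objective is non‑negative for every assignment and ``$|\cdot|\ge\alpha$'' is the same as ``$\ge\alpha$''; (iii) if $G$ has no multicolored clique, then for $s=k$ we get $p\ge 1$, and for $s<k$ we get $ks-p\le k(k-1)<k^2$, so the objective never reaches $\alpha$. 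Hence $\phi$ admits an assignment of absolute value $\ge\alpha$ iff $G$ has a multicolored clique, proving the lemma.

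\textbf{Main obstacle.}
The delicate point is exactly the absolute value. A naive gadget subtracting $1$ for each violated adjacency among many committed vertices would let an assignment push the sum down to about $-\binom{k}{2}$ (e.g.\ by committing to a multicolored \emph{independent} set), producing a large absolute value with no clique present; this is what forces all the care. The mixed‑difference encoding is precisely what defeats this: it makes each color pair's total contribution lie in $\{-1,0\}$ \emph{irrespective of the assignment} (depending only on whether $v_{i,p_i}\sim v_{j,p_j}$), and then taking the selection bonus equal to $k$ confines the whole objective to $[0,k^2]$. I expect the only real work to be checking the discrete‑difference collapse together with the boundary cases $a=n_i$ or $b=n_j$ where some of the four terms vanish; the rest is routine. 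Note also that this reduction genuinely needs clauses of unbounded size (the adjacency clauses have size $\Theta(n)$), consistent with the fact that bounding $d$ makes the problem tractable (Theorem~\ref{theorem:abs-dnf}).
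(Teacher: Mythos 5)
Your reduction is correct, but it takes a genuinely different route from the paper. The paper reduces from \Lang{independent-set}: for each vertex $v$ it creates one clause $\neg x_v\wedge\bigwedge_{\{v,w\}\in E}x_w$ of weight $1$ with $\alpha=k$, and then removes the single negative literal via the weight-preserving monotonization of Lemma~\ref{lemma:abs-dnf-to-abs-monotone-dnf} (yielding a $+1$/$-1$ clause pair per vertex). You instead reduce from \Lang{multicolored-clique} with a prefix/commitment-level encoding and mixed discrete differences to express ``the selected pair is a non-edge'' with total contribution in $\{-1,0\}$ per color pair. Both arguments neutralize the absolute value in the same spirit~--~by forcing the objective to be non-negative on \emph{every} assignment~--~but you do it by the explicit bound $ks-p\geq ks-\binom{s}{2}\geq 0$, whereas the paper gets it for free because its pre-monotonization formula has only positive weights and the monotonization preserves the value of every assignment. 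Your construction is heavier (four $\pm1$ clauses per non-edge, weight aggregation across coinciding clauses, boundary cases at $a=n_i$) but entirely self-contained; the paper's is shorter because it reuses its monotonization lemma. Both correctly exploit unbounded clause width, which is unavoidable given Theorem~\ref{theorem:abs-dnf}. One cosmetic remark: the per-pair collapse you call ``telescoping'' is really just the identity $[p_i\geq a]-[p_i\geq a+1]=[p_i=a]$ applied coordinatewise, so the sum over non-edges is a sum of disjoint indicators rather than a telescoping sum; the conclusion is the same.
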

\begin{proof}
    Let $G=(V,E)$ and $k\in\mathbb{N}$ be an instance of
    $\PLang[k]{independent-set}$. Consider the following formula~$\phi$
    that contains a variable~$x_v$ for every vertex~$v\in V$:    
    \[
      \phi \equiv \bigvee_{v\in V}
      \left(
        \neg x_v \wedge \bigwedge_{\{v,w\}\in E} x_w
      \right).
    \]
    
    Let $\alpha=k$ and $w$ be a weight function that
    maps all clauses of $\phi$ to $1$. We show that $(G,k)$ is a
    yes-instance of $\Lang{independent-set}$ iff
    $(\phi,w,\alpha)$ is a yes-instance of $\Lang{abs-dnf}$.
    For the first direction let $I\subseteq V$ be a size-$k$ independent
    set and consider the assignment:
    \[
      \beta(x_v) =
      \begin{cases}
        \mathit{false}  & \mbox{if }v\in I; \\
        \mathit{true} & \text{otherwise.} \\
      \end{cases}
    \]

    Since $I$ is an
    independent set we have $\beta(x_w)=\mathit{true}$ for all
    vertices $w$ that have a neighbor $v\in I$. Therefore, $\beta$ satisfies every
    clause that corresponds to a vertex of $I$ and, hence, $\beta$ is
    an assignment of weight at least $k$.
  
    For the other direction let $\beta$ be an assignment of weight at
    least $k$, i.\,e., one that satisfies at least $k$ clauses. For
    each of these clauses there is a unique vertex $v$ such that $x_v$
    appears negatively only in that clause and, thus, at least $k$
    variables must be assigned to \emph{false} by $\beta$. Furthermore, for
    all neighbors $w$ of $v$ we have $\beta(x_w)=\mathit{true}$ by
    the second part of the clause. We conclude that
    $\{\,v\mid\beta(x_v)=\mathit{false}\,\}$ is an independent set of
    size at least $k$.

    The formula $\phi$ is not monotone, however, in the proof of
    Lemma~\ref{lemma:abs-dnf-to-abs-monotone-dnf} we describe how an
    arbitrary instance can be turned into an equivalent one with a
    monotone formula. In this case we obtain the following monotone
    formula~$\phi'$, which (instead of $\phi$) can be constructed
    directly for this reduction.
    \[
      \phi' \equiv \bigvee_{v\in V}
        \left(\bigwedge_{\{v,w\}\in E} x_w \right) \annotate{0.75cm}{1.6cm}{1}
      \;\vee\; \bigvee_{v\in V}
      \left(x_v \wedge \bigwedge_{\{v,w\}\in E} x_w 
      \right)\annotate{1cm}{1.6cm}{-1}.
    \]
    The red labeling shows the corresponding weight function~$w'$. As
    we describe in the proof of
    Lemma~\ref{lemma:abs-dnf-to-abs-monotone-dnf} the
    instances~$(\phi,w,k)$ and~$(\phi',w',k)$ are equivalent. This
    completes the proof.
   \end{proof}

   \begin{corollary}\label{corollary:abs-dnf-without-d}
    $\PLang[\alpha]{abs-dnf}$ is $\Class{W[1]}$-hard.
   \end{corollary}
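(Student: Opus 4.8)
The statement is an immediate consequence of Lemma~\ref{lemma:abs-monotone-dnf-without-d}. The plan is to observe that $\PLang[\alpha]{abs-monotone-dnf}$ is, by definition, the restriction of $\PLang[\alpha]{abs-dnf}$ to instances whose formula is monotone. Hence the identity map is a (trivially parameter-preserving, polynomial-time) fpt-reduction from $\PLang[\alpha]{abs-monotone-dnf}$ to $\PLang[\alpha]{abs-dnf}$: given an instance $(\phi, w, \alpha)$ of the monotone problem, output the very same triple as an instance of $\PLang[\alpha]{abs-dnf}$. Correctness holds because the condition ``there is an assignment $\beta$ with $|\sum_{c\in\clause(\phi),\ \beta\models c} w(c)|\ge\alpha$'' is literally the same question in both problems, and the parameter $\alpha$ is unchanged. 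Since $\PLang[\alpha]{abs-monotone-dnf}$ is $\Class{W}[1]$-hard by Lemma~\ref{lemma:abs-monotone-dnf-without-d} and $\Class{W}[1]$-hardness is preserved under fpt-reductions, $\PLang[\alpha]{abs-dnf}$ is $\Class{W}[1]$-hard as well.

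There is no real obstacle here; the only thing to be careful about is to make sure the problem definitions are set up so that the monotone variant is genuinely a syntactic special case (which the preliminaries confirm, as monotone formulas are exactly those with no negative literals), so that no transformation of the instance is needed at all. One could alternatively inline the monotone formula $\phi'$ from the proof of Lemma~\ref{lemma:abs-monotone-dnf-without-d} and present the reduction from $\PLang[k]{independent-set}$ directly, but the one-line argument via the containment of problem classes is cleaner and suffices.
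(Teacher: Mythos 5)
Your argument is correct and matches the paper's intent exactly: the corollary is stated immediately after Lemma~\ref{lemma:abs-monotone-dnf-without-d} precisely because the monotone variant is a syntactic special case of $\PLang[\alpha]{abs-dnf}$, so the identity map is the required fpt-reduction and hardness transfers. The paper leaves this implicit; your one-line justification is exactly the omitted step.
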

  
\subsection{From Disjunctive Normal Form to Conjunctive Normal Form}
So far, we have seen that the parameterized $\textsc{dnf}$ maximization
variants are $\Class{W[1]}$-hard. We can use this to show that the same
holds also for the conjunctive normal forms:
   
\begin{lemma}\label{lemma:max-monotone-cnf-to-dnf}
  $\PLang[\alpha]{$d$-max-monotone-cnf}$ is $\Class{W}[1]$-hard for $d\geq 2$.
\end{lemma}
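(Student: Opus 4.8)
The plan is to reduce in polynomial time (hence to give an fpt-reduction) from $\PLang[\alpha]{$d$-max-monotone-dnf}$, which is $\Class{W}[1]$-hard for $d\ge 2$ by Lemma~\ref{lemma:max-dnf}. The point is that, as a function of the set $T$ of variables set to $\mathit{true}$, the objective of a weighted monotone formula in disjunctive normal form is the multilinear polynomial $\sum_i w(C_i)\prod_{x\in C_i}[x\in T]$ (using Iverson brackets and identifying each monotone clause $C_i$ with its set of variables), and such a polynomial can be rewritten, clause by clause, as a signed sum of indicators of monotone \emph{disjunctive} clauses~-- which is exactly the objective of a weighted monotone formula in conjunctive normal form over the same variables.

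Concretely, given a monotone disjunctive-normal-form formula $\phi=\bigvee_i C_i$ with clauses of size at most $d$, a weight function $w$, and target $\alpha$, I would let $\psi$ be the conjunction of all clauses $D_{i,J}:=\bigvee_{x\in J}x$ for every $i$ and every nonempty $J\subseteq C_i$ (clauses that coincide are merged into one), with weight function $w'(D)=\sum_{(i,J)\,:\,D_{i,J}=D}(-1)^{|J|+1}w(C_i)$, and keep the target $\alpha$. Since $|J|\le|C_i|\le d$ and each $C_i$ has at most $2^{d}-1$ nonempty subsets, $\psi$ is monotone, has clauses of size at most $d$, has polynomially many clauses, and is computable in polynomial time.

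The heart of the argument is then to verify that $\phi$ and $\psi$ induce \emph{the same objective on every assignment}: a clause $C_i$ of $\phi$ is satisfied by $T$ iff every $x\in C_i$ lies in $T$, so the disjunctive-normal-form value is $\sum_i w(C_i)\prod_{x\in C_i}[x\in T]$; a clause $D_{i,J}$ of $\psi$ is satisfied iff some $x\in J$ lies in $T$, so the conjunctive-normal-form value is $\sum_i w(C_i)\sum_{\emptyset\ne J\subseteq C_i}(-1)^{|J|+1}\bigl[(\exists x\in J)\,x\in T\bigr]$. These two expressions agree term-by-term in $i$ by the inclusion--exclusion identity $\prod_{j=1}^{k}[A_j]=\sum_{\emptyset\ne J\subseteq[k]}(-1)^{|J|+1}\bigl[\bigcup_{j\in J}A_j\bigr]$ instantiated with the events $A_x=\{x\in T\}$; this identity is checked by substituting $\bigl[\bigcup_{j\in J}A_j\bigr]=1-\prod_{j\in J}(1-[A_j])$ and collecting the resulting products. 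Hence $(\phi,w,\alpha)$ is a yes-instance iff $(\psi,w',\alpha)$ is, and since the parameter $\alpha$ and the clause-size bound $d$ are unchanged, $\Class{W}[1]$-hardness transfers.

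I expect the only genuinely creative step to be the reformulation of the disjunctive-normal-form objective as a signed sum of disjunction-indicators via inclusion--exclusion; everything afterwards is bookkeeping (merging repeated clauses, bounding the blow-up by the factor $2^d$, checking monotonicity and clause size). The subtle point~-- and the reason a naive De Morgan-style reduction does \emph{not} work~-- is that the target must not change under the reduction, because $\alpha$ is the parameter; here this is automatic, since the two objectives are literally the same function of the assignment, so no additive shift is introduced.
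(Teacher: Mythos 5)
Your reduction is correct and is essentially the paper's own proof: the paper also reduces from $\PLang[\alpha]{$d$-max-monotone-dnf}$ by replacing each conjunction $c$ with the disjunctions $\bigvee_{x\in X}x$ over all nonempty $X\subseteq\var(c)$ weighted $(-1)^{|X|+1}w(c)$, and verifies the objective is preserved via the identity $x_1\cdots x_s=\sum_{\emptyset\neq I}(-1)^{|I|+1}\max\{x_i\mid i\in I\}$, which is exactly your inclusion--exclusion identity for indicators (the paper proves it by induction rather than by expanding $1-\prod(1-[A_j])$, a cosmetic difference). The only detail worth noting is that, like the paper, you should record the assumption that no clause of the source instance is empty (the paper does so explicitly, pointing out that Lemma~\ref{lemma:max-dnf} never produces the empty clause), since an empty conjunction would contribute a constant that your construction silently drops.
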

\begin{proof}
  We reduce from $\PLang[\alpha]{$d$-max-monotone-dnf}$ and let
  $(\phi,w,\alpha)$ be the corresponding instance. We may assume that
  $\phi$ does not contain the empty clause (which is always satisfied) as
  the reduction in Lemma~\ref{lemma:max-dnf} does not produce such clauses.

  The idea of the reduction is as follows.
  We initialize a set of clauses $\Phi$ as $\Phi=\clause(\phi)$ and
  then, as long as $\Phi$ contains a conjunction $c=(x_1
  \wedge x_2\wedge \dots\wedge x_{\ell})$, we update $\Phi$ to:
  \[
    \Phi = \big(
    \Phi\setminus\{c\}
    \big)
    \cup
    \big\{\,
    \bigvee_{v\in X}v\mid \emptyset\neq X\subseteq\var(c)
    \,\big\}.
  \]

  Let us denote the newly added clause for the set
  $X\subseteq\var(c)$ with $c_X$. We set the weight of $w(c_X)$ to  $(-1)^{|X|+1} w(c)$.   
  Since every conjunction is replaced by at most $2^d-1$ new
  disjunctions, the reduction can be carried out in time $2^d\cdot
  |\clause(\phi)|^t$  for some constant~$t\in\mathbb{N}$.

  To prove the correctness, let $(\Phi, w)$ be a set of
  clauses containing at least one conjunction together with a
  weight function $w$, and let $(\Phi', w')$ be the result of the
  just described operation for some conjunction $c$. We show that any truth assignment
  $\beta$ satisfies:
  \begin{equation}\label{eqn:conj-to-disj-equiv}
    \sum_{\sigma\in\Phi, \beta\models \sigma} w(\sigma) = \sum_{\sigma\in\Phi', \beta\models \sigma} w'(\sigma).
  \end{equation}
  We identify truth values with $0$ and $1$ and show by induction that for any $s>0$ and
  any set of binary values~$x_1,\dots,x_s\in\{0,1\}$ we have:
  \begin{equation}\label{eqn:binary-multi-product-to-max-term}
    x_1\cdot x_2 \cdots x_{s-1}\cdot x_s 
    = \hspace{1ex}\sum_{\mathclap{\emptyset\neq I\subseteq \{1,\dots,s\}}}\hspace{1ex} (-1)^{|I|+1}\max \{\, x_i\mid i\in I \,\}. 
  \end{equation}
  Since $x=\max\{x\}$, the equation holds for $s=1$. For $s=2$ and 
  $x,y\in\{0,1\}$ we observe $\max\{x,y\}=x+y-xy$,
  which implies 
  \begin{equation}\label{eqn:binary-product-to-max-term}
    xy=-\max\{x,y\}+\max\{x\}+\max\{y\}.
  \end{equation}
  Let us assume the equation holds for any set of at most $s-1$
  values. Then by induction, we can rewrite the product as
  \[
    x_1\cdot x_2 \cdots x_{s-1}\cdot x_s 
    = \hspace{1ex}\sum_{\mathclap{\emptyset\neq I\subseteq \{1,\dots,s-1\}}}\hspace{1ex} (-1)^{|I|+1}x_s\max \{\, x_i\mid i\in I\,\}.
  \]
  Using (\ref{eqn:binary-product-to-max-term}) we derive
  \[
    x_1\cdot x_2 \cdot \cdots \cdot x_{s-1}\cdot x_s 
    = \hspace{1ex}\sum_{\mathclap{\emptyset\neq I\subseteq \{1,\dots,s-1\}}}\hspace{1ex}
    (-1)^{|I|+1}(x_s+\max \{ x_i\mid i\in I \}-\max \{\, x_i\mid i\in I\cup\{s\} \,\})
  \]
  and by rearranging the sums we obtain:
  \begin{align*}
    x_1\cdot x_2 \cdots x_{s-1}\cdot x_s &                                                               \\
    =\quad &\sum_{\emptyset\neq I\subseteq \{1,\dots,s-1\}} (-1)^{|I|+1} x_s                                     \\
    +\quad &\sum_{\emptyset\neq I\subseteq \{1,\dots,s-1\}} (-1)^{|I|+1} \max \{\, x_i\mid i\in I \,\}           \\
    +\quad &\sum_{\emptyset\neq  I\subseteq \{1,\dots,s-1\}} (-1)^{|I|+2} \max \{\, x_i\mid i\in I\cup\{s\} \,\}  \\
    \\
    =\quad &\sum_{\emptyset\neq I\subseteq \{1,\dots,s-1\}} (-1)^{|I|+1} x_s                                      \\
    +\quad &\sum_{\substack{\emptyset\neq  I\subseteq \{1,\dots,s\}\\I\neq \{x_s\}}} (-1)^{|I|+1} \max \{ x_i\mid i\in I \}.
  \end{align*}
  Applying the binomial theorem yields:
  \begin{align*}
    x_1\cdot x_2 \cdot \cdots \cdot x_{s-1}\cdot x_s & \\
    =\quad &x_s + \hspace{1ex}\sum_{\mathclap{\substack{\emptyset\neq I\subseteq \{1,\dots,s\}\\I\neq \{x_s\}}}}\hspace{1ex} (-1)^{|I|+1} \max \{\,x_i\mid i\in I\,\} \\
    =\quad &\phantom{x_s \mathrel{+} }\hspace{1ex}\sum_{\mathclap{\emptyset\neq I\subseteq \{1,\dots,s\}}}\hspace{1ex} (-1)^{|I|+1} \max \{\, x_i\mid i\in I\,\}. \\
  \end{align*}
  
  We derive equation~(\ref{eqn:conj-to-disj-equiv}) as follows:
  \begin{align*}
    &\sum_{\mathclap{\sigma\in\Phi, \beta\models\sigma}} w(\sigma)\\
    =\qquad &\sum_{\mathclap{\sigma\in\Phi\setminus\{c\}, \beta\models\sigma}} w(\sigma) &+&\quad  &&\phantom{w(c)\cdot}\sum_{\mathclap{\sigma\in\{c\},\beta\models\sigma}}\quad w(\sigma)\hspace{6cm}  \\
    =\qquad &\sum_{\mathclap{\sigma\in\Phi\setminus\{c\}, \beta\models\sigma}} w(\sigma) &+&\quad  &&w(c)\cdot\prod_{\mathclap{x_i\in\var(\sigma)}}\quad\beta(x_i)\tag{see above}\\
    =\qquad &\sum_{\mathclap{\sigma\in\Phi\setminus\{c\}, \beta\models\sigma}} w(\sigma) &+&\quad  &&w(c)\cdot\sum_{\mathclap{\emptyset\neq X\subseteq \var(c)}}\quad(-1)^{|X|+1}\max\{\,\beta(x)\mid x\in X\,\}\tag{by (\ref{eqn:binary-multi-product-to-max-term})}\\
    =\qquad &\sum_{\mathclap{\sigma\in\Phi\setminus\{c\}, \beta\models\sigma}} w'(\sigma) &+&\quad  &&\phantom{w(c)\cdot}\sum_{\mathclap{\emptyset\neq X\subseteq \var(c)}}\quad w'(c_X)\max\{\,\beta(x)\mid x\in X\,\}\tag{Def. $c_X$}\\
    =\qquad &\sum_{\mathclap{\sigma\in\Phi', \beta\models\sigma}} w'(\sigma).\tag*{\qedhere}
  \end{align*}
\end{proof}

\begin{corollary}
  $\PLang[\alpha]{$d$-max-cnf}$ is $\Class{W[1]}$-hard for $d\geq 2$.
\end{corollary}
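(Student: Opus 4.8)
The plan is to observe that the corollary is an immediate consequence of Lemma~\ref{lemma:max-monotone-cnf-to-dnf}. Monotonicity is only a syntactic restriction on which literals are allowed to occur; it does not change the clause sizes. Hence every instance $(\phi,w,\alpha)$ of $\PLang[\alpha]{$d$-max-monotone-cnf}$ is, verbatim, also an instance of $\PLang[\alpha]{$d$-max-cnf}$ with the same weight function, the same clause-size bound $d$, and the same target $\alpha$, and it is a yes-instance of one problem exactly when it is a yes-instance of the other. Therefore the identity map is a polynomial-time many-one reduction from $\PLang[\alpha]{$d$-max-monotone-cnf}$ to $\PLang[\alpha]{$d$-max-cnf}$ that leaves the parameter $\alpha$ untouched, so in particular it is an fpt-reduction.

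Since $\PLang[\alpha]{$d$-max-monotone-cnf}$ is $\Class{W}[1]$-hard for every $d\geq 2$ by Lemma~\ref{lemma:max-monotone-cnf-to-dnf}, composing that hardness with the trivial reduction above yields the claimed $\Class{W}[1]$-hardness of $\PLang[\alpha]{$d$-max-cnf}$ for all $d\geq 2$.

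There is no genuine obstacle here; the only point worth checking is that the bound $d$ is preserved along the whole chain of reductions that underlies Lemma~\ref{lemma:max-monotone-cnf-to-dnf}. Tracing it back, one starts from $\PLang[k]{independent-set}$, reduces to $\PLang[\alpha]{$d$-max-monotone-dnf}$ via Lemma~\ref{lemma:max-dnf} (which produces only clauses of size at most $2$, hence at most $d$ for any $d\geq 2$), and then to $\PLang[\alpha]{$d$-max-monotone-cnf}$ via the construction in Lemma~\ref{lemma:max-monotone-cnf-to-dnf} (where each conjunction on at most $d$ variables is replaced by disjunctions $c_X$ over nonempty $X\subseteq\var(c)$, each of size $|X|\leq d$). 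Thus the clause-size bound $d$ survives, and the statement holds for every fixed $d\geq 2$ as asserted.
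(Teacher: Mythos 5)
Your proof is correct and is exactly the argument the paper intends: the corollary follows from Lemma~\ref{lemma:max-monotone-cnf-to-dnf} because every monotone \textsc{$d$-cnf} instance is already a \textsc{$d$-cnf} instance, so the identity map is a parameter-preserving reduction. Your extra check that the clause-size bound $d$ survives the chain of reductions is a sensible verification and agrees with the constructions in Lemmas~\ref{lemma:max-dnf} and~\ref{lemma:max-monotone-cnf-to-dnf}.
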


\subsection{Finding an Assignment with a Certain Weight}
We finish this section with a variant of the problem in which we seek
an assignment of a certain absolute value. In detail,
$\PLang[\alpha,d]{exact-abs-monotone-dnf}$ asks, given a weighted
propositional formula in disjunctive normal form, whether there is an
assignment such that the absolute value of the weighted sum is
\emph{exactly}~$\alpha$ (and not \emph{at least~$\alpha$}).

\begin{lemma}\label{lemma:exact-abs-monotone-dnf-np}
  $\Lang{$d$-exact-abs-monotone-dnf}$ is $\Class{NP}$-hard for any constant~$d\geq 2$ and $\alpha = 0$.
\end{lemma}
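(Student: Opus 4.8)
The plan is to reduce from a classic $\Class{NP}$-hard partition-type problem — the natural candidate is $\Lang{subset-sum}$ with target $t$ (or equivalently $\Lang{partition}$), which already captures the "hit a value exactly" flavour we need for $\alpha = 0$. Given positive integers $s_1,\dots,s_n$ and a target $t$, I want to build a monotone DNF formula $\phi$ with clauses of size at most $2$ and an integer weight function $w$ such that some assignment achieves a weighted sum of satisfied clauses equal to exactly $0$ iff some subset $S\subseteq\{1,\dots,n\}$ has $\sum_{i\in S}s_i = t$. The first step is the gadget design: introduce a variable $x_i$ for each $i$ and a single clause $(x_i)$ of weight $s_i$, plus one further clause of weight $-t$ that is satisfied unconditionally — but "unconditionally satisfied" needs care in a DNF with size-$\le 2$ monotone clauses, so I would instead add a dedicated variable $z$ with a clause $(z)$ of weight $-t$ and argue that the optimal/target-hitting assignment always sets $z$ to \emph{true} (if it does not, the sum is $\sum_{i\in S}s_i\ge 0$, which equals $0$ only for $S=\emptyset$, a case we can rule out by assuming $t>0$ and, if necessary, padding the instance). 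With $z$ \emph{true}, the total weight under an assignment selecting $S=\{i : \beta(x_i)=\mathit{true}\}$ is exactly $\sum_{i\in S}s_i - t$, which is $0$ iff $S$ is a valid subset-sum solution.

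The second step is to verify the two directions of the equivalence carefully. The forward direction is immediate: a valid $S$ yields the assignment above with weighted sum $0$, hence absolute value $0 = \alpha$. The backward direction requires the little case analysis mentioned above: if $\beta$ achieves absolute value $0$, then the weighted sum is literally $0$; setting $z$ to \emph{true} (which we may assume, handling the $z$-false case as noted), the sum equals $\sum_{i\in S}s_i - t = 0$, so $S$ witnesses a yes-instance of $\Lang{subset-sum}$. I should double-check that $\phi$ is genuinely monotone (all literals positive — yes, every clause is a single positive variable) and that $d = 2$ suffices; in fact $d = 1$ already works, so the statement for every constant $d\ge 2$ follows a fortiori (pad with a dummy size-$2$ clause of weight $0$ if one insists on clauses that actually use two literals). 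The reduction is clearly polynomial-time, since we produce $n+1$ clauses with weights bounded by $\max_i s_i + t$.

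The main obstacle I anticipate is not the arithmetic but the modelling subtlety that DNF clauses are conjunctions which may fail to be satisfied, so there is no literal-free "constant" clause; the $-t$ offset must be carried by an honest clause, and I must argue that an absolute-value-$0$ witness can always be normalised so that this offset clause is active. The clean way to force this is to make the penalty clause's variable appear nowhere else and observe that toggling $z$ from \emph{false} to \emph{true} changes the sum by exactly $-t<0$: so among the two choices of $z$, if the sum is $0$ with $z$ \emph{false} then it is $-t$ with $z$ \emph{true} and $t$ with... — here I need to be slightly more careful and instead argue directly that \emph{some} assignment hits $0$ iff a subset-sum solution exists, by noting that for any fixed $S$ the two achievable sums are $\sum_{i\in S}s_i$ and $\sum_{i\in S}s_i - t$, and $0$ is among all achievable sums iff some $S$ gives $\sum_{i\in S}s_i \in \{0,t\}$; since $s_i>0$, $\sum_{i\in S}s_i=0$ only for $S=\emptyset$ (handled by assuming the $\Lang{subset-sum}$ instance has $t>0$ and no $s_i = t$ trivially, or by a padding argument), leaving exactly the condition $\sum_{i\in S}s_i = t$. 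Once this normalisation is pinned down, the rest is routine.
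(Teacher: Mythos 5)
There is a genuine gap, and it is exactly at the point you flagged as a ``modelling subtlety'' and then tried to work around. In a monotone \textsc{dnf} in which every clause contains at least one literal, the all-false assignment satisfies no clause at all, so its weighted sum is $0$ and its absolute value is exactly $\alpha=0$. Hence \emph{every} instance your reduction produces is a yes-instance of $\Lang{$d$-exact-abs-monotone-dnf}$ with $\alpha=0$, regardless of whether the \Lang{subset-sum} instance has a solution: the case $S=\emptyset$ with $z=\mathit{false}$ cannot be ``ruled out by assuming $t>0$,'' because it is a property of the formula, not of the numbers $s_1,\dots,s_n,t$. No padding by further non-empty monotone clauses can repair this, since the all-false assignment still satisfies none of them. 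Your own case analysis actually exposes the problem ($0$ is among the achievable sums iff some $S$ gives $\sum_{i\in S}s_i\in\{0,t\}$, and $S=\emptyset$ always gives $0$), but the proposed fixes do not close it.

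The device you rejected is precisely the one that makes the argument go through: the \emph{empty} clause, i.\,e.\ the empty conjunction, which is monotone, has size $0\leq d$, and is satisfied by every assignment. The paper first observes that $\Lang{$d$-exact-max-monotone-dnf}$ is $\Class{NP}$-hard (reusing the reduction of Lemma~\ref{lemma:max-dnf} from \Lang{independent-set}, which is exact because independent sets of size $\geq k$ contain ones of size $=k$), and then adds the empty clause with weight $-\alpha$ so that the unconditional offset is carried honestly and the all-false assignment gets value $-\alpha\neq 0$. Your choice of \Lang{subset-sum} as the source problem would be perfectly serviceable with the same device (one unit clause $(x_i)$ of weight $s_i$ per item plus the empty clause of weight $-t$, with $t>0$), but as written, with the offset attached to a fresh variable $z$, the reduction does not distinguish yes- from no-instances.
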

\begin{proof}
  Since $\PLang[\alpha]{$d$-max-monotone-dnf}$ is $\Class{W[1]}$-hard
  by Lemma~\ref{lemma:max-dnf}, it is not surprising that the problem
  $\Lang{$d$-exact-max-monotone-dnf}$ is $\Class{NP}$-hard. We just
  use the same polynomial-time-computable reduction, which is possible
  since the independent set problem is monotone in the sense that the
  existence of a solution of size $\geq k$ implies the existence of an
  independent set of size $=k$.
    
  We can turn a $\Lang{$d$-exact-max-monotone-dnf}$ instance into a
  \Lang{$d$-exact-abs-monotone-dnf} instance: Just add the empty
  clause with weight $-\alpha$ and set the target value to $0$. Every
  assignment of weight $\alpha$ in the original instance will have
  weight $0$, as the empty clause is always satisfied. On the other
  hand, any weight-$0$ assignment in the new formula has to satisfy
  clauses of weight exactly $\alpha$ to compensate for the empty
  clause.
\end{proof}

\begin{corollary}\label{corollary:exact-absdnf-para-np}
  $\PLang[\alpha]{$d$-exact-abs-monotone-dnf}$ is $\Para\Class{NP}$-hard.
\end{corollary}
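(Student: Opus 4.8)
The plan is to derive this directly from Lemma~\ref{lemma:exact-abs-monotone-dnf-np} by invoking the standard characterization of $\Para\Class{NP}$-hardness: a parameterized problem is $\Para\Class{NP}$-hard as soon as the classical problem obtained by fixing the parameter to a single constant value is $\Class{NP}$-hard (equivalently, as soon as some $\Class{NP}$-hard classical problem fpt-reduces to it via a reduction whose output parameter is a constant). So there is essentially nothing to do beyond recognizing that Lemma~\ref{lemma:exact-abs-monotone-dnf-np} already exhibits such a slice.

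Concretely, I would first recall that in $\PLang[\alpha]{$d$-exact-abs-monotone-dnf}$ the number $d$ is treated as a fixed constant and the parameter is $\alpha$. Fixing $d=2$, I then look at the slice $\alpha=0$, i.e.\ the classical problem asking, for a monotone formula in disjunctive normal form with clauses of size at most two together with a weight function, whether there is an assignment for which the absolute value of the weighted sum of satisfied clauses equals exactly~$0$. By Lemma~\ref{lemma:exact-abs-monotone-dnf-np} this classical problem is $\Class{NP}$-hard. Moreover, the reduction underlying that lemma (compose the reduction of Lemma~\ref{lemma:max-dnf} from $\Lang{independent-set}$ with the step ``add the empty clause of weight $-\alpha$ and reset the target to $0$'') already outputs instances whose target value is exactly~$0$; hence it is a polynomial-time, and therefore an fpt-, reduction from an $\Class{NP}$-hard problem into a constant-parameter slice of $\PLang[\alpha]{$d$-exact-abs-monotone-dnf}$, which is precisely what $\Para\Class{NP}$-hardness requires. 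The same argument applies verbatim for every constant $d\geq 2$.

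I do not expect any obstacle here: all the combinatorial content sits in Lemma~\ref{lemma:exact-abs-monotone-dnf-np}, and the corollary merely repackages it in the language of parameterized complexity. The only point worth making explicit in the write-up is the definition of $\Para\Class{NP}$-hardness being used, so that it is transparent why pinning $\alpha$ to $0$ already suffices and no genuinely parameterized construction is needed.
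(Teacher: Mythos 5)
Your proposal is correct and follows essentially the same route as the paper: both invoke the slice characterization of $\Para\Class{NP}$-hardness (the paper cites the version with finitely many slices, of which your single-slice $\alpha=0$ argument is the relevant special case) and then appeal to Lemma~\ref{lemma:exact-abs-monotone-dnf-np} for the $\Class{NP}$-hardness of that slice. No gap; the extra remark that the underlying reduction already outputs target value $0$ is a harmless elaboration of what the lemma already guarantees.
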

\begin{proof}
  A parameterized problem is $\Para\Class{NP}$-hard if there are
  finitely many slices whose union is $\Class{NP}$-hard~\cite[Theorem
  2.14]{FlumG06}. The claim follows as already the slice $\alpha=0$
  alone is $\Class{NP}$-hard by
  Lemma~\ref{lemma:exact-abs-monotone-dnf-np}.
\end{proof}


\section{Satisfiability Based Optimization with Absolute Value Function}\label{section:absdnf}

In the previous section we showed that various \textsc{sat}-based
optimization problems are presumably not in $\Class{FPT}$ with respect
to the natural parameter~$\alpha$. There are various ways to deal with
these negative results. For instance, we could restrict ourselves to
structured instances~--~such as formulas of bounded
incidence-treewidth~\cite{DellKLMM17}. For the absolute version of the
problem, we fortunately do not need such a strong structural
parameter. Instead, we provide a reduction to an auxiliary hypergraph
problem for which we present a polynomial kernel. This results in the
main theorem of this section:

\absdnf*

It is well-known that various optimization problems over \textsc{cnf}s
and \textsc{dnf}s are equal to problems on hypergraphs. For instance,
it is easy to see that the question of whether a given \emph{monotone}
\textsc{$d$-cnf} can be satisfied by setting at most $k$ variables to
\emph{true} equals the hitting set problem on $d$-hypergraphs. Notice that
``monotone'' here is a crucial key word, as this property allows us to
encode clauses as edges. Such an approach does \emph{not} work
directly for $\PLang[\alpha]{$d$-max-dnf}$. Fortunately,
$\PLang[\alpha]{$d$-abs-dnf}$ reduces to its monotone version:

\begin{lemma}\label{lemma:abs-dnf-to-abs-monotone-dnf}
  $\PLang[\alpha]{$d$-abs-dnf}\leq\PLang[\alpha]{$d$-abs-monotone-dnf}$
\end{lemma}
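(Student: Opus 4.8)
The plan is to show that an arbitrary DNF instance can be rewritten, clause by clause, into an equivalent monotone one while keeping the clause size bounded by $d$ and the target $\alpha$ unchanged. First I would introduce, for every variable $x$ that occurs negatively somewhere in $\phi$, a fresh variable $\bar x$ intended to carry the value $\neg x$. The natural idea is to replace each negative literal $\neg x$ by the positive literal $\bar x$. This alone is not sound, because nothing forces $\bar x$ to be the complement of $x$: an assignment could set both $x$ and $\bar x$ to \emph{false}, satisfying neither, whereas exactly one of $\neg x$, $x$ should be true. So the real work is to add gadget clauses that make any "profitable" assignment effectively satisfy the constraint $\bar x \neq x$.

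The key trick is the weighting argument already used in Lemma~\ref{lemma:max-dnf}: introduce, for each such variable $x$, the two singleton clauses $(x)$ and $(\bar x)$, together with the conjunction $(x \wedge \bar x)$, and choose weights so that (i) an assignment with $x = \bar x$ can always be locally improved to one with $x \neq \bar x$ without changing the weighted sum, or (ii) such an assignment contributes a constant offset that is identical for all assignments and hence cancels. Concretely I would give $(x)$ and $(\bar x)$ a common weight $W$ and $(x\wedge \bar x)$ weight $-W$ for a suitably large $W$; then setting both to \emph{true} contributes $W + W - W = W$, setting exactly one to \emph{true} contributes $W$, and setting both to \emph{false} contributes $0$. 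By taking $W$ strictly larger than the total absolute weight of all original clauses, no optimal (or even near-optimal) solution would ever leave both $x$ and $\bar x$ false; hence we may assume exactly one of them is true, i.e. $\bar x = \neg x$, and the contribution of the gadget is the fixed constant $\sum_x W$, which I would absorb by a single clause of weight $-\sum_x W$ added on top — or, more cleanly, just observe that it shifts every achievable sum by the same amount and argue the equivalence directly on sums (being careful that the equivalence must be preserved under the \emph{absolute value}, which it is because we translate by a constant that we also subtract). Since only singletons and pairs are added, the clause-size bound $d$ (for $d \geq 2$) is respected, and $\alpha$ is untouched, so this is an fpt-reduction — in fact a polynomial-time many-one reduction.

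I would then close by verifying the two directions of the equivalence: given an assignment for $\phi$, extend it by $\bar x := \neg x$ and check that the weighted sum over the monotone formula equals the original sum plus the constant offset; conversely, given any assignment for the monotone formula, first normalise it so that $\bar x \neq x$ for every $x$ (this only increases $|{\cdot}|$, using the size of $W$), then read off an assignment for $\phi$ with the matching sum. The main obstacle I anticipate is the interaction with the absolute value: a translation-by-constant argument is transparent for "sum $\geq \alpha$" but needs a short extra remark for "$|\text{sum}| \geq \alpha$", and the cleanest fix is to make the offset exactly $0$ by pairing the $+W$ gadget contribution against a compensating clause, so that the monotone instance realises exactly the same multiset of attainable weighted sums as the original, after which the equivalence for the absolute-value question is immediate. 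The remaining details — bounding $W$, confirming clause sizes, and the routine case analysis of the local improvement step — are exactly of the flavour already carried out in the proof of Lemma~\ref{lemma:max-dnf} and Theorem~\ref{theorem:abs-dnf-np}, so I would keep them brief and refer back to those arguments where possible.
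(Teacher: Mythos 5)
Your overall strategy---fresh complement variables $\bar x$ plus a weighted gadget---has two genuine gaps, and the second is fatal to any ``large penalty weight'' scheme for this particular problem. First, the gadget you describe does not enforce $\bar x=\neg x$: with weights $W,W,-W$ on $(x)$, $(\bar x)$, $(x\wedge\bar x)$, the configurations ``both true'' and ``exactly one true'' both contribute $W$, so ruling out ``both false'' only yields ``at least one true''. An assignment with $x=\bar x=\mathit{true}$ can then simultaneously satisfy a clause and the clause derived from its negated sibling (e.g.\ $(x\wedge y)$ and $(\bar x\wedge y)$, standing for $(x\wedge y)$ and $(\neg x\wedge y)$), producing sums that no assignment of the original formula attains; soundness fails. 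Second, and more fundamentally, the normalisation step ``no near-optimal solution deviates, because the penalty $W$ is huge'' is incompatible with the absolute-value objective: a deviation that shifts the sum by $\pm W$ is not a suboptimal solution, it is a \emph{witness}, since $|\cdot|\geq\alpha$ then holds trivially. Concretely, your proposed fix of adding a compensating clause of weight $-\sum_x W$ to zero the offset means that any assignment setting one pair $x,\bar x$ both to false already achieves absolute value about $W\geq\alpha$, turning every instance into a yes-instance. You sense this difficulty in your last paragraph, but the fix you sketch does not resolve it.

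The paper's proof avoids new variables and large weights entirely. For a clause $c$ containing a negative literal $\overline{x}$, it replaces $c$ by $c^{+}$ (which is $c$ with $\overline{x}$ deleted) of weight $w(c)$ and by $c^{-}=c^{+}\wedge x$ of weight $-w(c)$. An assignment satisfying $c$ satisfies $c^{+}$ but not $c^{-}$ (net contribution $w(c)$); an assignment falsifying $c$ only because $x$ is true satisfies both (net contribution $0$); every other assignment satisfies neither. Hence the weighted sum of \emph{every} assignment is preserved exactly---which is the property one actually needs under an absolute-value objective---the clause size stays at most $d$, and iterating over all negative occurrences multiplies the number of clauses by at most $2^{d}$.
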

\begin{proof}
  Let $\phi$ be a \textsc{$d$-dnf} and
  $w\colon\clause(\phi)\rightarrow\mathbb{N}$ be a weight function on
  its clauses. We argue how we can remove one occurrence of a negative
  literal~--~an iterated application of this idea then leads to the
  claim.

  Let $c\in\clause(\phi)$ be a clause containing a negative literal
  $\overline x$. We remove $c$ from $\phi$ and add two new clauses:
  $c^+$ and $c^-$. The clause $c^+$ equals $c$ without $\overline x$,
  and we set $c^-=c^+\wedge x$. Define $w(c^+)=w(c)$ and
  $w(c^-)=-w(c)$, and observe that an assignment that satisfies~$c$
  does satisfy $c^+$ but not $c^-$ and, thus, yields the same
  value. On the other hand, every assignment that satisfies $c^-$ does
  also satisfy $c^+$ and, hence, the values cancel.
\end{proof}

Given a monotone \textsc{dnf} and a weight function on its clauses, we
encode $\PLang[\alpha]{$d$-abs-monotone-dnf}$ as a
hypergraph by introducing one vertex per variable and by representing
clauses as edges. The task then is to identify a set of vertices such
that the absolute value of the total weight of the edges appearing in
the subgraph induced by the vertices is as large as possible. Formally
this is Problem~\ref{problem:unbalancedsubgraph} from the
introduction.

\begin{corollary}\label{corollary:dnf-to-unbalanced}
  $\PLang[\alpha]{$d$-abs-dnf}\leq \PLang[\alpha]{$d$-unbalanced-subgraph}$
\end{corollary}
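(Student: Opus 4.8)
The plan is to compose the two reductions already at hand. By Lemma~\ref{lemma:abs-dnf-to-abs-monotone-dnf} we have $\PLang[\alpha]{$d$-abs-dnf}\leq\PLang[\alpha]{$d$-abs-monotone-dnf}$, so it suffices to give an fpt-reduction (in fact a polynomial-time parameter-preserving reduction) from $\PLang[\alpha]{$d$-abs-monotone-dnf}$ to $\PLang[\alpha]{$d$-unbalanced-subgraph}$; chaining the two reductions then yields the corollary, since the composition of two fpt-reductions is again an fpt-reduction.

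For the remaining reduction, I would take an instance $(\phi,w,\alpha)$ where $\phi$ is a monotone \textsc{$d$-dnf}. Build a $d$-hypergraph $H=(V,E)$ by setting $V=\var(\phi)$ and, for each clause $c\in\clause(\phi)$, adding the hyperedge $e_c=\var(c)\subseteq V$ with weight $w'(e_c)=w(c)$, carrying the same target $\alpha$. Two small technical points need to be handled explicitly. First, several clauses may have the same variable set; since edges form a set, I would instead work with a multiset of weighted edges, or equivalently merge parallel clauses by summing their weights (this does not change any induced sum). Second, the empty clause, if present, is always satisfied and corresponds to the empty edge $\emptyset$, which is a subedge of every induced subgraph; merging it into the ``always-present'' part is harmless, or one may invoke Lemma~\ref{lemma:max-dnf} style observations that the relevant instances contain no empty clause. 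With these conventions in place, the core claim is the correspondence between assignments $\beta$ and vertex sets $X=\{\,v\in V\mid \beta(v)=\mathit{true}\,\}$: because $\phi$ is monotone, a clause $c$ is satisfied by $\beta$ exactly when every variable of $c$ is true, i.e.\ exactly when $e_c\subseteq X$, i.e.\ exactly when $e_c\in E[X]$. Hence
\[
  \sum_{c\in\clause(\phi),\,\beta\models c} w(c) \;=\; \sum_{e\in E[X]} w'(e) \;=\; w'[X],
\]
and taking absolute values shows $(\phi,w,\alpha)$ is a yes-instance iff some $X\subseteq V$ has $|w'[X]|\geq\alpha$, which is precisely the $\PLang[\alpha]{$d$-unbalanced-subgraph}$ condition. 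The map $\beta\mapsto X$ is a bijection between assignments and vertex subsets, so both directions are immediate.

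I do not expect a genuine obstacle here: the reduction is a direct syntactic translation whose only subtlety is bookkeeping for repeated or empty clauses, and the parameter $\alpha$ and the clause/edge bound $d$ are preserved verbatim, so it is trivially an fpt-reduction. The real work has already been done in Lemma~\ref{lemma:abs-dnf-to-abs-monotone-dnf} (removing negative literals while preserving, by cancellation, the value of every assignment) and is deferred to the kernelization for $\PLang[\alpha]{$d$-unbalanced-subgraph}$ developed later in the section, which combined with this corollary yields Theorem~\ref{theorem:abs-dnf}.
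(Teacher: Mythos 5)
Your proposal is correct and follows exactly the paper's route: compose Lemma~\ref{lemma:abs-dnf-to-abs-monotone-dnf} with the direct encoding of a monotone \textsc{$d$-dnf} as a $d$-hypergraph (variables as vertices, clauses as weighted hyperedges), so that satisfied clauses correspond precisely to induced edges. The paper leaves this second step as a one-paragraph remark without spelling out the bookkeeping for duplicate or empty clauses, which you handle explicitly; otherwise the arguments coincide.
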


\subsection{A Polynomial Kernel for Unbalanced Subgraph}

In this section we develop a polynomial kernel for
$\PLang[\alpha]{$d$-unbalanced-subgraph}$. The kernelization consists
of a collection of \emph{reduction rules} which we just call
\emph{rules.} Every rule obtains as input an instance of
$\PLang[\alpha]{$d$-unbalanced-subgraph}$ and either (i)~is not
applicable, or (ii)~outputs an equivalent smaller instance. The rules
are numbered and we assume that whenever we invoke a rule, all rules
with smaller numbers have already been applied exhaustively.  It will
always be rather easy to show that the rules can be implemented in
polynomial time, so we only prove that they are safe. To get started,
we consider the following simple rules, where $\Delta$ is the maximum
vertex degree in the hypergraph:

\begin{reductionrule}\label{rule:isolated-vertices}
  If $H$ contains an isolated vertex $v$, delete $v$.
\end{reductionrule}

\begin{reductionrule}\label{rule:edges-without-weights}
  If $H$ contains an edge $e$ with $w(e)=0$, delete $e$.
\end{reductionrule}

\begin{reductionrule}\label{rule:degree-rule}
  If $H$ has at least $2\alpha\cdot d^3\Delta^2$ vertices,
  reduce to a trivial yes-instance.
\end{reductionrule}

\begin{lemma}\label{lemma:rule-three-safe}
  Rule~\ref{rule:isolated-vertices}--\ref{rule:degree-rule} are safe.
\end{lemma}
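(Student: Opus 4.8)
The plan is to dispatch Rules~\ref{rule:isolated-vertices} and~\ref{rule:edges-without-weights} in a sentence each and to concentrate on Rule~\ref{rule:degree-rule}. For the first two rules the observation is that neither operation changes the value $w[X]$ for any $X\subseteq V$: an isolated vertex lies in no edge, so $E[X]=E[X\setminus\{v\}]$ and hence $w[X]=w[X\setminus\{v\}]$ (so $X$ is a witness for $H$ iff $X\setminus\{v\}$ is a witness for $H-v$), and an edge of weight~$0$ contributes nothing to any sum $w[X]$. Thus yes-instances are preserved in both directions, and both rules are safe.

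The real work is Rule~\ref{rule:degree-rule}: I must show that if $|V|\ge 2\alpha d^{3}\Delta^{2}$ then the instance is \emph{already} a yes-instance, i.e.\ some $X\subseteq V$ has $|w[X]|\ge\alpha$. Since Rules~\ref{rule:isolated-vertices} and~\ref{rule:edges-without-weights} have been applied exhaustively, $H$ has no isolated vertex and no weight-$0$ edge; in particular $|w(e)|\ge 1$ for all $e$, and $|V|\le\sum_{v\in V}\deg(v)=\sum_{e\in E}|e|\le d\,|E|$, so $|E|\ge |V|/d$ (I also assume $H$ has no empty edge, equivalently $\phi$ has no empty clause). The idea is to extract a large family of pairwise ``widely separated'' edges. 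Consider the intersection graph of $H$: its vertices are the edges of $H$, and two are adjacent iff they share a vertex; as an edge has $\le d$ vertices each lying in $\le\Delta$ edges, its maximum degree is $<d\Delta$, so the ball of radius $2$ around any edge has at most $d^{2}\Delta^{2}$ edges. Build $M\subseteq E$ greedily: repeatedly pick any as-yet-unmarked edge $e$, add $e$ to $M$, and mark every edge within distance $2$ of $e$. Each round marks $\le d^{2}\Delta^{2}$ edges, so it runs for $\ge |E|/(d^{2}\Delta^{2})\ge |V|/(d^{3}\Delta^{2})\ge 2\alpha$ rounds, giving $|M|\ge 2\alpha$. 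Moreover no edge of $H$ meets two distinct members of $M$: if $f$ met $e_i,e_j\in M$ with $e_i$ chosen first, then $e_j$ is at distance $\le 2$ from $e_i$ and was marked when $e_i$ was picked, a contradiction.

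Next I turn $M$ into a witness. For each $e\in M$, consider $H[e]$, the hypergraph induced on the $\le d$ vertices of $e$; it contains the edge $e$ with $|w(e)|\ge 1$. I claim there is a $Y_e\subseteq e$ with $w[Y_e]\neq 0$: otherwise $w[Y]=0$ for every $Y\subseteq e$; pick a size-minimal $Y\subseteq e$ containing some edge (possible since $e$ qualifies), and note minimality forces that edge to be $Y$ and forbids any edge strictly inside $Y$, so $w[Y]=w(Y)\neq 0$ by Rule~\ref{rule:edges-without-weights} — contradicting $w[Y]=0$. As $w[Y_e]$ is a nonzero integer, $|w[Y_e]|\ge 1$. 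Let $M'\subseteq M$ be a largest subfamily on which $\operatorname{sign}(w[Y_e])$ is constant, so $|M'|\ge|M|/2\ge\alpha$, and set $X=\bigcup_{e\in M'}Y_e$. By the separation property, every (nonempty) edge $f\subseteq X$ meets exactly one $e\in M'$ and therefore satisfies $f\subseteq Y_e$; hence $E[X]$ is the disjoint union of the sets $E[Y_e]$ and $w[X]=\sum_{e\in M'}w[Y_e]$. All summands share one sign and have absolute value $\ge 1$, so $|w[X]|\ge|M'|\ge\alpha$; the instance is a yes-instance, and Rule~\ref{rule:degree-rule} is safe.

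The main obstacle is Rule~\ref{rule:degree-rule}, and inside it the delicate point is that a lone edge is not a usable local witness: a positively weighted edge may contain negatively weighted sub-edges that cancel it (in fact every nonzero ``local'' weight on its vertex set can be negative). This is why the argument first \emph{separates} edges so that the subgraph induced on their combined vertex sets splits into independent local pieces $H[e]$, then pulls a $\pm$-valued witness out of each piece, and finally balances the signs by a majority argument; the only quantitative ingredients are the $d^{2}\Delta^{2}$ size of radius-$2$ balls, the bound $|E|\ge |V|/d$, and the single factor $2$ lost in the majority step, which together make the threshold $2\alpha d^{3}\Delta^{2}$.
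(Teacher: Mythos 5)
Your proof is correct and follows essentially the same route as the paper's: the same greedy extraction of a radius-$2$-separated, self-induced packing $M$ with $|M|\ge |E|/(d^2\Delta^2)\ge 2\alpha$, followed by the same majority-sign argument. The only (harmless) deviation is how sub-edges are neutralized --- the paper picks inclusion-minimal edges so that $E[X]$ is exactly the chosen sign class, whereas you extract a nonzero-weight witness $Y_e\subseteq e$ inside each picked edge; both proofs likewise quietly set aside the degenerate empty hyperedge.
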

\begin{proof}
  For the first two rules observe that neither isolated vertices nor
  weight-$0$ edges can contribute to any solution and, hence, may be
  discarded.
   
  For Rule~\ref{rule:degree-rule} we argue that $H$ contains a set
  $X\subseteq V$ with $|w[X]|\geq\alpha$. Consider a maximal set
  $M\subseteq E$ such that (i) all edges in $M$ are pairwise disjoint
  and (ii) $H\big[\cup_{e\in M}e\big]=\big(\cup_{e\in M}e,M\big)$,
  that is, $M$ is a set packing that induces itself. Such a set always
  exists and can also be computed as the following greedy algorithm
  shows: Start with $M=\emptyset$ and repeatedly pick a remaining
  nonempty edge~$e$ from~$H$ that is no superset of any edge in $H$
  (except $\emptyset$) and add it to $M$. Since
  Rule~\ref{rule:isolated-vertices} has already removed isolated
  vertices, there must be such an edge as long as $V$ is not
  empty. After picking an edge $e$, remove the set~$E_e$ of all edges
  that intersect $e$. Observe that $|E_e|\leq d\Delta$. Clearly, if we
  just remove~$E_e$, the algorithm would already generate a maximal
  set packing~--~however, it would not necessarily be
  self-induced. Let therefore
  $V_e=\{\,v\in V\mid(\exists e'\in E_e)(v\in e')\,\}$ be the set of
  vertices covered by the to-be-removed edges. Since every edge has
  size at most $d$, we have $|V_e|\leq d^2\Delta$. To ensure that $M$
  will be self-induced, we remove all edges that contain a vertex of
  $V_e$~--~which are at most $d^2\Delta^2$ edges as $H$ has maximum
  degree~$\Delta$.

  For every edge that is picked to be in $M$, at most $d^2\Delta^2$
  edges are removed from $|E|$. Therefore we have
  $|M|\geq |E|/\big(d^2\Delta^2\big)$. Furthermore, we have
  $|E|\geq |V|/d$ because Rule~\ref{rule:isolated-vertices} has
  already removed isolated vertices. Given that
  $|V|\geq 2\alpha\cdot d^3\Delta^2$, we deduce $|M|\geq
  2\alpha$. There are at least $\alpha$ edges in $M$ that have the
  same sign (no edge has a zero-weight by
  Rule~\ref{rule:edges-without-weights}). Therefore, the set $X$ of
  vertices induced by the larger set satisfies $|w[X]|\geq\alpha$.
\end{proof}

The rules yield a kernel for the combined parameter
$\alpha+\Delta$, which means that an exhaustive application of these
rules leaves an instance of size bounded
in $\alpha$ and $\Delta$.

\begin{corollary}\label{corollary:degree-kernel}
  $\PLang[\alpha,\Delta]{$d$-unbalanced-subgraph}$ has a kernel with
  $O(\alpha\cdot\Delta^2)$ vertices.
\end{corollary}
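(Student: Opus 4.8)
The plan is to observe that an exhaustive application of Rules~\ref{rule:isolated-vertices}--\ref{rule:degree-rule} already constitutes the claimed kernelization, so that essentially nothing remains beyond bookkeeping. First I would settle termination and running time: each application of Rule~\ref{rule:isolated-vertices} or Rule~\ref{rule:edges-without-weights} strictly decreases $|V|+|E|$, and an application of Rule~\ref{rule:degree-rule} ends the procedure at once; hence after at most $|V|+|E|$ rounds no rule is applicable. Since each rule can clearly be tested and carried out in polynomial time (checking degrees, scanning weights, comparing $|V|$ against $2\alpha\cdot d^{3}\Delta^{2}$), the whole procedure runs in polynomial time, and by Lemma~\ref{lemma:rule-three-safe} it outputs an equivalent instance.

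It then remains to bound the number of vertices of the output. If the procedure stopped because Rule~\ref{rule:degree-rule} fired, the output is a fixed trivial yes-instance, which I would make concrete as, say, a single edge $\{v\}$ of weight $\alpha$ with target $\alpha$; this has $O(1)$ vertices. Otherwise the procedure halted because \emph{no} rule applies, so in particular Rule~\ref{rule:degree-rule} is not applicable, meaning the reduced hypergraph has fewer than $2\alpha\cdot d^{3}\Delta^{2}$ vertices. Treating $d$ as the constant it is for $\PLang[\alpha,\Delta]{$d$-unbalanced-subgraph}$, this is $O(\alpha\cdot\Delta^{2})$. Finally I would remark that deleting vertices and edges can only lower the maximum degree, so the parameter $\Delta$ of the reduced instance never exceeds that of the input and the stated bound is in terms of the original parameters.

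I do not expect a genuine obstacle here: all the real work lies in Lemma~\ref{lemma:rule-three-safe}, i.e.\ in showing that the rules are safe. The only points requiring any care are verifying that the cascade of deletions (a weight-$0$ edge deletion may create a new isolated vertex, and vice versa) terminates in polynomial time --- which it does, since $|V|+|E|$ is a monovariant --- and pinning down a concrete constant-size realization of the ``trivial yes-instance'' invoked in Rule~\ref{rule:degree-rule}. Note that this argument only yields a kernel for the \emph{combined} parameter $\alpha+\Delta$; removing the dependence on $\Delta$ will require the further reduction rules developed afterwards.
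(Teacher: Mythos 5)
Your proposal is correct and matches the paper, which gives no separate proof of this corollary but derives it exactly as you do: exhaustively apply Rules~\ref{rule:isolated-vertices}--\ref{rule:degree-rule} (safe by Lemma~\ref{lemma:rule-three-safe}), and observe that once Rule~\ref{rule:degree-rule} is inapplicable the instance has fewer than $2\alpha\cdot d^{3}\Delta^{2}=O(\alpha\cdot\Delta^{2})$ vertices since $d$ is a constant. Your additional remarks on termination via the monovariant $|V|+|E|$ and on $\Delta$ not increasing are correct bookkeeping the paper leaves implicit.
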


If we only consider $\alpha+d$ as parameter, rules~1--3 are not
applicable, and the graph is still large, we know that the input graph
has no isolated vertices, all edges have a non-zero weight, and we
know that there is at least one vertex of high degree. Our final
reduction rule states that in this case we can as well reduce to a
trivial yes-instance, since the high degree vertex induces a
solution. Intuitively, this is similar to an application of the
well-known Sunflower Lemma~\cite{ErdosR1960}, which states that large
enough hypergraphs contain a set of hyperedges that pairwise all
intersect in the same set of vertices (i.\,e., the ``hypergraph
version'' of a high-degree vertex). Unfortunately, a direct
application of the lemma is tricky as other hyperedges may intersect
arbitrarily with such sunflowers. To circumnavigate this
difficulties, we need objects that are more like a \emph{self-induced
  sunflower}. This leads to the following reduction rule; details of
these concepts are postponed to the proof of
Lemma~\ref{lemma:rule-four-safe}.

\begin{reductionrule}\label{rule:subedge-rule}
  Let $g(i)=(i^{i}2\alpha\cdot 2^{2^d})^{2^i-1}$ for $i>0$ and
  $g(0)=1$. If $H$ contains a subedge~$c\subseteq e\in E$ such that
  (i) $|\link(c)|\geq g(i)$ where $i=d-|c|$ and (ii) for every
  superset~$f\supsetneq c$ it holds that $|\link(f)|< g(j)$ where
  $j=d-|f|$, then reduce to a trivial yes-instance.
\end{reductionrule}

Before we prove the safeness of the rule in the general case, let us
quickly sketch why the rule is correct for 2-uniform hypergraphs,
i.\,e., normal graphs. A vertex $v$ with $\deg(v)\geq
4\alpha$ is incident to at least $2\alpha$ edges of the same sign. For the
sake of an example let there be a set $P(v)\subseteq N(v)$ with $|P(v)|\geq2\alpha$
and $w(\{v,u\})>0$ for all $u\in P(v)$. Either $P(v)$ is a solution
(i.\,e., $|w[P(v)]|>\alpha$) or $P(v)\cup\{v\}$ is a solution (since
$w[P(v)\cup\{v\}]\geq 2\alpha-|w[P(v)]|\geq\alpha$).

\begin{lemma}\label{lemma:rule-four-safe}
  Rule~\ref{rule:subedge-rule} is safe.
\end{lemma}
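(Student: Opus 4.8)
The plan is to prove the rule safe by showing that, under its hypotheses, the hypergraph $H$ already contains a set $X$ with $|w[X]|\ge\alpha$, so that reducing to a trivial yes-instance is correct. As usual we may assume Rules~\ref{rule:isolated-vertices}--\ref{rule:edges-without-weights} have been applied exhaustively, so $H$ has no isolated vertices and no weight-zero edges. Write $i=d-|c|$; since $\link(c)\neq\emptyset$ we have $i\ge 1$. Condition~(ii) says precisely that $c$ is a subedge whose link reaches the threshold $g(i)$ while every strict extension of $c$ stays below its own threshold; the consequence we keep using is that for every $f\supsetneq c$ the link $\link(f)$ has fewer than $g(d-|f|)\le g(i-1)$ edges. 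The overall strategy mirrors the $2$-uniform sketch preceding the lemma: first turn the large link of $c$ into a large \emph{self-induced sunflower} with core $c$ --- a family $F\subseteq\link(c)$ whose petals $p_e:=e\setminus c$ are pairwise disjoint and such that no edge of $H$ lying inside $Z:=c\cup\bigcup_{e\in F}p_e$ meets two different petals --- and then read off a solution from it by a case distinction.

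For the extraction I would argue by induction on $i$, in the spirit of the Sunflower Lemma~\cite{ErdosR1960} but working with the self-induced variant. Condition~(ii) first guarantees that every vertex $v$ lies in fewer than $g(i-1)+1$ edges of $\link(c)$ (an edge of $\link(c)$ containing $v$ contains $c\cup\{v\}$, hence lies in $\link(c\cup\{v\})\cup\{c\cup\{v\}\}$), so the petals are spread out and a simple greedy procedure extracts an ordinary sunflower $F_0\subseteq\link(c)$ with core $c$ of size at least roughly $g(i)/(i\,g(i-1))$. The second, more delicate step is to prune $F_0$ to a self-induced sunflower, i.e.\ one in which $Z$ contains no edge spanning two petals. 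Condition~(ii) is used again: an edge spanning two petals can, via the vertices it shares with those petals, be charged to the link of some strict extension of $c$, so ``cross'' edges are scarce, and deleting one petal per cross edge (recursing on the petal structure when a single petal is hit by too many of them) still leaves a self-induced sunflower of size at least roughly $g(i-1)$ times a large factor. The function $g$ is calibrated exactly so that each of the $i$ recursion levels costs at most a factor $g(i-1)$ together with polynomial-in-$(i,2^d)$ overhead --- this is the content of the estimate $g(i)\ge i^{i}\cdot 2\alpha\cdot 2^{2^d}\cdot g(i-1)^2$ implicit in its definition --- the depth-$i$ recursion producing the exponent $2^{i}-1$, and the leftover factor $2^{2^d}$ being reserved for the type refinement below.

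The finishing dichotomy is the part I am most confident about. Refine the self-induced sunflower to a sub-sunflower $F'$ on which all petals have the same \emph{type}: the isomorphism type of $H[e]$ with $c$ distinguished, recording which subedges of $e$ are edges of $H$ and the sign of each such weight. There are at most a function of $d$ many types (covered by the $2^{2^d}$ slack in $g$), so the calibration leaves $|F'|\ge 2\alpha$. Fix an inclusion-minimal edge of $H$ that is contained in $c\cup p_e$ and meets $p_e$ (the edge $e$ itself witnesses existence); by the common type this transports to a minimal such edge $g_e$ for every $e\in F'$, with a common core-part $c_0:=g_e\cap c$ and petal-parts $t_e:=g_e\cap p_e$, and --- after one more constant-factor refinement --- with all $w(g_e)$ of one sign. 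Put $X^{*}:=c_0\cup\bigcup_{e\in F'}t_e$. By self-inducedness and minimality the only edges of $H$ inside $X^{*}$ are the edges inside $c_0$ and the $g_e$ themselves, so $w[X^{*}]=w[c_0]+\sum_{e\in F'}w(g_e)$ and $\bigl|\sum_{e\in F'}w(g_e)\bigr|=\sum_{e\in F'}|w(g_e)|\ge|F'|\ge 2\alpha$, since every $w(g_e)$ is a nonzero integer by Rule~\ref{rule:edges-without-weights}. If $|w[c_0]|\ge\alpha$ then $X=c_0$ already works; otherwise $|w[X^{*}]|\ge 2\alpha-|w[c_0]|>\alpha$. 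In both cases $H$ is a yes-instance, so the rule is safe.

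The main obstacle is the pruning step inside the extraction: a cross edge need not contain the core $c$, so charging it to the link of a strict extension of $c$ --- and bounding the loss so that it is still swallowed by the generously chosen $g$ --- is what requires real care, and is presumably what forces both the recursion on $i$ and the doubly-exponential shape of $g$. The closing dichotomy, by contrast, is just the $2$-uniform argument with ``the vertex $v$'' replaced by ``the bounded set $c_0$'' and ``a neighbour of $v$'' replaced by ``a petal-part $t_e$''.
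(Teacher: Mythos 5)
There is a genuine gap, and it is exactly where you suspected: the \emph{strongly} self-induced sunflower you need (no edge of $H$ inside $Z=c\cup\bigcup_{e\in F}p_e$ meets two different petals) need not exist at all, so no amount of care in the pruning step can produce it. Condition~(ii) of Rule~\ref{rule:subedge-rule} only bounds $|\link(f)|$ for $f\supsetneq c$, i.e.\ it only controls edges that \emph{properly contain} $c$; a cross edge avoiding $c$ lies in no such link and is therefore completely unconstrained. Concretely, take $d=3$, $c=\{u\}$, edges $\{u,a_j,b_j\}$ for $j=1,\dots,N$ with $N\geq g(2)$, and additionally \emph{all} pairs $\{a_j,a_k\}$ as edges. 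Conditions (i) and (ii) hold for $c=\{u\}$ (every $f\supsetneq\{u\}$ has $|\link(f)|\leq 1$), yet any two petals are joined by a cross edge inside $Z$, so the largest strongly self-induced sunflower with core $\{u\}$ has size $1$. Your charging argument (``an edge spanning two petals can \dots be charged to the link of some strict extension of $c$'') is thus not just delicate but false, and with it the key identity $w[X^{*}]=w[c_0]+\sum_{e\in F'}w(g_e)$ collapses, because $X^{*}$ may induce arbitrarily many cross edges of uncontrolled total weight.

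The paper's proof sidesteps this by demanding self-inducedness only \emph{relative to} $\link(c)$: the greedy sunflower $M$ satisfies that no edge of $\link(c)$ other than those in $M$ lies inside $\bigcup M$, which is achievable with the stated bounds, but it deliberately tolerates induced edges that do not contain $c$. Those stray edges are then handled by a separate accounting step (the case distinction on $|c|$ and, for $|c|\geq 2$, the recursive decomposition into the classes $E^{\sigma}_{c'}$ for $c'\subseteq c$ with the inductive bound $|w^{\sigma}_{c'}|<2^{2^{|c'|}-1}\cdot 2\alpha$): either some induced sub-configuration already has absolute weight $\geq\alpha$, in which case we are done immediately, or every class contributes a bounded amount and the $2^{2^d}$ factor in $g$ is exactly what makes $|M^{\sigma}|$ overwhelm the sum of these bounds. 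That ``solution found early or contribution bounded'' dichotomy is the missing idea in your proposal; your closing argument with the minimal edges $g_e$ and the type refinement is clean and would indeed finish the proof if the strong sunflower existed, but some version of the paper's absorption argument for core-avoiding cross edges is unavoidable.
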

\begin{proof}
  Let $c\subseteq V$ be a set for which the rule is applicable. We
  argue that $H$ contains a set $X$ with
  $|w[X]|\geq\alpha$.  Let $i=d-|c|$,
  $E(c)=\link(c)=\{e\in E\mid c\subsetneq e\}$,
  $N(c)=\bigcup E(c)\setminus c$, and
  $\deg_{E(c)}(v)=|\{e\in E(c)\mid v\in e\}|$. We can assume that
  $|c|<d$ and $i>0$ because otherwise $|\link(c)|=0 < g(i)$. Since
  $|\link(f)|< g(j)$ holds for any superset~$f\supsetneq c$ with
  $j=d-|f|$, it also holds that~$\deg_{E(c)}(v) \leq g(i-1)$ for every
  vertex~$v\in N(c)$.  Otherwise we have a contradicion with
  $f=c\cup\{v\}$.

  Consider a maximal set $M\subseteq E(c)$ such that (i) all edges in
  $M$ contain $c$ as a subedge, (ii) $c$ is the intersection of any
  pair of different edges in $M$, and (iii) for every edge~$e\in E(c)$
  it holds that either $e\in M$, or
  $e\not\subseteq \big(\bigcup_{e'\in M} e'\big)$, that is, $M$ is a
  set of edges that induces only itself in $H_c=(V,E(c))$. By (i) and
  (ii) $M$ is also known as a \emph{sunflower} with \emph{core}
  $c$. Such a set always exists as the following greedy algorithm
  shows: Start with $M=\emptyset$ and observe that all three
  properties hold. Repeatedly pick an edge $e\in E(c)$ that has no
  smaller subedge in $E(c)$ and add it to $M$ as long as the
  properties are preserved.

  We will use $M$ to identify the soughted set~$X$. For this, we
  require a lower bound for the size of $M$. Therefore, let us
  carefully analyze which edges cannot be choosen after an edge has
  been added to~$M$. After picking an edge $e\in E(c)$, we can remove
  the whole set $E_e \subseteq E(c)$ of all edges that intersect $e$
  not only in $c$ (a proper superset of $c$) because picking one of
  those edges contradicts property (ii).  There may be also an
  edge~$e'\in E(c)\setminus E_e$ that has not been added yet, but for
  which there is another edge $e''\in E(c)\setminus M$ with
  $e''\in \bigcup M \cup\{e'\}$. This implies that $e'$ may not be
  added to $M$ to preserve property (iii). Furthermore, there must be
  an edge $e\in M$ such that $e'$ intersects either $e$ or some edge in
  $E_e$ outside in the core. Let $E'_e\subseteq E(c)$ be the set of
  all edges that intersect $E_e$ not only in $c$. It follows that any
  edge in $E(c)$ will either be inserted to $M$, or is in $E_e$ or
  $E'_e$ for an edge $e\in M$. Observe that $|E_e|\leq ig(i-1)$
  because $e\setminus c$ contains at most $i$ vertices that are in at
  most $g(i-1)-1$ edges from $E(c)$. For the same reason, we have
  $|E'_e|\leq |E_e|\cdot (i-1)\cdot g(i-1)$. Therefore,
  $|\{e\}\cup E_e\cup E'_e|\leq i^2(g(i-1))^{2}$ edges are removed for
  every edge in~$M$. This implies
  $|M|\geq |E(c)|/\big(i^2 g(i-1)^2\big)$. Given that $E(c)\geq g(i)$,
  we deduce~$|M|\geq 2\alpha\cdot 2^{2^d}$: For $i=1$, we have
  $|M|\geq|E(c)|$ because of $g(0)=1$. For $i>1$ we use that
  $x^{2^i-1}/x^{2\cdot 2^{i-1}-2}= x$ and $g(i) = x^{2^i-1}$ for
  $x=2\alpha\cdot 2^{2^d}$.
  
  We show that we can find a set~$X\subseteq
  \bigcup_{e\in M}e$ with $|w[X]|\geq \alpha$.
  Let $M^+$ denote all positively weighted edges
  in $M$ and let $M^-$ denote the other ones. Let us consider
  three cases based on the size of the core.

  \subparagraph*{First Case ($|c|=0$).}
  We have $c=\emptyset$ and, without loss of generality, let
  $w[\emptyset]=|w[\emptyset]|\geq 0$. Otherwise we could flip the sign
  of all weights. Note that
  $E(\emptyset)=E\setminus\{\emptyset\}$. By Property~(iii), any
  subset of $\bigcup M$ induces only edges in $M\cup\{\emptyset\}$.
  It follows that $w[\bigcup M^+] = |M^+| + w[\emptyset]$.  We are
  done for $w[\bigcup M^+]|\geq\alpha$, otherwise we obtain
  \[
    |w[\bigcup M^-]| = |M^-| - w[\emptyset] = |M|-(|M^+|+w[\emptyset])>|M|-\alpha.
  \]
  The claim follows with $|M|\geq 2\alpha$.

  \subparagraph*{Second Case ($|c|=1$).}  In this case we have
  $c=\{v\}$ for some vertex $v\in V$. Without loss of generality, let
  $w[c]=|w[c]|\geq 0$~--~otherwise we can flip the sign of all weights
  again.
  In contrast to the previous case, a subset of $\bigcup M\cup c$
  might induce edges in~$M$ and subedges of
  $c$. Those edges are induced by $V' = \bigcup M \setminus c$. For
  any~$X'\subseteq V'$ we have~$|w[X']|<\alpha$, otherwise we are
  done. Observe that $w[\bigcup M^+] = |M^+| + w[c] + w[\bigcup
  M^+]$. The statement follows immediately for
  $w[\bigcup M^+]|\geq\alpha$. Otherwise we obtain
  \begin{align*}
    |M^+|+w[c]       & <\alpha+w[\bigcup M^+]<2\alpha\quad\text{and}\quad \\
    |w[\bigcup M^-]| & = |M^-| - w[c] + w[\bigcup M^-]                    \\
                     & = |M|-|M^+|-w[c] - w[\bigcup M^-] >|M|-3\alpha.
  \end{align*}
  The claim follows with $|M|\geq 4\alpha$.
  
  \subparagraph*{Third Case ($|c|\geq 2$).}  Observe that in this case
  $\bigcup M$ may induce edges that are neither in~$E(c)$ nor subedge
  of $c$, but that intersect $c$ (and other vertices covered by
  $M$). This is because $M$ contains only edges of $E(c)$.  Let us
  recursively define the following subsets of
  $E\left[\bigcup M^{\sigma}\right]$ for $\sigma\in\{-,+\}$:
  \begin{align*}
     E_{\emptyset}^{\sigma} 
       &= E\left[\left(\bigcup M^\sigma \setminus c\right)\right]
          \setminus E[c];\\
     E_{c'}^{\sigma} 
       &= E\left[c'\cup \left(\bigcup M^\sigma \setminus c\right)\right]
          \setminus \bigcup_{f\subsetneq c'} E_{f}^\sigma
          \setminus E[c]
          \quad\text{for every }c'\subseteq c.
  \end{align*}

  For all $c'\subseteq c$ these sets are disjoint and every
  edge~$e\in E[\bigcup M^{\sigma}]$ is either in one of these sets or
  in $E[c]$. To be precisely, an edge $e\in E[\bigcup M^{\sigma}]$ is
  in $E^{\sigma}_{e\cap c}$ if it is a proper superset of~$c$.
  Otherwise $e\subseteq c$ (if $e\subseteq c$). Note that
  $M^\sigma = E^\sigma_c$. We will now show that there is a set
  $c'\subseteq c$ and such that the absolute value of the induced
  weight of $c'\cup (\bigcup M^-\setminus c)$ or
  $c'\cup (\bigcup M^+\setminus c)$ is at least $\alpha$. Assume this
  is not the case and that for all subsets~$X\subseteq \bigcup M$ we
  have
  \begin{equation}\label{eqn:flower-eqn-1}
    |w[X]|<\alpha
  \end{equation}
  for any $c'\subseteq c$ and $\sigma\in\{-,+\}$. 
  Observe that
  \begin{equation}
    E\left[c' \cup \left(\bigcup M^\sigma\setminus c\right) \right] 
    = E^\sigma_{c'} \mathop{\dot\cup} E[c'] 
    \mathop{\dot\cup} 
    \bigcup_{f\subsetneq c'} E^\sigma_{f}.
  \end{equation}
   Let $w^\sigma_{c'}=\sum_{e\in E^\sigma_{c'}} w(e)$ and note that this is the total weight of the edges in
  $E^\sigma_{c'}$. Hence:
  \begin{equation}\label{eqn:flower-eqn-2}
    w\left[c' \cup \left(\bigcup M^\sigma\setminus c\right) \right] 
    = w^\sigma_{c'} + w[c'] 
    + \sum_{f\subsetneq c'} w^\sigma_{f}.
  \end{equation}

  For any~$k\in\{0,\dots,|c|\}$ let~$w_k=2^{2^k-1}\cdot 2\alpha$. We
  show that $w_k$ is an upper bound in the sense
  that~$|w^\sigma_{c'}| < w_k$ for $c'\subseteq c$ and~$k=|c'|$. By
  (\ref{eqn:flower-eqn-1}) and (\ref{eqn:flower-eqn-2}) we deduce that
  \[ 
    |w^\sigma_\emptyset| = \left| w\left[\left(\bigcup M^\sigma\setminus c\right) \right] - w[\emptyset] \right|< 2\alpha = w_0.
  \]
  Otherwise either $|w[\emptyset]|\geq\alpha$ or $\left|
  w\left[\left(\bigcup M^\sigma\setminus c\right)
  \right]\right|\geq\alpha$. Observe that $2\alpha + (2^{k-1}-1)
  w_{k-1}\leq w_k$ for $k>0$. Then for $\emptyset\neq c'\subseteq c$
  it follows by induction that
  \begin{align*}
    |w^\sigma_{c'}| 
      =& \left| w\left[c' \cup \left(\bigcup M^\sigma\setminus c\right) \right] 
         - w[c'] 
         - \sum_{f\subsetneq c'} w^\sigma_f
         \right|\\
       <&\; \alpha + \alpha + \sum_{k=0}^{|c'|-1}\binom{|c'|}{k} w_{k}\\
       \leq&\; 2\alpha + (2^{|c'|}-1)w_{|c'|-1} \\
       \leq&\; w_{|c'|}.
  \end{align*}
  
  Since
  $|M|\geq 2\alpha\cdot 2^{2^d}\geq 4\alpha\cdot 2^{2^{|c|}-1} =
  2\cdot w_{|c|}$ either $|M^+|\geq w_{|c|}$ or $|M^-|\geq w_{|c|}$.
  Recall that $M^\sigma = E^\sigma_c$ and, thus, with
  $|w^\sigma_c|\geq |M^\sigma|$ we get that our assumption must be
  wrong.
\end{proof}

\begin{corollary}\label{corollary:kernel}
  $\PLang[\alpha]{$d$-unbalanced-subgraph}$ has a kernel with $O\big(\alpha^{2^d}\big)$ vertices.
\end{corollary}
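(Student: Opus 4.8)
The claim to prove is Corollary~\ref{corollary:kernel}: $\PLang[\alpha]{$d$-unbalanced-subgraph}$ has a kernel with $O(\alpha^{2^d})$ vertices. The plan is to argue that an instance on which none of Rule~\ref{rule:isolated-vertices}--\ref{rule:subedge-rule} applies must already have few vertices, and that $O(\alpha^{2^d})$ is the right bound. All four rules run in polynomial time and are safe by Lemma~\ref{lemma:rule-three-safe} and Lemma~\ref{lemma:rule-four-safe}, so exhaustive application yields a valid kernelization; it only remains to bound the size of a fully reduced instance.

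**Key steps.** First I would observe that after Rule~\ref{rule:subedge-rule} is exhausted, every subedge $c\subsetneq e$ has $|\link(c)| < g(i)$ with $i = d-|c|$; in particular, taking $c$ to be a single vertex $v$ (so $i=d-1$), we get $\deg(v) \le |\link(\{v\})| + (\text{edges equal to }\{v\})$. More carefully, one should note that $\link(\{v\})$ counts edges properly containing $\{v\}$, and the only other edge incident to $v$ would be $\{v\}$ itself as a singleton edge; so $\deg(v) \le g(d-1) + 1 = O(g(d-1))$ for every vertex $v$, hence $\Delta = O(g(d-1))$. Second, with this bound on $\Delta$ in hand, I invoke Rule~\ref{rule:degree-rule}: since the rule does not apply, $|V| < 2\alpha \cdot d^3 \Delta^2 = O(\alpha \cdot d^3 \cdot g(d-1)^2)$. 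Third, I plug in $g(d-1) = ((d-1)^{d-1} 2\alpha \cdot 2^{2^d})^{2^{d-1}-1}$; treating $d$ as a constant, the dominant dependence on $\alpha$ is $\alpha^{2^{d-1}-1}$, so $g(d-1)^2 = O(\alpha^{2^d - 2})$ (up to constants depending only on $d$), and therefore $|V| = O(\alpha \cdot \alpha^{2^d-2}) = O(\alpha^{2^d-1})$, which is within the claimed $O(\alpha^{2^d})$. (One could also state the cleaner bound $O(\alpha^{2^d-1})$, but $O(\alpha^{2^d})$ certainly holds.) The edge count is then bounded via $|E| \le \sum_v \deg(v) \le |V|\Delta$, which is still polynomial in $\alpha$ for fixed $d$, so the whole instance has size $\poly(\alpha)$ and the kernel is polynomial.

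**Main obstacle.** The only subtle point is making sure Rule~\ref{rule:subedge-rule} really does cap the degree of every \emph{single} vertex, not just of subedges that happen to sit inside many edges: I would double-check that the premise ``$|\link(c)| \ge g(i)$'' applied with $c = \{v\}$ together with the minimality condition (ii) of the rule is exactly what is needed, and that the corner cases $|c| = d$ (where $\link(c) = \emptyset$) and $i = 0$ (where $g(0)=1$) are handled so that no vertex can secretly have large degree. Once $\Delta = O(g(d-1))$ is established, the rest is just substituting into the bound of Rule~\ref{rule:degree-rule} and simplifying the exponent of $\alpha$, which is routine. I expect the proof in the paper to be essentially this two-line chaining of Rule~\ref{rule:subedge-rule} into Rule~\ref{rule:degree-rule}.
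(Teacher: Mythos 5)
Your argument is correct and reaches the right bound, but it chains the rules differently from the paper: you instantiate the consequence of Rule~\ref{rule:subedge-rule} at singletons $c=\{v\}$ to get $\Delta\leq g(d-1)+1$ and then feed this into Rule~\ref{rule:degree-rule} to obtain $|V|<2\alpha d^3\Delta^2=O(\alpha^{2^d-1})$, whereas the paper instantiates it at $c=\emptyset$, obtaining $|E|\leq|\link(\emptyset)|+1\leq\edgenumber=O(\alpha^{2^d-1})$ and then bounding $|V|\leq d\cdot|E|$ because Rule~\ref{rule:isolated-vertices} has already removed isolated vertices. Both routes give the same exponent; the paper's is marginally shorter because it never needs Rule~\ref{rule:degree-rule}, which is only used for the $\alpha+\Delta$ kernel of Corollary~\ref{corollary:degree-kernel}.

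The one step you assert rather than prove is the crucial one: that exhaustion of Rule~\ref{rule:subedge-rule} forces $|\link(c)|<g(d-|c|)$ for \emph{every} subedge $c$. The rule's precondition is the conjunction of (i) and (ii), so its non-applicability at $c=\{v\}$ only tells you that (i) or (ii) fails there; you cannot read off $|\link(\{v\})|<g(d-1)$ directly, and your ``main obstacle'' paragraph (combining ``the premise at $c=\{v\}$ with condition (ii)'') suggests you intended exactly that local reading. The repair is a short maximality argument, which the paper makes explicit: if some subedge satisfied (i), then an inclusionwise largest such subedge $c$ (which exists and has $|c|\leq d-1$, since $|c|=d$ gives $|\link(c)|=0<1=g(0)$) would have every proper superset failing (i), i.e.\ it would satisfy (ii) as well, so the rule would apply after all. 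With that sentence added, your degree bound and the ensuing arithmetic are sound.
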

\begin{proof}
  If Rule~4 cannot be applied, there is no such
  subedge~$c\subseteq e\in E$ that meets the two properties of the
  rule. However, this also implies that there is no subedge~$c$ that
  only meets the first property because otherwise there would be an
  inclusionwise largest superset (at last a size-$(d-1)$ edge), which
  does. Therefore, for any subedge~$c\subseteq e\in E$ it holds that
  $|\link(c)|<(i^{i}2\alpha\cdot 2^{2^d})^{2^i-1}$ where
  $i=d-|c|$. Especially for $c=\emptyset$ we have
  $|\link(c)|<\edgenumber{}$. By the definition of the link we have
  $|E|\leq |\link(\emptyset)|+1$. Thus, the total number of edges is
  at most~$\edgenumber{}$.  Hence, we get a kernel of size
  $O\big(\alpha^{2^d}\big)$.
\end{proof}

\begin{proof}[Proof of Theorem~\ref{theorem:abs-dnf}]
  Combine Corollary~\ref{corollary:dnf-to-unbalanced} with Corollary~\ref{corollary:kernel}.
\end{proof}

If $d$ is not a constant but a parameter, then, by an exhaustive
application of the rules above, we obtain reduced instances of size
$O\big( d\edgenumber{}\big)$.  Note that there could be up to $n^d$
possible subedges, therefore it is not clear that we obtain a
kernel. However, a closer look at Rule~4 reveals that
we are not forced to consider these subedges~--~we can directly return a
trivial yes-instance if $|E|\geq \edgenumber$.

\begin{corollary}%
  \label{corollary:kernel-parameters-alpha-d}
  $\PLang[\alpha,d]{unbalanced-subgraph}\in\Class{FPT}$
\end{corollary}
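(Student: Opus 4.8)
The plan is to run the kernelization of Section~\ref{section:absdnf} essentially verbatim; the only obstacle is Rule~\ref{rule:subedge-rule}, since when $d$ is part of the parameter an instance may contain up to $n^{d}$ subedges and we cannot afford to search all of them for one that triggers the rule. The key observation~--~already foreshadowed just above~--~is that we never need to \emph{exhibit} such a subedge: its existence is detected by a single count of the number of edges. First I would apply Rules~\ref{rule:isolated-vertices}--\ref{rule:degree-rule} exhaustively; each is a single deletion (or an immediate accept), and none of them enumerates subedges, so this runs in polynomial time irrespective of whether $d$ is constant and either already returns a trivial yes-instance or leaves an equivalent instance $(H,w,\alpha)$ with $H=(V,E)$ having no isolated vertex and no zero-weight edge.

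Next I would simply compare $|E|$ with $\edgenumber$. If $|E|>\edgenumber$, then Rule~\ref{rule:subedge-rule} must be applicable to some subedge of $H$: the proof of Corollary~\ref{corollary:kernel} shows that $|E|\le\edgenumber$ holds whenever Rule~\ref{rule:subedge-rule} is \emph{not} applicable, so the contrapositive gives exactly what we want. By the safeness of the rule (Lemma~\ref{lemma:rule-four-safe}) the instance is therefore a yes-instance, and we may output a trivial yes-instance \emph{without} ever locating the witnessing subedge. Otherwise $|E|\le\edgenumber$, and since $H$ has no isolated vertex and every edge has size at most $d$, we obtain $|V|\le d\,|E|\le d\edgenumber$; thus the reduced instance has $O(d\edgenumber)$ vertices and $O(\edgenumber)$ edges.

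Finally I would decide such a bounded instance by brute force: enumerate all $2^{|V|}\le 2^{d\cdot\edgenumber}$ subsets $X\subseteq V$ and test in polynomial time whether $|w[X]|\ge\alpha$, accepting iff some $X$ succeeds. Since $\edgenumber$ depends only on $\alpha$ and $d$, the whole procedure runs in time $2^{d\cdot\edgenumber}\cdot\poly(|H|)=f(\alpha,d)\cdot\poly(|H|)$ for a computable function $f$, which proves $\PLang[\alpha,d]{unbalanced-subgraph}\in\Class{FPT}$. The only step requiring genuine care is the one that turns ``$|E|$ is large'' into ``the instance is a yes-instance'' without producing the triggering subedge, and I expect this to be the main (though rather mild) obstacle; as noted, it is exactly the contrapositive of the edge bound already established in the proof of Corollary~\ref{corollary:kernel}, so no new argument is needed.
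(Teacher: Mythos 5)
Your proposal is correct and follows the paper's own argument for this corollary essentially verbatim: after exhausting Rules~\ref{rule:isolated-vertices}--\ref{rule:degree-rule}, one tests $|E|$ against $\edgenumber$ and uses the contrapositive of the edge bound from Corollary~\ref{corollary:kernel} to accept without locating the triggering subedge, leaving otherwise an instance of size bounded in $\alpha$ and $d$. The only cosmetic difference is that you finish with an explicit $2^{|V|}$ brute-force search, while the paper invokes the general fact that a kernelization implies membership in $\Class{FPT}$.
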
 
\begin{proof}
  An induction shows that when Rule~1 and Rule~2 were applied
  exhaustively and $|E|\geq \edgenumber{}$, Rule~$4$ can be
  applied on $c=\emptyset$ because
  $E\setminus\{\emptyset\}=\link(\emptyset)$ or on some larger
  subedge. Hence we can return a
  trivial yes-instance whenever $|E|\geq \edgenumber{}$.
\end{proof}

\begin{corollary}\label{corollary:abs-dnf}
  $\PLang[\alpha,d]{abs-dnf}\in\Class{FPT}$
\end{corollary}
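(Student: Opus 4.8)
The plan is to repeat the argument for Theorem~\ref{theorem:abs-dnf}, but now carrying the clause-size bound $d$ along as a second parameter instead of treating it as a fixed constant. I would proceed in two steps.

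\textbf{Step 1: an fpt-reduction to the hypergraph problem.} Corollary~\ref{corollary:dnf-to-unbalanced} already reduces $\PLang[\alpha]{$d$-abs-dnf}$ to $\PLang[\alpha]{$d$-unbalanced-subgraph}$ for each fixed $d$; I would check that the entire chain behind it is parameter-preserving in $\alpha$ \emph{and} $d$. The literal-elimination step of Lemma~\ref{lemma:abs-dnf-to-abs-monotone-dnf} replaces a clause of size at most $d$ by the two clauses $c^+$ and $c^-$, both again of size at most $d$, and leaves the target value $\alpha$ untouched; iterating it to eliminate all (at most $d$) negative literals of a single clause replaces that clause by at most $2^d$ monotone clauses, so the full monotonization blows the formula up by a factor of at most $2^d$. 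The subsequent encoding of a monotone $d$-DNF as a hypergraph (one vertex per variable, one weighted edge per clause) again preserves both $d$ and $\alpha$ and runs in polynomial time. Composing these, we obtain an fpt-reduction $\PLang[\alpha,d]{abs-dnf}\leq\PLang[\alpha,d]{unbalanced-subgraph}$ whose running time is $f(d)\cdot\poly(|\phi|)$ (and which is plain polynomial-time when $d$ is constant, recovering Corollary~\ref{corollary:dnf-to-unbalanced}).

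\textbf{Step 2: solve the hypergraph problem.} Corollary~\ref{corollary:kernel-parameters-alpha-d} states $\PLang[\alpha,d]{unbalanced-subgraph}\in\Class{FPT}$: an exhaustive application of Rules~\ref{rule:isolated-vertices}--\ref{rule:subedge-rule} either shrinks the instance or, as soon as $|E|\geq\edgenumber{}$, returns a trivial yes-instance, so the instance is reduced to size bounded by a function of $\alpha$ and $d$ and can then be decided by brute force. Feeding the output of the Step~1 reduction into this algorithm yields $\PLang[\alpha,d]{abs-dnf}\in\Class{FPT}$.

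The only point requiring a moment's care — hardly an obstacle — is the bookkeeping in Step~1: verifying that iterated literal elimination does not cause a blowup worse than $2^d$ per clause, and that neither it nor the hypergraph encoding introduces a hidden dependence on the number of variables that would spoil the parameterization. There is no new mathematical content beyond Corollaries~\ref{corollary:dnf-to-unbalanced} and~\ref{corollary:kernel-parameters-alpha-d}.
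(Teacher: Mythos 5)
Your proposal is correct and matches the paper's intended argument exactly: the corollary is obtained by combining the reduction of Lemma~\ref{lemma:abs-dnf-to-abs-monotone-dnf} and Corollary~\ref{corollary:dnf-to-unbalanced} (observing that the $2^d$ blow-up from monotonization keeps it an fpt-reduction in the combined parameter) with Corollary~\ref{corollary:kernel-parameters-alpha-d}. Your extra bookkeeping in Step~1 is the only content the paper leaves implicit, and you have verified it correctly.
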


\begin{corollary}\label{corollary:abs-dnf-p}
  $\Lang{$\alpha$-$d$-abs-dnf}\in\Class{P}$
\end{corollary}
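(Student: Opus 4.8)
The plan is to obtain this as an immediate consequence of the fixed-parameter tractability established in Corollary~\ref{corollary:abs-dnf}. The problem $\Lang{$\alpha$-$d$-abs-dnf}$ is exactly the family of slices of $\PLang[\alpha,d]{abs-dnf}$ in which both $\alpha$ and $d$ are fixed to constants. Since $\PLang[\alpha,d]{abs-dnf}\in\Class{FPT}$, there are a computable function $f$ and a constant $c$ such that any instance of input length $n$ can be decided in time $f(\alpha+d)\cdot n^{c}$. For every fixed pair $(\alpha,d)$ the factor $f(\alpha+d)$ is itself a constant, so the running time collapses to $O(n^{c})$, which already witnesses membership in $\Class{P}$.

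The route I would actually write out, since it makes the polynomial bound explicit, is to chain the two polynomial-time preprocessing steps from Section~\ref{section:absdnf}. First, Corollary~\ref{corollary:dnf-to-unbalanced} transforms an $\Lang{abs-dnf}$ instance into an equivalent $d$-unbalanced-subgraph instance in polynomial time. Second, the kernelization underlying Corollary~\ref{corollary:kernel} applies Rules~\ref{rule:isolated-vertices}--\ref{rule:subedge-rule} exhaustively in polynomial time and yields a hypergraph on $O(\alpha^{2^{d}})$ vertices (and hence boundedly many edges). With $\alpha$ and $d$ both constant, this kernel has constant size, so it can be decided by brute force over all vertex subsets in constant time; the overall running time is dominated by the two polynomial preprocessing phases and is therefore polynomial in $n$.

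The only point that deserves a remark is that the weight function takes integer values that may be encoded in binary and thus be large; but all arithmetic carried out in the reductions and in evaluating $|w[X]|$ on the constant-size kernel runs in time polynomial in the bit-length of the input, so this does not affect the polynomial bound. I expect no genuine obstacle here — the statement is really a bookkeeping corollary that fills the top-left cell of Table~\ref{table:results} — and I would close the proof by pointing back to Corollary~\ref{corollary:abs-dnf} (or, equivalently, to Corollary~\ref{corollary:dnf-to-unbalanced} together with Corollary~\ref{corollary:kernel}).
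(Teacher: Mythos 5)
Your proposal is correct and matches the paper's (implicit) argument: the corollary is stated without proof as an immediate consequence of Corollary~\ref{corollary:abs-dnf}, exactly via the observation that fixing both parameters of an $\Class{FPT}$ problem turns the $f(\alpha+d)$ factor into a constant. Your explicit unwinding through Corollary~\ref{corollary:dnf-to-unbalanced} and Corollary~\ref{corollary:kernel}, including the remark about binary-encoded weights, is a faithful elaboration of the same route rather than a different proof.
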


\subsection{From Disjunctive Normal Form to Conjunctive Normal Form}
Similar to Lemma~\ref{lemma:max-monotone-cnf-to-dnf} in
Section~\ref{section:hardness}, the goal of this section is to
companion Corollary~\ref{corollary:abs-dnf} with a
\textsc{cnf}-version of the claim.

\begin{lemma}\label{lemma:abs-cnf}
  $\PLang[\alpha,d]{abs-cnf}\leq_{\text{fpt}} \PLang[\alpha,d]{abs-dnf}$
\end{lemma}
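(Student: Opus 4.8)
The plan is to mirror the construction in the proof of Lemma~\ref{lemma:max-monotone-cnf-to-dnf}, swapping the roles of $\wedge$ and $\vee$. Given a \textsc{$d$-cnf} $\phi$ with weight function $w\colon\clause(\phi)\to\mathbb Z$, I process the clauses one at a time: whenever the current set of clauses contains a disjunction $\sigma=(\ell_1\vee\dots\vee\ell_s)$, I remove $\sigma$ and add, for every nonempty $I\subseteq\{1,\dots,s\}$, the conjunction $\sigma_I=\bigwedge_{i\in I}\ell_i$ with weight $w(\sigma_I)=(-1)^{|I|+1}w(\sigma)$, adding up the weights if a clause $\sigma_I$ is produced more than once. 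Since $s\le d$, each $\sigma_I$ has at most $d$ literals, so the resulting formula $\phi'$ is a \textsc{$d$-dnf}; since at most $2^d-1$ clauses are created per original clause, $\phi'$ is computable in time $2^d\cdot|\phi|^{O(1)}$, which is an fpt-bound for the parameter $\alpha+d$. The target value $\alpha$ is left unchanged. Unlike in Lemma~\ref{lemma:max-monotone-cnf-to-dnf}, no monotonicity assumption is needed here: the $\ell_i$ may be negative literals, and $\sigma_I$ is then simply a conjunction of literals, i.\,e.\ a legal \textsc{dnf}-clause.

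For correctness I identify truth values with $0$ and $1$ and use the inclusion--exclusion identity dual to~(\ref{eqn:binary-multi-product-to-max-term}): for any $x_1,\dots,x_s\in\{0,1\}$,
\[
  \max\{x_1,\dots,x_s\}=1-\prod_{i=1}^s(1-x_i)=\hspace{-2ex}\sum_{\emptyset\neq I\subseteq\{1,\dots,s\}}\hspace{-2ex}(-1)^{|I|+1}\prod_{i\in I}x_i .
\]
Applying this with $x_i=\beta(\ell_i)$ the truth value of the literal $\ell_i$ under an assignment $\beta$, and observing that $\beta\models\sigma_I$ exactly when $\prod_{i\in I}\beta(\ell_i)=1$, the clauses $\sigma_I$ satisfied by $\beta$ contribute
\[
  \sum_{\substack{\emptyset\neq I\subseteq\{1,\dots,s\}\\ \beta\models\sigma_I}}\hspace{-2ex} w(\sigma_I)
  =w(\sigma)\hspace{-2ex}\sum_{\emptyset\neq I\subseteq\{1,\dots,s\}}\hspace{-2ex}(-1)^{|I|+1}\prod_{i\in I}\beta(\ell_i)
  =w(\sigma)\cdot\max_i\beta(\ell_i),
\]
which equals $w(\sigma)$ if $\beta\models\sigma$ and $0$ otherwise, hence agrees with the contribution of $\sigma$ to $\phi$. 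Summing over all clauses of $\phi$ shows $\sum_{c\in\clause(\phi),\ \beta\models c}w(c)=\sum_{c\in\clause(\phi'),\ \beta\models c}w(c)$ for every $\beta$. In particular the attainable absolute values coincide for $(\phi,w)$ and $(\phi',w')$, so $(\phi,w,\alpha)$ is a yes-instance of $\Lang{abs-cnf}$ iff $(\phi',w',\alpha)$ is a yes-instance of $\Lang{abs-dnf}$, with identical parameters.

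The construction and the running-time bound are immediate, so the only real content is the displayed identity~--~plain inclusion--exclusion, the mirror image of~(\ref{eqn:binary-multi-product-to-max-term}), which could even be quoted from the proof of Lemma~\ref{lemma:max-monotone-cnf-to-dnf}. The one point that deserves a word of care is the bookkeeping when two different original clauses generate the same conjunction $\sigma_I$; this is harmless precisely because the equality above is additive over clauses, so one may either keep a multiset of clauses or merge duplicates by summing their weights without affecting the argument. Empty disjunctions, if present, are simply dropped (they are never satisfied and generate no $\sigma_I$), consistent with their zero contribution.
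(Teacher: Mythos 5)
Your reduction is correct, but it is not the one the paper uses for this lemma. The paper's proof replaces each disjunction $c$ by the set $X$ of (up to $2^{|\var(c)|}$) minterms over $\var(c)$ read off from the truth table of $c$, all carrying the \emph{same} weight $w(c)$; correctness rests on the combinatorial fact that every assignment satisfies \emph{exactly one} element of $X$ when it satisfies $c$ and none otherwise, so no sign manipulation is needed. You instead take the conjunctions $\sigma_I=\bigwedge_{i\in I}\ell_i$ over subsets of the literals of $c$ with alternating weights $(-1)^{|I|+1}w(c)$, and prove weight preservation via the inclusion--exclusion identity $\max_i x_i=1-\prod_i(1-x_i)=\sum_{\emptyset\neq I}(-1)^{|I|+1}\prod_{i\in I}x_i$~--~the exact mirror of identity~(\ref{eqn:binary-multi-product-to-max-term}) used in Lemma~\ref{lemma:max-monotone-cnf-to-dnf} for the opposite (\textsc{dnf}-to-\textsc{cnf}) direction. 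Both constructions yield at most $2^d$ clauses of width at most $d$ per original clause, so both are fpt in $\alpha+d$ and preserve the weight of every assignment exactly, hence also the attainable absolute values. Your version has the aesthetic advantage of unifying the two normal-form translations under one inclusion--exclusion scheme and of keeping only literals that actually occur in $c$; the paper's version trades the algebra for the simpler ``exactly one minterm is satisfied'' invariant and uniform weights. Your remarks on merging duplicate conjunctions (additivity of the identity makes this harmless) and on dropping empty disjunctions (never satisfied, zero contribution) are both sound.
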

\begin{proof}
  Let $(\phi,w,\alpha)$ be an instance of $\PLang[\alpha,d]{abs-cnf}$
  and let $\Phi=\clause(\phi)$. The idea of the reduction is to
  successively replace every disjunction in $\Phi$ by a set of
  conjunctions with new weights such that we have for the resulting
  pair $(\Phi',w')$:
  \[
     \sum_{\sigma\in\Phi,  \beta\models\sigma}  w(\sigma)
   = \sum_{\sigma\in\Phi', \beta\models\sigma} w'(\sigma).
  \]

  Let $c$ be some disjunction in $\Phi$. Recall that $c$ contains at
  most $d$ variables. There is a set $X$ of at most $2^{|\var(c)|}$
  conjunctions such that we have for every truth assignment
  $\beta\colon\var(\phi)\to\{\mathit{true},\mathit{false}\}$ (i)
  $\beta\models c$ iff $\beta$ satisfies exactly one formula in $X$,
  and (ii) $\beta\not\models c$ iff $\beta$ satisfies no formula in
  $X$.  Such a set $X$ can easily be computed using the truth table
  of~$c$, which is possible in the reduction as $d$ is a
  parameter. Let $w'(c')=w(c)$ for any $c'\in X$, and
  $w'(\tilde c)=w(\tilde c)$ for any $\tilde c\in\Phi\setminus \{c\}$.

  For the reduction, we replace $\Phi$ by
  $\Phi'=\Phi\setminus\{c\}\cup X$ and update $w$ to $w'$ accordingly
  as long as there is a disjunction $c\in\Phi$. Note that this will
  replace exactly $|\clause(\phi)|$ disjunctions with at most $2^d$
  conjunctions each, which implies that the reduction can be carried
  out by an fpt-algorithm. The resulting pair $(\Phi', w')$ satisfies
  for every truth assignment~$\beta$:
  \begin{align*}
    \phantom{=}\quad &\sum_{\mathclap{\sigma\in\Phi, \beta\models\sigma}}                 \hspace{1ex} w(\sigma)   \\
    =\qquad &\sum_{\mathclap{\sigma\in\Phi\setminus\{c\}, \beta\models\sigma}}              
\hspace{1ex} w(\sigma)   &+&\quad &&\begin{cases} w(c)&\text{if $\beta\models 
c$;}\\0&\text{otherwise;}\end{cases}\hspace{5cm} \\
    =\qquad &\sum_{\mathclap{\sigma\in\Phi\setminus\{c\}, \beta\models\sigma}}              \hspace{1ex} w(\sigma)   &+&\quad &&\sum_{\mathclap{\sigma\in X, \beta\models c}}\hspace{1ex} w(c) \\
    =\qquad &\sum_{\mathclap{\sigma\in\Phi\setminus\{c\}, \beta\models\sigma}}              \hspace{1ex} w'(\sigma)  &+&\quad &&\sum_{\mathclap{\sigma\in X, \beta\models c}}\hspace{1ex} w'(\sigma) \\
    =\qquad &\sum_{\mathclap{\sigma\in\Phi\setminus\{c\}\cup X, \beta\models\sigma}} \hspace{1ex} w'(\sigma).   \tag*{\qedhere}\\
  \end{align*}
\end{proof}
\begin{corollary}
  $\PLang[\alpha,d]{abs-cnf}$, $\PLang[\alpha,d]{abs-monotone-cnf}\in \Class{FPT}$
\end{corollary}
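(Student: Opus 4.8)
The plan is to reduce both problems to $\PLang[\alpha,d]{abs-dnf}$, which is in $\Class{FPT}$ by Corollary~\ref{corollary:abs-dnf}, and then invoke the closure of $\Class{FPT}$ under fpt-reductions. In fact almost all of the work has already been done: the clause-by-clause \textsc{cnf}-to-\textsc{dnf} translation of Lemma~\ref{lemma:abs-cnf}, the chain $\PLang[\alpha,d]{abs-dnf}\leq\PLang[\alpha,d]{abs-monotone-dnf}\leq\PLang[\alpha,d]{unbalanced-subgraph}$ from Lemma~\ref{lemma:abs-dnf-to-abs-monotone-dnf} and Corollary~\ref{corollary:dnf-to-unbalanced}, and the kernelization of $\PLang[\alpha]{$d$-unbalanced-subgraph}$ (Corollary~\ref{corollary:kernel}, respectively Corollary~\ref{corollary:kernel-parameters-alpha-d} for the combined parameter). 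So this corollary is really just the bookkeeping step that assembles these pieces.

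First I would dispose of $\PLang[\alpha,d]{abs-monotone-cnf}$ by observing that it is the restriction of $\PLang[\alpha,d]{abs-cnf}$ to the polynomial-time decidable set of monotone instances. Given an instance, one checks in polynomial time whether the formula is monotone; if not, one is free to answer arbitrarily (the instance is not in the monotone language), and if so, one runs the algorithm for $\PLang[\alpha,d]{abs-cnf}$. Hence $\PLang[\alpha,d]{abs-monotone-cnf}\in\Class{FPT}$ follows once $\PLang[\alpha,d]{abs-cnf}\in\Class{FPT}$ is established, and it suffices to treat the latter.

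Second, I would apply Lemma~\ref{lemma:abs-cnf}, which gives an fpt-reduction $\PLang[\alpha,d]{abs-cnf}\leq_{\text{fpt}}\PLang[\alpha,d]{abs-dnf}$: each disjunctive clause of size at most $d$ is replaced, using its truth table, by a set of at most $2^d$ conjunctions with appropriately chosen weights so that the weighted sum of satisfied clauses is preserved for every assignment. Consequently the target value $\alpha$ is untouched and the clause width of the output is again bounded by $d$; the running time is polynomial in the input times a function of $d$ (roughly $|\clause(\phi)|$ clause-expansions, each costing $2^{O(d)}$), so this is indeed an fpt-reduction for the combined parameter $\alpha+d$. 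Composing this reduction with the $\Class{FPT}$-algorithm from Corollary~\ref{corollary:abs-dnf} and using that $\Class{FPT}$ is closed under fpt-reductions yields $\PLang[\alpha,d]{abs-cnf}\in\Class{FPT}$, and together with the first step also $\PLang[\alpha,d]{abs-monotone-cnf}\in\Class{FPT}$.

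I do not expect a genuine obstacle here. The only points that warrant a second look are that the replacement in Lemma~\ref{lemma:abs-cnf} really does preserve the weighted sum (which is the explicit calculation at the end of that lemma's proof) and that the blow-up per clause is a function of the parameter $d$ only (immediate, since a width-$d$ clause has a truth table of size $2^d$); everything else is routine composition of reductions and membership statements that are already available in the excerpt.
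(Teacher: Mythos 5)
Your proposal is correct and follows essentially the same route the paper intends: the corollary is obtained by composing the fpt-reduction of Lemma~\ref{lemma:abs-cnf} with the $\Class{FPT}$-membership of $\PLang[\alpha,d]{abs-dnf}$ from Corollary~\ref{corollary:abs-dnf}, the monotone variant being subsumed as a polynomial-time recognizable restriction of the general \textsc{cnf} problem. Nothing further is needed.
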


\section{Application: Absolute Integer Optimization}\label{section:absip}

We have seen that $\PLang[\alpha,d]{abs-monotone-dnf}\in\Class{FPT}$.
We can generalize this problem to an algebraic optimization problem:
Given a sum of products (a \emph{multivariate polynomial} over binary
variables), find an assignment such that the absolute value of the sum
is at least $\alpha$. Using binary variables, this problem is
essentially the same problem as $\Lang{abs-monotone-dnf}$ (in fact,
optimization over \textsc{DNF}s is sometimes called sum of
products). However, what happens if we do not have binary variables,
but arbitrary integers from some given domain? We prove in this
section that the problem remains tractable for similiar parameters.

Let us represent a multivariate polynomial $p_{A,w}(z_1,\dots,z_n)$
over $n$ variables $z_1,\dots, z_n$ by an $n\times m$ matrix~$A$ and a
vector $w$ of length~$m$ as follows: Define $A\langle z,i,j\rangle$ to
be $z^{A_{i,j}}$ if $z\neq 0$ or $A_{i,j}\neq 0$ and $1$ otherwise;
then write
$p_{A,w}(z_1,\dots,z_n)=\sum_{j=1}^m w_j\cdot\prod_{i=1}^n A\langle
z_i,i,j\rangle$.

\begin{problem}[{$\PLang[\alpha,d]{abs-io}$}]
  \begin{parameterizedproblem}
    \instance An $n\times m$ matrix~$A\in \mathbb{N}^{n\times m}_{\geq 0}$,
    a weight vector $w\in\mathbb{Z}^m$,
    two bounding vectors ${b_\textit{min}}  \in(\mathbb{Z}\cup\{-\infty\})^{n}$ and  
$b_\textit{max}\in(\mathbb{Z}\cup\{\infty\})^{n}$,
    and a target value $\alpha\in\mathbb{N}$.
    \parameter $\alpha$, $d=\max_{j\in\{1,\dots,m\}}\sum_{i=1}^n A_{i,j}$
    \question Is there a solution $x=(x_1, x_2, \dots,x_n)\in\mathbb{Z}^n$ such that $|p_{A,w}(x)|\geq\alpha$ and
    for all $i\in\{1,\dots,n\}$ we have $(b_{\textit{min}})_{i}\leq x_i\leq (b_{\textit{max}})_{i}$?
  \end{parameterizedproblem}
\end{problem}

Here the parameter~$d$ is the maximal number of variables that do
occur in any product of the polynomial. If there is no exponent larger
than $1$, then $d$ is the number of variables in every product or
\emph{monomial}.

In this problem we restrict the domain of each variable to an
interval. One might ask whether we could allow arbitrary linear
inequations instead of upper and lower bounds, which is common in
linear programming. However, in this case the problem becomes
$\Class{W[1]}$-hard:

\begin{lemma}\label{lemma:absip-with-linear-eq}
  $\PLang[\alpha,d]{abs-io}$ with linear inequation constraints is $\Class{W[1]}$-hard.
\end{lemma}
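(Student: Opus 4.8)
The plan is to give an fpt-reduction from $\PLang[k]{independent-set}$, which is $\Class{W[1]}$-hard, to the variant of $\PLang[\alpha,d]{abs-io}$ in which we may, in addition to the box constraints $(b_{\textit{min}})_i\le x_i\le (b_{\textit{max}})_i$, impose linear \emph{inequations}, that is, disequality constraints of the form $\sum_i a_i x_i\neq b$. Let $(G,k)$ with $G=(V,E)$ be an instance of $\PLang[k]{independent-set}$. First I would introduce one variable $x_v$ for every vertex $v\in V$ and confine it to $\{0,1\}$ by the box constraint $0\le x_v\le 1$, with the intended reading that $x_v=1$ encodes that $v$ is placed into the solution. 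The key step is to forbid the simultaneous selection of both endpoints of an edge by a single linear inequation: for every $\{u,v\}\in E$ I would add the constraint $x_u+x_v\neq 2$. Since $x_u,x_v\in\{0,1\}$, the value $x_u+x_v$ equals $2$ precisely when both variables are set to $1$, so this inequation is equivalent to $\neg(x_u\wedge x_v)$ and the feasible integer points are exactly the indicator vectors of independent sets of $G$.

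It then remains to let the objective measure the size of the selected set. I would take the polynomial $p_{A,w}(x)=\sum_{v\in V} x_v$, which has degree $d=1$ because every monomial consists of a single variable, and set the target value to $\alpha=k$. Every feasible assignment corresponds to an independent set $S=\{\,v\mid x_v=1\,\}$ and satisfies $|p_{A,w}(x)|=\sum_{v\in V} x_v=|S|$. Consequently there is a feasible $x$ with $|p_{A,w}(x)|\ge\alpha$ if and only if $G$ contains an independent set of size at least $k$, which proves correctness in both directions.

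The reduction runs in polynomial time and outputs an instance with $\alpha=k$ and $d=1$, so $\alpha+d=k+1$ is bounded by a function of the source parameter; this is precisely what an fpt-reduction requires, and $\Class{W[1]}$-hardness of the inequation-constrained $\PLang[\alpha,d]{abs-io}$ follows. I expect the only delicate point to be the expression of the pairwise edge condition as a \emph{single linear} inequation: the plain disequality $x_u\neq x_v$ would exclude the wrong assignments, and it is exactly the restriction of the variables to $\{0,1\}$ by the box constraints that lets the linear form $x_u+x_v$ isolate the forbidden point $(1,1)$ from all admissible combinations. Finally, observe that for every fixed value of $\alpha$ the constructed instance only asks for an independent set of bounded size and is therefore solvable in polynomial time; hence the construction yields $\Class{W[1]}$-hardness rather than the stronger $\Para\Class{NP}$-hardness, matching the statement.
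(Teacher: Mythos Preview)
Your reduction is essentially the same as the paper's: one binary variable per vertex, the degree-$1$ objective $p(x)=\sum_{v\in V}x_v$, box constraints $0\le x_v\le 1$, target $\alpha=k$, and one extra linear constraint per edge to exclude the assignment $x_u=x_v=1$. The only discrepancy is terminological. You read ``inequation'' as a \emph{disequality} $\sum_i a_ix_i\neq b$ and therefore use $x_u+x_v\neq 2$. In the paper, however, ``inequation'' means an \emph{inequality} in the linear-programming sense (the sentence preceding the lemma says ``arbitrary linear inequations instead of upper and lower bounds, which is common in linear programming''), and the edge constraint is $x_u+x_v\le 1$. Over the box $\{0,1\}^2$ both constraints cut off exactly the point $(1,1)$, so your argument is correct as written for the disequality model and becomes the paper's proof verbatim once you replace $x_u+x_v\neq 2$ by $x_u+x_v\le 1$. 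Everything else, including the parameter bookkeeping ($\alpha=k$, $d=1$) and the correctness discussion, matches.
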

\begin{proof}
  To prove this we turn an instance of $\PLang[k]{independent-set}$
  into an instance of $\PLang[\alpha,d]{abs-io}$ with additional
  inequations. For a graph $G=(\{1,\dots,n\},E)$ and a number
  $k\in\mathbb{N}$, let~$A\in\mathbb{Z}^{n\times n}$ be the identity
  matrix, $w=(1,\dots,1)\in\mathbb{Z}^n$, and $\alpha=k$. Further set
  $b_\textit{min} = (0)^n$, $b_\textit{max} = (1)^n$, and let
  $x_u + x_v \leq 1$ for every edge~$\{u,v\}\in E$ be additional
  constraints.

  Since all weights in $w$ are positive, the absolute value is the
  exact value of the sum. Every size-$k$ independent has now a
  corresponding variable assignment and vice versa.
\end{proof}

Observe that variables may have a huge or possibly infinite domain of
possible values and, hence, it is not obvious that
$\PLang[\alpha,d]{abs-io}$ is in $\Class{FPT}$.  However, we already
mentioned that $\PLang[\alpha,d]{abs-monotone-dnf}$ is equivalent to
$\PLang[\alpha,d]{abs-io}$ if (i) the domain of all variables is
$\{0,1\}$ (i.\,e., $b_\textit{min} = (0)^n$ and
$b_\textit{max} = (1)^n$), and if (ii) we have no exponents (i.\,e.,
$A\in\{0,1\}^{m\times n}$).

Therefore, we only have to consider the remaining cases. If (i) holds
but (ii) does not, we can simply replace all nonzero-values in $A$ by
$1$. This does not change the value of~$p_{A,w}(x_1,\dots,x_n)$ as by
(i) we have only binary variables and $w$ remains the same.

If (i) does not hold, we use a set of reduction rules to transform an
instance over an arbitrary domain into one in which the domain is a
superset of $\{0,1\}$. This is, of course, not the same as property
(i). However, while adapting the domain, we can transform the instance
such that it is reducible to an instance of
$\PLang[\alpha,d]{unbalanced-subgraph}$ with the following property:
Rule~\ref{rule:subedge-rule} either identifies it as a yes-instance or
does nothing. This property can be used by an algorithm for
$\PLang[\alpha,d]{abs-io}$ with the following trick: First modify the
domain such that it contains $\{0,1\}$; then virtually shrink the
domain to $\{0,1\}$ (this restricts the solution space and, thus, may
turn a yes-instance to a no-instance, but may not turn a no-instance
to a yes-instance); perform the reduction and apply
Rule~\ref{rule:subedge-rule}; either deduce that we are dealing with a
yes-instance (trivial decision of Rule~\ref{rule:subedge-rule}) or
that the instance is small (as the rule did nothing). In the latter
case restore the larger domain and explore a search tree to solve the
problem (the size of the search tree is bounded, as
Rule~\ref{rule:subedge-rule} did not trigger).

Let us first describe the reduction rules.  As before, we assume that
a rule may only be applied if rules with smaller numbers were applied
exhaustively. The first rule removes unnecessary variables, products,
and constraints; or detects that the instance has no solution.

\begin{reductionrule}\label{rule:trivial-vars-clauses-constraints}
  Apply the following modifications if possible:
  \begin{enumerate}
    \item If $w_j=0$ for some $j$, then remove the $j$-th column from $A$ and $w$.
    \item If there is some $i$ such that $A_{i,j}=0$ for every $j$,
      then remove the $i$-th row from $A$, $(b_{\textit{min}})_{i}$, and $(b_{\textit{max}})_{i}$.
    \item If there is some $i$ with $(b_{\text{max}})_{i}<(b_{\textit{min}})_{i}$, then return a trivial no-instance.
    \item 
    If there is some $i$ with $(b_{\text{max}})_{i}=(b_{\textit{min}})_{i}$, then for every $j$ where $A_{i,j}>0$ replace $w_j$ by $w_j\cdot (b_{\text{max}})_{i}^{A_{i,j}}$ and set $A_{i,j}=0$.
    \item If there are two equal columns $j_0,j_1$ in $A$, then replace $w_{j_0}$ by $w_{j_0}+w_{j_1}$ 
and remove the $j_1$th column from $A$ and $w$.
  \end{enumerate}
\end{reductionrule}

\begin{lemma}\label{lemma:trivials-vars-rule}
  Rule~\ref{rule:trivial-vars-clauses-constraints} is safe.
\end{lemma}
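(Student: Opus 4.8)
The plan is to verify, one item at a time, that each of the five modifications in Rule~\ref{rule:trivial-vars-clauses-constraints} is safe, i.e.\ that it either produces an instance with the same answer or correctly reports a trivial no-instance. Throughout, let $p = p_{A,w}$ denote the polynomial of the original instance and $p'=p_{A',w'}$ the polynomial after one modification, and let $B$ (resp.\ $B'$) denote the box of feasible assignments determined by $b_{\textit{min}}$ and $b_{\textit{max}}$. It suffices to show that the optimum of $|p(x)|$ over $x\in B$ equals the optimum of $|p'(x')|$ over $x'\in B'$, since the target value $\alpha$ is untouched by every item.

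First I would handle the two ``deletion'' items (1)~and~(2). For item~1, a column $j$ with $w_j=0$ contributes the summand $w_j\prod_i A\langle z_i,i,j\rangle = 0$ to $p$ for every assignment, so deleting it changes $p$ on no input; the box $B$ is unchanged. For item~2, a row $i$ that is identically zero in $A$ means the variable $z_i$ occurs with exponent $0$ in every monomial, hence $A\langle z,i,j\rangle = 1$ for all $j$ and all values of $z$ (using the convention that $z^0=1$ and the special case $A\langle 0,i,j\rangle=1$ when $A_{i,j}=0$); thus $p$ does not depend on $z_i$ at all, and projecting away the $i$-th coordinate is a bijection between the feasible sets that preserves the objective. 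Items~3 and~4 concern a coordinate $i$ whose interval is degenerate: if $(b_{\textit{max}})_i<(b_{\textit{min}})_i$ the box $B$ is empty, so there is no feasible solution and a trivial no-instance is the correct output (item~3); if $(b_{\textit{max}})_i=(b_{\textit{min}})_i=:v$, then every feasible $x$ has $x_i=v$, so each monomial $j$ evaluates to $w_j v^{A_{i,j}}\prod_{i'\ne i}A\langle x_{i'},i',j\rangle$, which is exactly the monomial of the modified instance with the $i$-th exponent set to $0$ and $w_j$ replaced by $w_j v^{A_{i,j}}$; hence $p=p'$ on all feasible assignments, and since the row $i$ is now identically zero it is subsequently handled by item~2, but safeness is already established by the equality of objective values on the (unchanged) box. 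Item~5 is the merging of duplicate columns: if columns $j_0$ and $j_1$ of $A$ are equal then $\prod_i A\langle z_i,i,j_0\rangle = \prod_i A\langle z_i,i,j_1\rangle$ for every assignment, so the combined contribution $w_{j_0}\prod_i A\langle z_i,i,j_0\rangle + w_{j_1}\prod_i A\langle z_i,i,j_1\rangle$ equals $(w_{j_0}+w_{j_1})\prod_i A\langle z_i,i,j_0\rangle$, which is the single monomial of the modified instance; again $p=p'$ pointwise and $B$ is unchanged.

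Finally, I would note that each modification is clearly computable in polynomial time (scanning $w$, scanning rows of $A$, comparing interval endpoints, comparing columns pairwise), and that applying them exhaustively terminates because items~1,~2 and~5 strictly decrease $m+n$ and items~3,~4 either terminate the process or strictly decrease the number of columns with a positive entry in a previously-degenerate row (feeding into item~2). Combining the pointwise identities $p=p'$ and the (possibly empty-box) analysis gives that the answer to the instance is preserved, so the rule is safe. I do not expect a genuine obstacle here; the only point requiring a little care is keeping the convention $A\langle z,i,j\rangle$ straight in items~2 and~4, in particular that $A\langle 0,i,j\rangle=1$ when $A_{i,j}=0$ so that a zero exponent really does make the variable irrelevant even at the value $0$.
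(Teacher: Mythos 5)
Your proof is correct and follows essentially the same item-by-item argument as the paper's (which is only a three-sentence sketch): pointwise identities of the polynomial for items 1, 2, 4 and 5, and emptiness of the feasible box for item 3. Your write-up is in fact more complete than the paper's, which does not explicitly discuss item 5 or the termination/running-time issues you address.
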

\begin{proof}
  For Item~1 and 2 observe that neither variables that do
  not occur in the represented polynomial nor products that contain
  zero as a factor contribute to the value of the polynomial. Item~3
  follows from the fact that there cannot be a solution if the
  domain of a variable is the empty set. Finally, Item~4 follows by
  the distributivity of addition and multiplication.
\end{proof}
If this rule cannot be applied, the domain of any variable contains at
least two consecutive values. With the next rule we change
the domains of all variables such that the domain of every variable
contains $0$ and $1$.

\begin{reductionrule}\label{rule:shift-domain}
  Apply the following modifications if possible:
  \begin{enumerate}
    \item If there is an $i$ with $\{0,1\}\not
    \subseteq[(b_{\textit{min}})_{i},(b_{\textit{max}})_{i}]$ and  $(b_{\textit{min}})_{i}\in \mathbb{Z}$,
    then for every column $j_0$ where $A_{i,j_0}>0$ add $c:=A_{i,j_0}$
    new $j_1,j_2,\dots,j_c$-th columns such that 
    \[
      A_{i',j_k} =
      \begin{cases}
        A_{i',j_0}  &\text{if $i'\neq i$}\\
        A_{i,j_0}-k &\text{if $i'= i$}
      \end{cases}
      \quad\text{and}\quad
      w_{j_k} = \binom{c}{k}\cdot w_{j_0}\cdot (b_{\textit{min}})_{i}^{k}\quad\text{for $k\in\{1,\dots,c\}$}.
    \]
    Finally set $(b_{\textit{min}})_{i}$ to $0$ and $(b_{\textit{max}})_{i}$ to
    $(b_{\textit{max}})_{i}-(b_{\textit{min}})_{i}$.
    \item If there is an $i$ with $(b_{\textit{min}})_{i}=-\infty$ and
      $(b_{\textit{max}})_{i}\leq 0$, then replace $w_j$ by $-w_j$ for every $j$ where
      $A_{i,j}$ is odd, and set $(b_{\textit{min}})_{i}$ to  $-(b_{\textit{max}})_{i}$ and 
      $(b_{\textit{max}})_{i}$ to $\infty$.
  \end{enumerate}
\end{reductionrule}

\begin{lemma}\label{lemma:rule-shift-domain}
  Rule~\ref{rule:shift-domain} is safe.
\end{lemma}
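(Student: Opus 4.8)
The plan is to recognise that both modifications listed in Rule~\ref{rule:shift-domain} are invertible affine substitutions of a single coordinate $x_i$, so the whole argument reduces to two routine verifications: that the substitution maps the feasible box of the old instance bijectively onto that of the new instance, and that the encoded polynomial $p_{A,w}$ is carried exactly onto $p_{A',w'}$ under it. Given these, an assignment $x$ with $|p_{A,w}(x)|\ge\alpha$ corresponds to an assignment $x'$ with $|p_{A',w'}(x')|\ge\alpha$ and conversely, which is precisely what safeness demands; the case in which the rule is inapplicable is trivial. So I would state this reduction once and then treat the two items in turn.

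For Item~1 I would fix $b:=(b_{\textit{min}})_i\in\mathbb{Z}$ and work with the substitution $x_i=x_i'+b$, the identity on every other coordinate. The feasibility part is immediate: the $i$-th constraint $b\le x_i\le(b_{\textit{max}})_i$ becomes $0\le x_i'\le(b_{\textit{max}})_i-b$, which is exactly the updated pair of bounds, every other constraint is untouched, and $x_i'=x_i-b$ is visibly a bijection between the two feasible regions. For the polynomial I would argue column by column: a column $j_0$ with $A_{i,j_0}=0$ does not involve $x_i$ and contributes identically before and after, while a column with $A_{i,j_0}=c>0$ contributes $w_{j_0}(x_i'+b)^c\prod_{\ell\ne i}A\langle x_\ell,\ell,j_0\rangle$ after substitution. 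Expanding $(x_i'+b)^c=\sum_{k=0}^c\binom{c}{k}(x_i')^{c-k}b^k$ by the binomial theorem and distributing, the $k=0$ summand is exactly the original column $j_0$ read in the new variable, and for $1\le k\le c$ the $k$-th summand is exactly the freshly added column $j_k$ (exponent $c-k=A_{i,j_0}-k$ in coordinate $i$, the same exponents elsewhere, weight $\binom{c}{k}w_{j_0}b^k$). Summing over all columns gives $p_{A,w}(x)=p_{A',w'}(x')$.

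For Item~2, where $(b_{\textit{min}})_i=-\infty$ and $(b_{\textit{max}})_i\le 0$, I would use the reflection $x_i=-x_i'$. Then the half-line $x_i\le(b_{\textit{max}})_i$ becomes $x_i'\ge-(b_{\textit{max}})_i$, matching the new bounds $[-(b_{\textit{max}})_i,\infty)$, and nothing else changes. For the polynomial, each column $j$ has $A\langle x_i,i,j\rangle=x_i^{A_{i,j}}=(-1)^{A_{i,j}}(x_i')^{A_{i,j}}$ when $A_{i,j}>0$, and equals the constant $1$ when $A_{i,j}=0$; hence the value of column $j$ is multiplied by $(-1)^{A_{i,j}}$, i.e.\ negated exactly when $A_{i,j}$ is odd. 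Flipping $w_j$ for those columns cancels this sign, so again $p_{A,w}(x)=p_{A',w'}(x')$.

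I do not anticipate a genuine obstacle. The one point that must be handled with care is the convention $0^0=1$ built into the definition of $A\langle z,i,j\rangle$ in Section~\ref{section:absip}: it has to be invoked so that a column whose exponent in coordinate $i$ is zero --- in particular the new column $j_c$ created in Item~1, and any even-exponent column in Item~2 --- behaves consistently with the binomial identity (resp.\ with the sign computation) when the substituted value happens to vanish; since that convention is already fixed in the preliminaries of the section, this is merely a matter of citing it. One should also remark in passing that the new exponents $A_{i,j_0}-k$ for $1\le k\le A_{i,j_0}$ are all non-negative and the new weights $\binom{c}{k}w_{j_0}b^k$ are integers, so the modified instance is again a legal instance of $\PLang[\alpha,d]{abs-io}$.
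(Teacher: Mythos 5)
Your proposal is correct and follows essentially the same route as the paper's proof: the affine substitution $x_i = y_i + (b_{\textit{min}})_i$ with a binomial expansion matching the new columns for Item~1, and the reflection $x_i = -y_i$ with a sign flip on odd-exponent columns for Item~2. Your explicit attention to the $0^0=1$ convention in the definition of $A\langle z,i,j\rangle$ is a minor point of care the paper leaves implicit, but the argument is the same.
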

\begin{proof}
  Any valid solution~$(y_1,\dots,y_n)\in\mathbb{Z}^n$ for
  the output instance~$(A',w',b'_{\textit{min}},b'_{\textit{max}},\alpha)$ can be transformed
  into a solution~$(x_1,\dots,x_n)\in\mathbb{Z}^n$ for the
  input~$(A,w,b_{\textit{min}},b_{\textit{max}},\alpha)$ and vice versa:
  \begin{enumerate}
  \item Let $(x_1,\dots,x_n),(y_1,\dots,y_n)\in\mathbb{Z}^n$ such that
    $y_j=x_j$ for $i\neq j$ and $y_i=x_i-(b_{\textit{min}})_{i}$. Since the
    reduction rule only changes the bounding vectors at position $i$, we
    have for every $j\neq i$ that
    \[
      (b'_{\textit{min}})_{j}=(b_{\textit{min}})_{j}\leq x_j=y_j\leq
      (b_{\textit{max}})_{j}=(b'_{\textit{max}})_{j}.
    \]
    Consider $i$, then
    $(b_{\textit{min}})_{i}\leq x_i\leq (b_{\textit{max}})_{i}$ if,
    and only if:
    \[
      (b'_{\textit{min}})_{i}=0=(b_{\textit{min}})_{i}-(b_{\textit{min}})_{i}\leq x_i-(b_{\textit{min}})_{i}=y_i\leq
      (b_{\textit{max}})_{i}-(b_{\textit{min}})_{i}=(b'_{\textit{max}})_{i}.
    \]

    We show that
    $p_{A,w}(x_1,\dots,x_n)=p_{A',w'}(y_1,\dots,y_n)$. Recall that
    $x_j=y_j$ for all $j\neq i$. Since every column in the matrix
    relates to a product in the represented polynomial, we only need
    to consider those columns (product terms in the sum respectively)
    that were used by the rule. Therefore, it is sufficient to show
    that for every $j_0$-th column in $A$ where $A_{i,j_0}>0$ the
    following holds:
    \[
      w_{j_0} \prod_{r=1}^n A\langle  x_r,r,j \rangle = \sum_{k=0}^{A_{i,j_0}} w'_{j_k} \prod_{r=1}^n A'\langle  y_r,r,j \rangle.
    \]
    This can be seen as follows:
    \begin{align*}
         &w_{j_0}                                     &\cdot&\quad &&\prod_{r=1}^n A\langle  x_r,r,j \rangle\hspace{6cm}\\
      =\quad &w_{j_0} \cdot x_i^{A_{i,j_0}}              &\cdot&\quad  &&\prod_{\mathclap{\substack{r\in\{1,\dots,n\}\\r\neq i}}} A\langle  x_r,r,j \rangle\\
      =\quad &w_{j_0} \cdot (y_i+(b_{\textit{min}})_{i})^{A_{i,j_0}} &\cdot&\quad &&\prod_{\mathclap{\substack{r\in\{1,\dots,n\}\\r\neq i}}} A\langle  y_r,r,j \rangle\\
      =\quad &\sum_{k=0}^{A_{i,j_0}} \binom{A_{i,j_0}}{k} w_{j_0}
          \cdot (b_{\textit{min}})_{i}^k \cdot y_i^{A_{i,j_0}-k}     &\cdot&\quad &&\prod_{\mathclap{\substack{r\in\{1,\dots,n\}\\r\neq i}}} A\langle  y_r,r,j \rangle\\
      =\quad &\sum_{k=0}^{A_{i,j_0}} w'_{j_k}              &\cdot&\quad &&\prod_{r=1}^n A'\langle  y_r,r,j \rangle.\\
    \end{align*}
  \item Let $(x_1,\dots,x_n),(y_1,\dots,y_n)\in\mathbb{Z}^n$  such
    that $y_j=x_j$ for $i\neq j$ and $y_i=-x_i$. The claim follows by
    the same argument using the substitution $x_i=-y_i$ in
  $p_{A,w}(x_1,\dots,x_n)$.\qedhere
  \end{enumerate}
\end{proof}

The first subrule shifts the domain of variables that have a finite
lower bound. Variables without such a bound are turned into variables
with a finite lower bound by the second subrule. The first subrule may
then be applied again.

As soon as none of the rules can be applied, every variable that
occurs in the polynomial~$p_{A,w}$ has a domain that is a superset
of~$\{0,1\}$. We can turn our reduced instance into an instance of
$\PLang[\alpha,d]{abs-dnf}$, which directly translates into an
equivalent instance of $\PLang[\alpha,d]{unbalanced-subgraph}$.

Thereby we might turn an $\Lang{abs-io}$ yes-instance into a
no-instance of \Lang{unbalanced-subgraph}. Recall that running the
kernelization algorithm from Section~\ref{section:absdnf} (especially
Rule~\ref{rule:subedge-rule}) either returns a trivial yes-instance or
does nothing. After applying
Rule~\ref{rule:trivial-vars-clauses-constraints} on the initial
instance, rules~\ref{rule:isolated-vertices}
and~\ref{rule:edges-without-weights} cannot be applied on the
hypergraph.  Therefore, the inital instance is a yes-instance or the
size of the matrix is bounded by the parameters (because $(H,w)$ is a
kernel).  Note that this does not imply that we obtain a kernel: The
values in the bounding vectors are eventually not bounded yet.
However, we show that such problematic instances can be reduced to a
set of equivalent instances in which we have control over all these
values.

\absip*
\begin{proof}
  Let us firstly present some branching rule for our problem:
  
  \begin{branchingrule}\label{rule:branching}
    Let there be some~$i^*$ where $(b_{\textit{max}})_{i^*}-(b_{\textit{min}})_{i^*}\geq
    2e\alpha$ with $e=\max_j A_{i^*,j}$. Let $w^{(k)}\in\mathbb{Z}^m$
    with $w^{(k)}_j=w_j$ if $A_{i^*,j}=k$ and $w^{(k)}_j=0$ otherwise,
    and let $A'$ be the matrix obtained by setting all values in the $i^*$-th
    row of $A$ to $0$.
    Then
    $(A,w,b_{\textit{min}},b_{\textit{max}},\alpha)\in\Lang{abs-io}$ iff
    there is some~$k\in\{0,\dots,e\}$ such that
    $(A',w^{(k)},b_{\textit{min}},b_{\textit{max}},\alpha_k)\in\Lang{abs-io}$ where $\alpha_k=1$ for $k > 1$ and $\alpha_0=\alpha$.
  \end{branchingrule}
  \begin{claim}\label{lemma:branching-rule}
    Branching Rule~\ref{rule:branching} is safe.
  \end{claim}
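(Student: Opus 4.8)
The plan is to read the equivalence in Branching Rule~\ref{rule:branching} through the lens of \emph{expanding $p_{A,w}$ as a univariate polynomial in the single variable $z_{i^*}$}. The first step is the purely syntactic observation that, grouping the $m$ product terms of $p_{A,w}$ by the exponent $A_{i^*,j}$ of $z_{i^*}$ (and using the convention $0^0=1$ underlying the definition of $A\langle\cdot,\cdot,\cdot\rangle$),
\[
  p_{A,w}(z)\;=\;\sum_{k=0}^{e} z_{i^*}^{\,k}\cdot p_{A',w^{(k)}}(z),
\]
where each $p_{A',w^{(k)}}$ is independent of $z_{i^*}$ because the $i^*$-th row of $A'$ is zero. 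Everything else is elementary reasoning about integer roots of polynomials of degree at most $e$. Throughout I assume $\alpha\ge 1$ (for $\alpha=0$ both sides are yes-instances whenever the domains are nonempty, so the claim is trivial) and $e\ge 1$ (otherwise the $i^*$-th row of $A$ is zero and the rule is vacuous, which cannot happen once Rule~\ref{rule:trivial-vars-clauses-constraints} has been applied exhaustively).

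For the ``only if'' direction, take a feasible $x$ with $|p_{A,w}(x)|\ge\alpha$ and write $c_k:=p_{A',w^{(k)}}(x)$, so that $p_{A,w}(x)=\sum_{k}x_{i^*}^{k}c_k$. If $|c_0|\ge\alpha$, then $x$ already witnesses $(A',w^{(0)},b_{\textit{min}},b_{\textit{max}},\alpha)\in\Lang{abs-io}$, i.e.\ the branch $k=0$ with $\alpha_0=\alpha$. Otherwise $\bigl|\sum_{k\ge1}x_{i^*}^{k}c_k\bigr|\ge\alpha-|c_0|>0$, so some $c_k$ with $k\ge1$ is nonzero, and then $x$ witnesses $(A',w^{(k)},b_{\textit{min}},b_{\textit{max}},1)\in\Lang{abs-io}$ (reading $\alpha_k=1$ for all $k\ge 1$; the case $k=1$ is handled exactly as $k>1$).

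For the ``if'' direction, suppose $(A',w^{(k)},b_{\textit{min}},b_{\textit{max}},\alpha_k)\in\Lang{abs-io}$ via a feasible $y$, set $c_l:=p_{A',w^{(l)}}(y)$, and let $g(t):=\sum_{l=0}^{e}c_l\,t^l$; by the identity above, replacing the $i^*$-th coordinate of $y$ by an arbitrary value $t$ turns $p_{A,w}$ into $g(t)$. The hypothesis gives $|c_k|\ge\alpha_k\ge 1$, so $g$ is a nonzero polynomial of degree at most $e$. If $g$ is constant, then necessarily $k=0$, $g\equiv c_0$ with $|c_0|\ge\alpha$, and any value of $z_{i^*}$ in the (nonempty, since $y$ is feasible) interval $[(b_{\textit{min}})_{i^*},(b_{\textit{max}})_{i^*}]$ does the job. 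Otherwise $g$ is non-constant, and for each of the $2\alpha-1$ integers $v$ with $|v|<\alpha$ the polynomial $g-v$ is a nonzero polynomial of degree at most $e$ and hence has at most $e$ integer roots; therefore fewer than $2e\alpha$ integers $t$ satisfy $|g(t)|<\alpha$. Since the precondition $(b_{\textit{max}})_{i^*}-(b_{\textit{min}})_{i^*}\ge 2e\alpha$ forces at least $2e\alpha+1$ integers into the interval (or infinitely many), there is an integer $t$ there with $|g(t)|\ge\alpha$; setting $x_{i^*}:=t$ and $x_i:=y_i$ for $i\ne i^*$ yields a feasible solution of the original instance of value at least $\alpha$.

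The main obstacle is this ``if'' direction for $k\ge 1$: a witness $y$ only certifies that the coefficient $c_k$ is nonzero, whereas we must recover the full bound $\alpha$ for the original polynomial, and the only lever available is the choice of $z_{i^*}$. The root-counting argument is exactly what makes this work, and it explains the constant $2e\alpha$ in the rule's hypothesis~--~a polynomial of degree at most $e$ cannot stay strictly below $\alpha$ in absolute value on that many integer points. As a byproduct the same argument applies verbatim to one- or two-sidedly infinite intervals, since the set of ``bad'' integers $t$ is finite regardless of the interval.
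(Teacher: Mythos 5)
Your proof is correct, and it follows the paper's proof in its skeleton~--~the same grouping of the monomials of $p_{A,w}$ by the exponent of $z_{i^*}$, yielding $p_{A,w}=\sum_{k=0}^e z_{i^*}^k\, p_{A',w^{(k)}}$, and essentially the same case analysis in the ``only if'' direction (the paper splits on whether some $c_k$ with $k>0$ is nonzero; you split on whether $|c_0|\geq\alpha$; these are equivalent). The interesting divergence is in the ``if'' direction for $k\geq 1$. The paper first selects the \emph{largest} $k$ whose branch is a yes-instance, so that the univariate polynomial $p(z)$ has degree exactly $k$ with $c_k\neq 0$, and then argues by curve sketching: $p$ has at most $e-1$ real extrema, so the interval of length $\geq 2e\alpha$ contains a monotone stretch of length $>2\alpha$, on which the integer-valued polynomial must attain an absolute value $\geq\alpha$. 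You instead count roots: for each of the $2\alpha-1$ integers $v$ with $|v|<\alpha$, the nonzero polynomial $g-v$ has at most $e$ integer roots, so fewer than $2e\alpha$ integers are ``bad,'' while the interval contains at least $2e\alpha+1$ integers. Your argument is cleaner on two counts: it does not need the maximal-$k$ selection (any yes-branch with $k\geq1$ already forces $g$ to be non-constant), and it sidesteps the slightly delicate monotonicity-between-extrema reasoning (including the off-by-one care needed to guarantee a subinterval of length \emph{strictly} greater than $2\alpha$). Both arguments explain the same constant $2e\alpha$ in the rule's precondition and both handle infinite domains. You also correctly flag and repair the paper's apparent typo ``$\alpha_k=1$ for $k>1$'' as ``$k\geq1$,'' which is how the rule must be read for the equivalence to hold.
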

  \begin{proof}
    Let $(x_1, \dots, x_n)$ be an arbitrary solution for the input
    instance. By term rewriting it follows that
    \begin{align*}
      p_{A,w}(x_1,\dots,x_n) &= \sum_{j=1}^m w_j\cdot\prod_{i=1}^n A\langle x_{i},i,j\rangle\\
      &=  \sum_{k=0}^{e}\left(\sum_{
        \substack{j\in\{1,\dots,m\}\\A_{i^*,j}=k}
        } w_j\cdot\prod_{\substack{i\in\{1,\dots,n\}}} A\langle x_{i},i,j\rangle\right)\\
        &=  \sum_{k=0}^{e}\left(\sum_{
          \substack{j\in\{1,\dots,m\}\\A_{i^*,j}=k}
          } w^{(k)}_j\cdot\prod_{\substack{i\in\{1,\dots,n\}\\i\neq i^*}} A'\langle x_{i},i,j\rangle \cdot A\langle x_{i^*},i^*,j,\rangle\right)\\
        &=  \sum_{k=0}^{e}\left(p_{A',w^{(k)}}(x_1,\dots,x_n) \cdot A\langle x_{i^*},i^*,k,\rangle\right).\\
    \end{align*}
    We show that one of the branching instance has a solution, too.  If
    $p_{A',w^{(k)}}(x_1,\dots,x_n)>0$ for some $k>0$, we are
    done. Otherwise $p_{A',w^{(k)}}(x_1,\dots,x_n)=0$ holds for every
    $k>0$. Then the equation above implies
    $p_{A',w^{(0)}}(x_1,\dots,x_n) = p_{A,w}(x_1,\dots,x_n)$. Since
    $|p_{A,w}(x_1,\dots,x_n)|\geq \alpha$, we get that
    $|p_{A',w^{(0)}}(x_1,\dots,x_n)|\geq\alpha$, which means
    $(A',w^{(0)},b_{\textit{min}},b_{\textit{max}},\alpha)\in\Lang{abs-io}$.

    It remains to show that the input instance is a yes-instance if
    one of the branching instances is a yes-instance. First consider
    $(A',w^{(0)},b_{\textit{min}},b_{\textit{max}},\alpha)$. Note,
    that $p_{A',w^{(0)}}(x_1,\dots,x_n)$ does not depend on $x_{i^*}$
    because $w^{(0)}_j=0$ for every product that includes
    $x_{i_*}$. For the same reason we have:
    \[
      p_{A,w}(x_1,\dots,x_{i^*-1},0,x_{i^*+1},\dots,x_n) =p_{A',w^{(0)}}(x_1,\dots,x_n).
    \]
    Hence, we can turn a
    solution of
    $(A',w^{(0)},b_{\textit{min}},b_{\textit{max}},\alpha)$ into a
    solution for the input
    instance~$(A,w,b_{\textit{min}},b_{\textit{max}},\alpha)$ by
    replacing~$x_i$ by~$0$.
    
    Now assume that
    $(A',w^{(0)},b_{\textit{min}},b_{\textit{max}}),\alpha)$ is a
    no-instance and consider the largest number~$k\in\{1,\dots,e\}$
    such that $(A',w^{(k)},b_{\textit{min}},b_{\textit{max}},1)$ is a
    yes-instance. Let $(y_1,\dots,y_n)$ be one of its
    solutions. Define the
    polynomial~$p(z):=p_{A,w}(y_1,\dots,y_{i^*-1},z,y_{i^*+1},\dots,y_n)$. It
    follows that $p(z) = \sum_{\ell=0}^{k}c_\ell \cdot z^\ell$ with
    $c_\ell = p_{A',w^{(\ell)}}(y_1,\dots,y_{n})$. Note that for every
    $z$ we have
    $|p_{A',w^{(k)}}(y_1,\dots,y_{i^*-1},z,y_{i^*+1},\dots,y_n)|\geq
    1$ as $A'_{i^*,j}=0$ for every column~$j$ and, thus, $c_k\neq 0$.
    We will now argue by curve sketching over $\mathbb{R}$ that there
    is a
    $z\in\{(b_{\textit{min}})_{i},\dots,(b_{\textit{max}})_{i}\}\cap\mathbb{Z}$
    with $|p(z)|\geq \alpha$. Any interval~$[a,b]$ with $b-a>2\alpha$
    where either $(a,b]$ or $[a,b)$ contains no extremum of the
    polynomial contains at least one value $z\in\mathbb{Z}$ where
    $|p(z)|\geq \alpha$. This is because the polynomial is strictly
    increasing or decreasing on such an interval and the domain and
    image are restricted to $\mathbb{Z}$. The polynomial $p$ has a
    degree of $k$, which is at most~$e$, and, thus, it has at most
    $e-1$ extreme points. Since
    $(b_{\textit{max}})_{i}-(b_{\textit{min}})_{i}\geq 2e\alpha$,
    there is such an interval within
    $[(b_{\textit{min}})_{i};(b_{\textit{max}})_{i}]$. Hence, there is
    a $z\in[(b_{\textit{min}})_{i};(b_{\textit{max}})_{i}]$ with
    $|p(z)|\geq \alpha$. By the definition of $p(z)$, we finally get
    that $(y_1,\dots,y_{i-1},z,y_{i+1}\dots,y_n)$ is a solution for
    the input instance.
  \end{proof}

  Note that this rule reduces the number of variables on which the
  value of the polynomial depends. Recall
  Rule~\ref{rule:trivial-vars-clauses-constraints}: As soon as a row
  contains only zeroes, it can be safely removed. Since the rules
  output at most~$e\leq d$ branches, an exhaustive application of 
  all the reduction rule results in a search tree of size
  $O(d^n)$. This search tree computes $O(d^n)$ instances in total, all
  of which have no variables with large domain (more than $2d\alpha$
  possible values).
  
  We partition the remaining part of the proof into three parts:
  First, we present the algorithm sketched in the main text in detail;
  second, we prove that this algorithm correctly solves
  $\PLang[\alpha,d]{abs-io}$; and third, we argue that the
  algorithm in fpt-time.
  
  \subparagraph*{The Algorithm.}  Our algorithm has four phases. In
  the first phase it exhaustively applies the reduction
  rules~\ref{rule:trivial-vars-clauses-constraints}
  and~\ref{rule:shift-domain} to obtain
  an instance~$(A,w,b_{\textit{min}},b_{\textit{max}},\alpha)$. If
  Rule~\ref{rule:trivial-vars-clauses-constraints} returns a trivial
  no-instance, we reject the initial input. As soon as no
  reduction rule can be applied, it follows from
  Rule~\ref{rule:trivial-vars-clauses-constraints} that every variable
  in $p_{A,w}$ has a domain of size at least two. By
  Rule~\ref{rule:shift-domain} every domain must be a superset of
  $\{0,1\}$.

  In the second phase we use the kernelization algorithm for
  $\PLang[\alpha,d]{unbalanced-subgraph}$: For the instance
  $(A,w,b_{\textit{min}},b_{\textit{max}},\alpha)$ obtained from
  the previous phase, let $H=(V,E)$ be a $d$-hypergraph and let
  $w\colon E\to\mathbb{Z}$ be a weight function with
  $V=\{1,\dots,n\}$, $E=\{e_1,\dots,e_m\}$, 
  $e_j=\{i \mid A_{i,j}>0\}$, and $w(e_j) = w_j$. We test whether
  $|E|\geq \edgenumber{}$ and accept if so.

  Otherwise $m=|E|\leq \edgenumber{}$ and $n\leq d\cdot m$. Then
  we continue with the third phase and exhaustively apply Branching
  Rule~\ref{rule:branching} as well as the reduction rules
  (Rule~\ref{rule:trivial-vars-clauses-constraints} and
  Rule~\ref{rule:shift-domain}). This gives us a search tree that
  computes a set of instance on which neither a reduction rule
  nor the branching rule is applicable. The domain of all variables in
  all of these instances is bounded by the parameters and, hence, we
  can solve them via ``brute-force.''

  \subparagraph*{Proof of Correctness.}
  Consider the first phase of the algorithm where
  rules~\ref{rule:trivial-vars-clauses-constraints} and
  \ref{rule:shift-domain} are applied in the given order as long as
  possible. Since the rules are safe, it is correct
  to stop and reject if
  Rule~\ref{rule:trivial-vars-clauses-constraints} returns a
  trivial no-instance. Otherwise we obtain an instance that has a
  solution if, and only if, the initial input instance does.

  Since the reduction rules have been applied exhaustively in the
  first phase, the domain for every variable is a superset of
  $\{0,1\}$ afterwards. Therefore, any solution for the constructed
  $\Lang{unbalanced-subgraph}$ instance gives us a valid solution. By
  Rule~\ref{rule:trivial-vars-clauses-constraints} there is no
  isolated vertex in $V$ nor an edge $e\in E$ with $w(e)=0$ in the
  constructed hypergraph. This means the only rules
  that modify the hypergraph (Rule~\ref{rule:isolated-vertices} and
  Rule~\ref{rule:edges-without-weights}) are
  not applicable.  Because of that, from the proof of
  Corollary~\ref{corollary:kernel-parameters-alpha-d}, and from
  Lemma~\ref{lemma:rule-four-safe}, the test for the size of the
  hypergraph correctly identifies trivial yes-instances or does
  nothing. We can derive a solution~$(x_1,\dots,x_n)\in\mathbb{Z}^n$
  with $|p_{A,w}(x_1,\dots,x_n)|\geq\alpha$ from a solution~$X$ for
  our hypergraph where $|w[X]|\geq\alpha$ as follows by setting $x_i = 1$ if
  $i\in X$, and by setting $x_i=0$ otherwise. This is correct as the
  term that we maximize for the $\Lang{unbalanced-subgraph}$ instance is
  the same as $|p_{A,w}(x_1,\dots,x_n)|$, and because the domain of
  every variable contains $0$ and $1$.
  
  In the third phase, the algorithm reduces the instance to a set of
  instances in which the domain of every variable is bounded by the
  parameter.  Variables with a too large domain will be eliminated by
  the branching rule and a following application of
  Rule~\ref{rule:trivial-vars-clauses-constraints} to remove the row
  containing only zeros. From the safeness of the branching rule it
  directly follows that there is one yes-instance if, and only if, one
  of the final branching instances is a yes-instance. Since the number
  of possible solutions for each instance is finite, a brute-force algorithm will decide for each
  of these instances correctly whether it is a yes-instance. 

  \subparagraph*{Runtime Analysis.}
  We first argue that applying the
  reduction rules exhaustively can be done in time
  $f(\alpha,d)\cdot \mu^c$, were $\mu$ is the encoding length of the
  input, $f$ some computable function, and $c\in\mathbb{N}$ a
  constant.
 
  Rule~\ref{rule:shift-domain} removes a variable with a domain that
  is not a superset of $\{0,1\}$ and no rule introduces new
  variables. Hence, Rule~\ref{rule:shift-domain} is applied at most
  $O(n)$ times.  Rule~\ref{rule:trivial-vars-clauses-constraints} can
  only be applied $O(n+m)$ times because subrules 1--4 can be applied
  only once initially; the fifth subrule can be applied after
  Rule~\ref{rule:shift-domain}. Hence the rules can only
  be applied a polynomial number times. Observe that almost all
  reduction rules are computable in polynomial time with respect to
  the size of the input instance.  The sole exceptions are
  Rule~\ref{rule:shift-domain}.1 and
  Rule~\ref{rule:trivial-vars-clauses-constraints}.4, which still run
  in fpt-time.

  We also have to show that the size of subproblems does not exceed
  the algorithm's time bound of $f(\alpha,d)\cdot \mu^c$. The critical
  part is Rule~\ref{rule:shift-domain}.1~--~in all other cases the
  instance becomes only smaller. Each time we apply the rule for some
  $i$, the number of columns~$j$ for which $A_{i,j}>0$ is increased by
  a factor of at most $d$ as $A_{i,j}\leq d$. Every new column arises
  directly or indirectly from one column that was in the initial
  instance and, hence, the size of the instance increases at most by a
  factor of $d^d$ in total.
 
  For the search tree note that once the kernelization was used in the
  second phase, $n$ and $m$ are bounded by the parameters. The height
  and size of the search tree is therefore bounded, too. Note that the
  kernelization itself runs in polynomial time by the definition of a
  kernelization algorithm.  Finally, whenever Branching
  Rule~\ref{rule:branching} cannot be applied, we have to solve the
  instance at the leaves of the search tree.  Since neither a reduction
  rule nor the branching rule are applicable, it holds that $n$, $m$,
  all values in the matrix, and all values in the bounding vectors are
  bounded by the parameters for these instances. Thus, the brute-force
  algorithm's runtime is bounded by some computable function that
  depends only on the parameters (note that the weights are not
  necessarily bounded, but this is no problem for a combinatorial
  algorithm that enumerates the solutions).
\end{proof}

\section{Conclusion and Outlook}\label{section:conclusion}

We considered several variants of the maximum satisfiability problem
and showed that, as soon as we allow negative weights, those variants
become $\Class{W[1]}$-hard parameterized by the solution
size~$\alpha$~--~even for monotone clauses of fixed size~$d$.  On the
other hand, we obtained fixed-parameter tractability results
parameterized by $\alpha+d$ if we optimize the absolute value of the
target function.  The latter result was obtained via a kernelization
for the auxiliary hypergraph problem
$\PLang[\alpha,d]{unbalanced-subgraphs}$, which we think may be of
independent interest. Using these techniques, we were able to almost
completely resolve the complexity of $\Lang{abs-dnf}$, see
Table~\ref{table:results} in the introduction for an overview. The
only remaining case is $\Lang{$\alpha$-abs-dnf}$, i.\,e., the version
in which the sought solution size $\alpha$ is constant while the size
of the clauses is unbounded. We do not see how our techniques can be
used for this version, as our algorithms for the unbalanced subgraph
problem rely on hyperedges of bounded size.

Using a collection of additional reduction rules, we were able to
generalize the results from $\PLang[\alpha,d]{abs-dnf}$ to
$\PLang[\alpha,d]{abs-io}$, which tries to optimize the absolute value
of the target function of a restricted integer optimization problem.

An interesting line of further research could be to study the
\emph{minimization} version of the problems presented within this
paper. Usually, minimization and maximization problems have similar
complexity, as one can perform some easy modifications such as
multiplying all weights with $-1$. This is, however, not the case if
we optimize the absolute value of the target function, as the
following observation illustrates: Let
$\Lang{$d$-min-abs-monotone-dnf}$ be defined as
$\Lang{$d$-abs-monotone-dnf}$, but with $\geq\alpha$ being replaced by
$\leq\alpha$, then:

\begin{observation}
  $\Lang{independent-set}\leq\Lang{$d$-min-abs-monotone-dnf}$
\end{observation}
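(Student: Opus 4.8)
The plan is to give a polynomial-time many-one reduction from $\Lang{independent-set}$, in the spirit of the gadget of Lemma~\ref{lemma:max-dnf} but equipped with one always-satisfied clause that serves as a fixed negative offset. Given a graph $G=(V,E)$ with $|V|=n$ and a number $k\in\mathbb{N}$, I would construct a monotone $\textsc{dnf}$ $\phi$ over the variables $\{\,x_v\mid v\in V\,\}$ together with a weight function $w$ as follows: $\phi$ contains the empty clause (an empty conjunction, which is satisfied by every assignment) with weight $-k$, a clause $(x_v)$ with weight $+1$ for every $v\in V$, and a clause $(x_v\wedge x_w)$ with weight $-n$ for every $\{v,w\}\in E$. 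Every clause has at most two (and only positive) literals, so for any $d\ge 2$ this is an instance of $\Lang{$d$-min-abs-monotone-dnf}$; the target is set to $\alpha=0$. The central observation to record is that for an assignment $\beta$, writing $S=\{\,v\mid\beta(x_v)=\mathit{true}\,\}$ and letting $t$ denote the number of edges of $G$ with both endpoints in $S$, the weighted sum of the clauses satisfied by $\beta$ equals exactly $-k+|S|-n\cdot t$.

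Correctness then follows from a short analysis of this quantity. For the forward direction I would use that $\Lang{independent-set}$ is \emph{monotone}: $G$ has an independent set of size $\ge k$ iff it has one, say $S$, of size exactly $k$; the assignment with exactly $S$ set to $\mathit{true}$ gives $t=0$ and weighted sum $-k+k-0=0$, so its absolute value is $0\le\alpha$. For the converse, suppose some assignment achieves absolute value $\le 0$; then its weighted sum is exactly $0$, i.e.\ $|S|-n\cdot t=k$. Since $0\le|S|\le n$, having even a single induced edge ($t\ge 1$) would force $|S|-n\cdot t\le n-n=0<k$ as long as $k\ge 1$ (and the case $k=0$ is trivial for both problems). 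Hence $t=0$, so $S$ is independent, and then $|S|=k$, which gives an independent set of size $\ge k$. The reduction is clearly computable in polynomial time, so the argument is complete.

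I expect the only genuine design point~--~and therefore the ``obstacle''~--~to be the always-satisfied weight-$(-k)$ clause. Without such an unavoidable offset the all-$\mathit{false}$ assignment already yields weighted sum $0$, rendering the minimization instance a trivial yes-instance; the empty clause is exactly what forces ``absolute value $\le 0$'' to be equivalent to ``independent set of size \emph{exactly} $k$'' while keeping the formula monotone and of clause-size two. Everything else is a routine case distinction, and I would remark in passing that the very same construction shows $\PLang[\alpha]{$d$-min-abs-monotone-dnf}$ to be $\Para\Class{NP}$-hard, since already the slice $\alpha=0$ is $\Class{NP}$-hard (mirroring Corollary~\ref{corollary:exact-absdnf-para-np}).
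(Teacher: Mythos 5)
Your reduction is correct and is essentially the paper's: the Lemma~\ref{lemma:max-dnf} gadget augmented with an always-satisfied empty clause of weight $-k$ and target $\alpha=0$. The only (harmless) deviation is that you give edge clauses weight $-n$ rather than $-1$, which makes the backward direction pure arithmetic instead of reusing the variable-flipping argument from Lemma~\ref{lemma:max-dnf}.
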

\begin{proof}[Sketch of Proof.]
  We use the reduction from Lemma~\ref{lemma:max-dnf} to reduce
  \Lang{independent-set} to \Lang{$d$-max-monotone-dnf}. Hence, we
  obtain weighted formula $(\phi,w)$ and seek an assignment of weight
  at least $\geq\alpha$ (without absolute values). Note that if such
  an assignment exists, then there is also one with weight $=\alpha$
  as the independent set problem is monotone. We continue by adding
  the empty clause (which is always true in a \textsc{dnf}) and set
  its weight to~$-\alpha$. Finally, we seek an solution with absolute
  value $\leq 0$.
\end{proof}

Note that this reduction works for constant $d$ and produces an
instance with $\alpha=0$, i.\,e., a parameterization by $\alpha$ and
$d$ alone is not enough. However, restricting, for instance, the
weights may yield tractable subproblems that should be explored
further.

\clearpage
\bibliography{main}
\end{document}